\documentclass[11pt]{article}
\usepackage{amsmath,amssymb,latexsym}
\usepackage{amsthm}
\usepackage{graphicx}
\usepackage{enumerate}
\usepackage[T1]{fontenc}
\usepackage{rotating,amssymb,amsmath,amsfonts,delarray}
\usepackage[ruled,vlined]{algorithm2e}

\usepackage{color}
\usepackage{amsfonts}
\usepackage{hyperref}
\usepackage{breakcites}
\usepackage{booktabs}
\usepackage{tikz}
\usepackage{pgfplots}
\usepackage{todonotes}
\pgfplotsset{compat=1.17}
\usepackage{bbm}
\newcommand{\id}{\mathbbm{1}}
\newcommand{\eqd}{\,{\buildrel \mathrm{d} \over =}\,}

\usepackage{float}
\usepackage{comment}

\usetikzlibrary[arrows.meta,bending]
\usetikzlibrary{positioning}
\usetikzlibrary{patterns}
\usepackage{subcaption}
\usepackage{multirow}

\hypersetup{
  colorlinks=true,
  citecolor=blue,
  linkcolor=blue,
  filecolor=magenta,      
  urlcolor=cyan,
}

\definecolor{colEM}{RGB}{39 217 125}
\definecolor{colHC}{RGB}{140 217 39}
\definecolor{colCBW}{RGB}{217 38 48}

\definecolor{DarkBlue}{RGB}{0,0,139}
\definecolor{LightBlue}{RGB}{163,206,230}
\definecolor{Teal}{RGB}{0,128,128}
\definecolor{LightGreen}{RGB}{144,238,144}
\definecolor{Yellow}{RGB}{255,255,0}
\definecolor{BrightOrange}{RGB}{255,165,0}

\newtheorem{definition}{Definition}
\newtheorem{lemma}{Lemma}
\newtheorem{proposition}{Proposition}
\usepackage{thmtools}
\newtheorem{example}{Example}
\newtheorem{assumption}{Assumption}
\newtheorem{theorem}{Theorem}
\newtheorem{corollary}{Corollary}
\newtheorem{remark}{Remark}

\newcommand{\diff}{\,\mathrm{d}}

\newcommand{\Smm}{S^{\left(\bigtriangledown, \bigtriangledown\right)}}
\newcommand{\Smp}{S^{\left(\bigtriangledown, \bigtriangleup\right)}}
\newcommand{\Sindep}{S^{\left(\perp, \perp\right)}}
\newcommand{\Sindepp}{S^{\left(\perp, \bigtriangleup\right)}}
\newcommand{\Sindepm}{S^{\left(\perp, \bigtriangledown\right)}}
\newcommand{\Spp}{S^{\left(\bigtriangleup, \bigtriangleup\right)}}

\newcommand{\Smpt}{\widehat{S}^{\left(\bigtriangledown, \bigtriangleup\right)}}
\newcommand{\Sindept}{\widehat{S}^{\left(\perp, \perp\right)}}
\newcommand{\Sppt}{\widehat{S}^{\left(\bigtriangleup, \bigtriangleup\right)}}

\makeatletter
\newcommand\newtag[2]{#1\def\@currentlabel{#1}\label{#2}}
\makeatother

\usepackage{setspace}
\usepackage[margin=1.00in]{geometry}

\title{Collective risk models with FGM dependence}

\usepackage{authblk}
\author[1]{Christopher Blier-Wong\thanks{Corresponding author, \href{mailto:christopher.blierwong@utoronto.ca}{christopher.blierwong@utoronto.ca}}}
\author[2]{Hélène Cossette}
\author[2]{Etienne Marceau}

\affil[1]{Department of Statistical Sciences, University of Toronto, Canada}
\affil[2]{École d'actuariat, Université Laval, Canada}

\date{\today}

\begin{document}

\maketitle
\begin{abstract}
We study copula-based collective risk models when the dependence structure is defined by a Farlie-Gumbel-Morgenstern (FGM) copula. By leveraging a one-to-one correspondence between the class of FGM copulas and multivariate symmetric Bernoulli distributions, we find convenient representations for the moments and Laplace-Stieltjes transform for the aggregate random variable defined from collective risk models with FGM dependence. We examine different components of this collective risk model, aiming to better understand the impact of the assumed dependence between a claim's frequency and severity. Relying on stochastic ordering, we analyze the impact of dependence on the aggregate claim amount random variable. Even if the FGM copula may only induce moderate dependence, we illustrate through numerical examples that the cumulative effect of FGM dependence can lead to substantial variations in key risk measures on aggregate random variables defined from collective risk models. 

\textbf{Keywords: 
FGM copulas, 
Mixed Erlang distribution,
Multivariate symmetric Bernoulli distributions, 
Random sum, 
Stochastic representation
}

\end{abstract}

\section{Introduction}\label{sec:introduction}

In actuarial science, we use a collective risk model (CRM) to define the aggregate claim amount of a portfolio or a non-life insurance contract as the sum of a random number of random variables (rv)
\begin{equation}\label{eq:DefS}
	S = \begin{cases}
		\sum_{j = 1}^{N}X_j, & N > 0\\
		0, & N = 0
	\end{cases},
\end{equation}
The rv $N$ is the claim count (frequency) rv, and $\{X_j\}_{j \geq 1}$ is the sequence of claim amount rvs. 

For this paper, we consider the aggregate claim amount rv $S$ defined in \eqref{eq:DefS} under the alternative representation
\begin{equation}\label{eq:DefSNo2}
	S = \sum_{j=1}^{\infty}X_{j}\times \mathbbm{1}_{\left\{N\geq j\right\}},
\end{equation}
which is used notably in the proof of Theorem 3.4.5, on page 159 of \cite{muller2002comparison}. 

While both representations in \eqref{eq:DefS} and \eqref{eq:DefSNo2} of $S$ lead to the same aggregate claim amount rv, they have different interpretations that may influence how one thinks of dependence within CRMs. In \eqref{eq:DefS}, the reader is inclined to interpret CRMs as observing the count $N$ first, then observing a vector of $N$ claim amounts conditionally on the value taken by $N$. Intuitively, the dependence structure in the latter representation is typically hierarchical; $N$ affects the claim amounts, but the claim amounts may not impact the claim number.

In \eqref{eq:DefSNo2}, 
we consider an infinite sequence of claim amount rvs and express $S$ as the sum of the first $N$ terms.
There is no longer a hierarchical interpretation since one studies the entire CRM simultaneously, such that the claim count affects claim amounts and, in turn, claim amounts affect the claim count, potentially leading to a more intricate form of dependence that may not be captured using the representation in \eqref{eq:DefS}.

The structural foundation of the \textit{classical} CRM is based on three assumptions.
\begin{assumption}
\label{ass:CRMClassical}
The three assumptions of the classical CRM are:
    \begin{enumerate}[(a)]
    \item\label{ass:SuiteID} $\{X_j\}_{j \geq 1}$ forms a sequence of identically distributed strictly positive rvs.
    \item\label{ass:SuiteIndep} $\{X_j\}_{j \geq 1}$ forms a sequence of independent rvs.
    \item\label{ass:SuiteEtFreqIndep} $\{X_j\}_{j \geq 1}$ and claim number rv $N$ are independent.
    \end{enumerate}    
\end{assumption}
Assumption \ref{ass:CRMClassical}\eqref{ass:SuiteID} means that the claim amounts have the same marginal distribution, Assumption \ref{ass:CRMClassical}\eqref{ass:SuiteIndep} implies the claim amounts are independent of each other, and Assumption \ref{ass:CRMClassical}\eqref{ass:SuiteEtFreqIndep} signifies that the amount of a claim does not depend on the number of claims $N$ and the number of claims does not depend on the claim amounts. Throughout the paper, we will denote the aggregate claim amount rv within the classical CRM as $\Sindep$. One notably derives the following simple expressions for the expectation, the variance, and the Laplace–Stieltjes transform (LST) of $\Sindep$:
\begin{align}
        E\left[\Sindep\right] 
        & = E[N] E[X];
        \label{eq:esperance-indep} \\
        Var\left(\Sindep\right)
        & 
        = E[N]Var(X) + Var(N) E[X]^2; \label{eq:var-indep} \\
        \mathcal{L}_{\Sindep}(t) 
        & = E\left[\mathrm{e}^{-t\Sindep}\right] = \mathcal{P}_N(\mathcal{L}_X(t)), 
        \quad t \geq 0, \notag
\end{align}
where $\mathcal{P}_N$ is the probability generating function (pgf) of $N$ and $\mathcal{L}_X$ is the LST of $X$. The expression in \eqref{eq:esperance-indep} dates back to \cite{wald1945some}, and it has been referred to as Wald's equation since then. Wald's equation holds under more general conditions than the classical CRM, and we will discuss this in the appendix. There are plenty of tools to identify the distribution of $\Sindep$ and compute risk measures for $\Sindep$; see, for example, \cite[Chapters 6 and 10]{panjer1992insurance}, \cite[Chapter 4]{rolski1999stochastic}, and \cite[Chapter 2]{wuthrich2022non}. Also, Assumption \ref{ass:CRMClassical} allows one to estimate the parameters for the claim counts and the claim amounts separately; see \cite{parodi2023pricing}, \cite{wuthrich2022non}, \cite{klugman2018loss}. Finally, as mentioned in \cite[Chapter 6]{parodi2023pricing}, the classical CRM is the scientific basis for non-life insurance pricing and risk modelling in reinsurance.

In practice, Assumptions \ref{ass:CRMClassical}\eqref{ass:SuiteIndep} and \ref{ass:CRMClassical}\eqref{ass:SuiteEtFreqIndep} are 
not always verified. For this reason, researchers have been interested in studying CRMs with full dependence, allowing for dependence relations between (i) the claim number and the claim amounts and (ii) each claim amount, which is the focus of the current paper. Keeping Assumption \ref{ass:CRMClassical}\eqref{ass:SuiteID}, let us denote by $\aleph(F_N, F_X) = \aleph$ the class of CRMs defined with the sequence $(N, \underline{X})$, whose components could be dependent. 

In the actuarial science literature, we count three common approaches to considering dependence between the components of a (single-period) CRM:
\begin{itemize}
    \item two-part models where the severity component depends on the frequency \cite{gschlossl2007spatial, erhardt2012modeling, shi2015dependent, garrido2016generalized};
	\item bivariate models where the dependence between the frequency random variable and the (average) claim size random variable is introduced through common mixtures \cite{belzunce2006variability} or copulas \cite{kramer2013total, czado2012mixed};
	\item CRMs with full dependence (see also micro-level CRMs) where one may specify a dependence structure on the entire sequence $(N, \underline{X})$. Under this approach, one typically assumes that the claim rvs are identically distributed. 
\end{itemize}

For details on the first two approaches, we refer the reader to \cite{frees2016multivariate} for a comprehensive review. The scope of the current paper is on the third approach; hence, we review its literature in more detail. A CRM with full dependence was first considered in \cite{liu2017collective} in the context of dependence uncertainty to identify the dependence structure for the sequence $(N, \underline{X}) \in \aleph$, which leads to the riskiest aggregate claim amount rv within the class of CRMs. The authors of \cite{cossette2019collective} study CRMs with full dependence directly, presenting a method to capture the dependence structure by specifying the joint cumulative distribution function (cdf) of the frequency rv and the sequence of claim rvs through a copula. Under the copula-based approach to construct CRMs, for any $k \in \mathbb{N}_1$, we have
\begin{equation}\label{eq:cdf-nx1x2x3}
	F_{N, X_1, \dots, X_k}\left(n, x_1, \dots, x_k\right) =C\left(F_N(n), F_X(x_1), \dots, F_X(x_k)\right),
	\text{ }\left( n,x_{1},...,x_{k}\right) \in \mathbb{N}_1 \times \mathbb{R}_{+}^{k},
\end{equation}
where $C$ is a $(k+1)$-variate copula. In \cite{shi2020regression}, the authors study copula-based CRMs with full dependence. However, they assume conditional independence of the claim amount rvs given the claim count such that they only require a single bivariate copula to construct the joint density function of $(N, X_1, \dots, X_k)$ for any $k \in \mathbb{N}_1$. Then, \cite{oh2021copulabased} examine other families of copulas, including the normal, student and vine copulas. Through their study of micro-level copulas, \cite{ahn2021copula} argue that the bivariate copula approach may be inappropriate in practice since (i) identifying a copula family that precisely captures the inherent dependence between the frequency random variable and the average severity has been challenging, (ii) the dependence structure of the estimated copula is not easily interpretable, and (iii) estimating copula parameters with mixed data presents significant computational challenges.
In contrast to the work of \cite{ahn2021copula, oh2021predictive, younahn2022simple}, our interest is not on estimating the parameters of the components of a given CRM on real data: we aim to find a proper dependent CRM construction allowing the investigation and comprehension of effects of dependence on these models (although we will briefly address estimation with observed data).

Empirical experiments for CRMs with dependence indicate significant evidence against Assumptions \ref{ass:CRMClassical}\eqref{ass:SuiteIndep} and \ref{ass:CRMClassical}\eqref{ass:SuiteEtFreqIndep}. In \cite[p.295]{czado2012mixed}, the authors notice that the copula correlation parameter between the frequency and the average claim size is $0.1366$. Also, the authors of \cite[p.218]{gschlossl2007spatial} observe that compared to a policyholder with one observed claim, the expected average claim size for an observation with two observed claims decreases by about $25~\%$. If three or four claims have been reported, the expected average claim size decreases by about $75~\%$ and $92~\%$, respectively. The authors of \cite[p.367]{shi2020regression} have found that the association parameter between the frequency and severity in the Gaussian copula is estimated to be $-0.278$ with a standard error of $0.022$. In \cite[p.18]{lee2022insurance}, the authors note that the claim size negatively correlates with the number of claims for all perils. 

The class of CRMs with dependence $\aleph$ being too complicated to study directly, we will introduce two assumptions on the dependence within the components of $(N, \underline{X})$. The first is on the pattern of the dependence, and the second is on the family of copulas used to construct a copula-based CRM. We will study these assumptions' impact separately and when considered together. 

The first assumption for CRMs with full dependence is common in the literature on CRMs and supposes that the claim amount rvs are exchangeable. We revise Assumption \ref{ass:CRMClassical} as follows.
\begin{assumption}
\label{ass:CRMwithDependence}
    For CRMs with full dependence and exchangeable claim amount rvs, we assume that for every $k \in \{2,3,\dots, \sup \{k : \gamma_N(k) > 0\}\}$, we have $(N,X_1,\dots,X_k) \eqd (N,X_{\pi(1)},\dots,X_{\pi(k)})$ for any permutation $\pi$ of $\{1,2,\dots,k\}$, where $\eqd$ means equality in distribution and $\gamma_N(k):=\Pr(N = k)$.
\end{assumption}
The class of CRMs under Assumption \ref{ass:CRMwithDependence} will be denoted by $\aleph^*$.
\begin{remark} \label{rem:Assumptions}
    Assumption \ref{ass:CRMwithDependence} implies that
    \begin{itemize}
        \item $\{X_j\}_{1 \leq j \leq \sup \{k : \gamma_N(k) > 0\}}$ forms a sequence of exchangeable rvs. \label{ass:SuiteExch}
        \item Letting $A_N$ denote the support of $N$, the following property holds:
        \begin{equation}
    \text{for every $n \in A_N$, the sequence $\{X_j\}_{j\geq 1}$ is conditionally exchangeable given $N = n$.}
    \label{ass:Exchangeability}
    \end{equation}
    \item While $\aleph^*$ contains useful CRMs for practical purposes, some useful dependence structures are lost. For instance, CRMs with an autoregressive dependence structure between the claim amounts are included in $\aleph$ but not in $\aleph^*$. 
    \end{itemize}
    
\end{remark}
A different assumption is that the dependence structure within the components of the CRM is a Farlie-Gumbel-Morgenstern (FGM) copula using the copula-based approach. Combining the properties of closed-form solutions to many problems and the flexibility of possible shapes of dependence makes the FGM copula a valuable tool in actuarial science and related fields since it lets one easily study the impact of altering the dependence structure in a model. We define the class of CRMs with full FGM dependence, denoted $\aleph^{FGM}$, as follows. 
\begin{assumption}
\label{ass:CRMwithFGM}
    The CRM with full FGM dependence assumes that for every $k \in \{1,2,\dots,\sup\{k: \gamma_N(k)>0\}\}$, the joint distribution of $(N,X_1,\dots,X_k)$ is defined with a $(k+1)$-variate FGM copula, that is, $C$ in \eqref{eq:cdf-nx1x2x3} is a FGM copula. 
\end{assumption}


In this paper, we make novel contributions to the classes $\aleph^*$ and $\aleph^{FGM}$. We also study in detail the class $\aleph^{FGM*} := \aleph^{*} \cap \aleph^{FGM}$. We provide, in Figure \ref{fig:structure}, a graphical overview of the contributions of each class of CRMs. In Section \ref{sect:Preliminaires}, we provide notation and results about the family of FGM copulas. 
Section \ref{sect:class-general} provides general results about CRMs with full dependence and exchangeable claim amount rvs, that is, $\aleph^*$. 
In Section \ref{sect:StochasticRepresCRM-FGM}, we introduce a key result, Theorem \ref{thm:stochastic-representation-s}, which lets one study the properties of CRMs within $\aleph^{FGM}$. 
In Section \ref{sect:ComponentsCRM}, we conduct a detailed analysis of the properties of the components of the CRMs with FGM dependence with exchangeable claim amount rvs; we then study aggregate claim amount rvs within these CRMs in Section \ref{sect:AggregateClaimCRM}. 
We provide new results on dependence properties of CRMs with FGM dependence in Section \ref{sec:dependence}.
Section \ref{sect:conclusion} reviews the findings from the current paper and the potential trails to explore. 

\begin{figure}[ht]
    \centering
    \begin{tikzpicture}[node distance=2cm]
	\node (A) at (0,0) [draw, rectangle, rounded corners, align = center, fill = blue!40] {\textbf{CRMs with full dependence ($\boldsymbol{\aleph}$)}};
	\node (B) at (-4,-2) [draw, rectangle, rounded corners, align = left, fill = red!40] {\textbf{Exchangeable claim amount rvs ($\boldsymbol{\aleph^{*}}$)} \\ 
		-- {\it Satisfies Assumption \ref{ass:CRMwithDependence}}\\
		-- General properties in Section \ref{sect:class-general}};
	\node (C) at (4, -2) [draw, rectangle, rounded corners, align = left, fill = yellow!40] {\textbf{FGM copula-based CRMs ($\boldsymbol{\aleph^{FGM}}$)} \\
		-- {\it Satisfies Assumption \ref{ass:CRMwithFGM}}\\
		-- Stochastic representation in Section \ref{sect:StochasticRepresCRM-FGM}\\
		-- Dependence properties in Section \ref{sec:dependence}
	};
	\node (D) at (0, -5)  [draw, rectangle, rounded corners, align = left, fill = orange!40] {$\boldsymbol{\aleph^{FGM*} = \aleph^{*} \cap \aleph^{FGM}}$\\
		-- {\it Satisfies Assumptions \ref{ass:CRMwithDependence} and \ref{ass:CRMwithFGM}}\\
		-- Properties of components in Section \ref{sect:ComponentsCRM}\\
		-- Aggregate claim amount rv $S$ in Section \ref{sect:AggregateClaimCRM}
	};
	\draw[-{Latex[length=2mm,width=2mm]}, line width=0.2mm] (A) -- (B);
	\draw[-{Latex[length=2mm,width=2mm]}, line width=0.2mm] (A) -- (C);
	\draw[-{Latex[length=2mm,width=2mm]}, line width=0.2mm] (B) -- (D);
        \draw[-{Latex[length=2mm,width=2mm]}, line width=0.2mm] (C) -- (D);
\end{tikzpicture}
    \caption{Graphical overview of the contributions in this paper.}
    \label{fig:structure}
\end{figure}



\section{Preliminaries}
\label{sect:Preliminaires}

In this section, we will provide some notation and some required background on FGM copulas and collective risk models. 

\subsection{Notation}

Denote the set of strictly positive integers $\{1, 2, 3, \dots\}$ as $\mathbb{N}_1$ and $\mathbb{N}_1 \cup \{0\}$ as $\mathbb{N}_0$. We use the notation $\underline{Z} = \{Z_i\}_{i \geq 1}$ for a sequence of identically distributed rvs with $Z_i \,{\buildrel \mathrm{d} \over =}\, Z$, for $i \in \mathbb{N}_1$. 
Further, we use the notation $\boldsymbol{Z} = (Z_1, \dots, Z_d)$ for a $d$-dimensional vector of rvs. For two independent and identically distributed (iid) rvs $Z_1$ and $Z_2$, we denote $Z_{[1]} = \min(Z_1, Z_2)$, $Z_{[2]} = \max(Z_1, Z_2)$ and $\mu_{Z_{[j]}}^{(m)} := E\left[Z_{[j]}^{m}\right]$ for $j \in \{1, 2\}$ and $m \in \mathbb{N}_1$.

In this paper, the rv $I$ denotes a symmetric Bernoulli rv with cdf $F_I(x) = 0.5 \times \mathbbm{1}_{[0, \infty)}(x) + 0.5 \times \mathbbm{1}_{[1, \infty)}(x), x \geq 0$, where $\mathbbm{1}_{A}(x) = 1$, if $x \in A$ and 0 otherwise. For a random vector $\boldsymbol{I}$, denote its probability mass function (pmf)  by $f_{\boldsymbol{I}}(i_1, \dots, i_d)$ for $(i_1, \dots, i_d) \in \{0, 1\}^d$. Operations on vectors such as $\mathbf{x} < \mathbf{y}$, $\mathbf{x} \mathbf{y}$ and $\mathbf{x} + \mathbf{y}$ are meant component-wise. 

For a positive discrete rv $N$, we note the cdf as $F_{N}(n) = \Pr(N \leq n)$, the pmf as $\gamma_N(n) := \Pr(N = n)$, for $n \in \mathbb{N}_0$, and the pgf as $\mathcal{P}_{N}(t) = E[t^N] = \sum_{n = 0}^{\infty} t^n \gamma_N(n)$, for $t \in [-1, 1]$. Let $A_N$ denote the support of $N$, that is, $A_N = \{k: \gamma_N(k) > 0\}$. For a positive claim amount rv $X$, we denote its cdf and survival function, respectively, as $F_{X}(x)$ and $\overline{F}_{X}(x)$, for $x > 0$ (for notational convenience, we ignore claim amount distributions with strictly positive probability masses at $0$). We denote the LST of $X$ as $\mathcal{L}_{X}(t) = E[\exp\{-tX\}]$, for $t \geq 0$.

We will use notions of stochastic orders throughout this paper. Denote the usual stochastic order by $\preceq_{st}$, the dispersive order by $\preceq_{disp}$, the (increasing) convex order by $\preceq_{cx} (\preceq_{icx})$ and the supermodular order by $\preceq_{sm}$. We include relevant definitions and properties in the appendix to make this paper self-contained. 

\subsection{FGM copulas}\label{ss:fgm}

We now provide a few general results surrounding the family of FGM copulas, including its classical formulation and a recent representation, enabling us to develop this paper's main results. The expression of a $d$-variate FGM copula is 
\begin{equation} \label{eq:fgm-copula}
	C\left(u_1,\dots,u_d\right) =\prod_{m=1}^d u_m \left( 1+\sum_{k=2}^{d}\sum_{1\leq j_{1}<\cdots <j_{k}\leq d}\theta_{j_{1}\ldots j_{k}}\overline{u}_{j_{1}}\overline{u}_{j_{2}}\ldots \overline{u}_{j_{k}}\right), \quad \boldsymbol{u}\in [0,1]^{d},
\end{equation}
where $\overline{u}_{j}=1-{u}_{j}$, $j \in \{1,\dots,d\}$. In \cite{cambanis1977properties}, the author shows that \eqref{eq:fgm-copula} is a copula if the parameters belong to the set
\begin{equation}\label{eq:constraints-general}
	\mathcal{T}_d = \left\{(\theta_{12}, \dots, \theta_{1\dots d}) \in \mathbb{R}^{2^d - d - 1} : 1+\sum_{k=2}^{d}\sum_{1\leq j_{1}<\dots <j_{k}\leq d}\theta_{j_{1}\dots j_{k}}\varepsilon _{j_{1}}\varepsilon _{j_{2}}\dots
	\varepsilon_{j_{k}}\geq 0\right\},  
\end{equation}
for $\{\varepsilon_{j_{1}},\varepsilon _{j_{2}},\dots, \varepsilon_{j_{k}}\} \in \{-1,1\}^d$. 
The class $\mathcal{C}_d^{FGM}$ of $d$-variate FGM copulas has $2^d - d - 1$ copula parameters, and the set $\mathcal{T}_d$ contains $2^d$ constraints. Due to the complexity of $\mathcal{T}_d$ and the exponential number of parameters, FGM copulas are scarcely used in high-dimensional copula modelling. Within the context of CRMs with full dependence, it is important to easily admit high-dimensional dependence structures since the length of the sequence $(N, \underline{X})$ can be countably infinite when $\sup\{k: \gamma_N(k)>0\} = \infty$. We will introduce alternative representations of the FGM copula to circumvent the complicated constraints on parameter admissibility in high dimensions. 

In this paper, we study a special class of CRMs when the dependence structure underlying the joint distribution of the components of $(N, \underline{X})$ is a FGM copula, as stated in Assumption \ref{ass:CRMwithFGM}.
A significant advantage of FGM copulas is that being quadratic in each marginal, one often obtains closed-form solutions to problems of interest in actuarial science and applied probability, including computing correlation coefficients or risk measures. See, for instance, various results in \cite{genest2007everything} or applications in actuarial science \cite{barges2009tvarbased, woo2013note}. Some criticize using FGM copulas in risk management because of their restrictive dependence scope. Indeed, the range of Spearman's rho for bivariate FGM copulas is limited to $[-1/3,1/3]$. However, real-data insurance studies show moderate dependence between claim counts and claim amounts; the same goes for the dependence between different claim amounts.
The estimated Spearman correlation coefficients for the case studies presented in the introduction fall within the admissible range for FGM copulas. Those observations suggest that CRMs constructed with FGM copulas may be appropriate for real-life applications. Despite the FGM copula permitting only moderate dependence, we will show through examples that the cumulative impact of this dependence can result in substantial variations in risk measures within the class of CRMs constructed with FGM copulas.

One property of FGM copulas is that they are very flexible and may capture negative dependence. Indeed, a $d$-variate FGM copula may control the correlation between each $k$-tuple of the random vector for $k \in \{2, \dots, d\}$. However, in the classical formulation of the FGM copula, one needs to specify each of the $2^d - d - 1$ dependence parameters while satisfying the $2^d$ constraints in \eqref{eq:constraints-general}, which can be tedious in practice. Further, one does not have a straightforward way to understand and interpret the global dependence structure by looking at the individual dependence parameters. 

The parametrization and interpretation issue has recently been resolved in \cite{blier2021stochastic, blier2023risk}, where the authors present a one-to-one correspondence between the class of $d$-variate  symmetric Bernoulli distributions and the class of $d$-variate FGM copulas. In particular, we have from Theorem 3.2 and Corollary 3.3 of \cite{blier2021stochastic} that if $C$ is an FGM copula with a set of dependence parameters $\boldsymbol{\theta}$, then $C$ is equivalently given by
$$C\left(\boldsymbol{u}\right) 
		=\sum_{\boldsymbol{i}\in \{0,1\}^d}f_{\boldsymbol{I}}(\boldsymbol{i}) \prod\limits_{m=1}^{d} u_m  \left( 1+(-1)^{i_{m}} \overline{u}_m\right), \quad \boldsymbol{u} \in [0,1]^{d},$$
  where
\begin{equation}\label{eq:theta-to-fi}
    f_{\boldsymbol{I}}(\boldsymbol{i}) = \frac{1}{2^d} \left(1 + \sum_{k=2}^{d}\sum_{1\leq j_{1}<\cdots <j_{k}\leq d} \left( -1\right) ^{i_{j_{1}}+\cdots+i_{j_{k}}} \theta_{j_{1}\ldots j_{k}}  \right), \quad \boldsymbol{i} \in \{0,1\}^d.
\end{equation}
The random vector $\boldsymbol{I}$ belongs to the class of multivariate symmetric Bernoulli random variables. In this case, the dependence structure of $\boldsymbol{U}$ is entirely determined by the dependence structure of $\boldsymbol{I}$, and one obtains a better understanding of the shape of dependence for a set of FGM copula parameters by examining the distribution of the underlying symmetric Bernoulli random vector. Further, one may construct subfamilies of FGM copulas in high dimensions, with corresponding copula parameters that will always satisfy 
the tedious constraints in \eqref{eq:constraints-general}.

\subsection{Collective risk models with dependence}
\label{sect:class-general}



Let us recall a few properties of CRMs within $\aleph^*$, that is, CRMs satisfying Assumption \ref{ass:CRMwithDependence}. 
The conditional exchangeability property from \eqref{ass:Exchangeability} implies that 
(i) the elements of the sequence $\{X_i\}_{i \geq 1}$ share the same distribution $F_X$; 
(ii) for any $n \in A_N$, the elements of the sequence $\{X_i \vert N = n\}_{1 \leq i \leq n}$ share the same distribution $F_{X \vert N = n}$. 
Therefore, throughout the paper, we will refer to arbitrary claim amounts as $X$ and arbitrary claim amounts given a claim number $n \in A_N$ as $X \vert N = n$, dropping the index. 

Conditioning on the claim number rv $N$ and from \eqref{ass:Exchangeability}, the expectation of $S$ is given by
\begin{align}
    E[S]  & = \sum_{n = 1}^{\infty} \gamma_N(n) \times  nE[X \vert N = n] = E[N E[X \vert N]], \label{eq:moment-s-general}
\end{align}
assuming that the expectations exist.



Using the law of total variance with \eqref{ass:Exchangeability}, we can represent the variance of the aggregate claim amount rv $S$ as the sum of three components;
\begin{equation}\label{eq:var-general}
    Var(S) = \underbrace{E[NVar(X_1 \vert N)] 
            + E[N(N-1)Cov(X_1,X_2 \vert N)]}_{E[Var(S \vert N)]} 
            + \underbrace{Var(N E[X_1 \vert N])}_{Var( E[S \vert N])},    
\end{equation}
assuming that the expectations exist.
In \eqref{eq:var-general}, 
the first component $C_{EVar}^{Var}(S) := E[NVar(X_1 \vert N)]$ accounts for the variability of a claim amount rv conditionally on the rv $N$, the second component $C_{ECov}^{Var}(S) := E[N(N-1)Cov(X_1, X_2 \vert N)]$ reflects the impact of the covariance between two claim amount rvs conditionally on the rv $N$, and the third component $C_{VarE}^{Var}(S) :
= Var(N E[X_1 \vert N])$ captures the interplay of the variability of $N$ and the expectation of the claim amount rv conditionally on the rv $N$. In the setting of the model proposed by \cite{shi2020regression}, where it is assumed that the claim amounts are conditionally independent, the component $C_{ECov}^{Var}(S)$ is equal to $0$. We will later see that $C_{ECov}^{Var}(S)$ can be large compared with the two other components. See Theorem \ref{thm:var-s}
for a more detailed analysis of \eqref{eq:var-general} within the specific family of CRMs with full dependence.

Evoking again \eqref{ass:Exchangeability}, the LST of $S$ is
\begin{equation}\label{eq:lst-s}
	\mathcal{L}_S(t) 
 = 
    \gamma_N(0) + \sum_{n = 1}^{\infty} \gamma_N(n) E[\mathrm{e}^{-t (X_1+\dots+X_n)} \mid N = n], \quad t \geq 0.
\end{equation}

Throughout this paper, we will use the expected value and the variance of $S$ to interpret the effect of dependence on the magnitude and variability of the aggregate random variable within collective risk models. On the other hand, the LST of $S$ will help us identify the distribution of $S$ and implement a numerical procedure to compute the pmf of $S$ when the claim amount rv $X$ has a positive discrete distribution.

\section{Stochastic representation in CRMs with FGM dependence}\label{sect:StochasticRepresCRM-FGM}

Following Assumption \ref{ass:CRMwithFGM}, recall that for any $k \in \mathbb{N}_1$, the joint cdf of $(N, X_1, \dots, X_k)$ is defined by
\begin{equation}\label{eq:c-modele-fgm}
	F_{N, X_1, \dots, X_k}\left(n, x_1, \dots, x_k\right) =C\left(F_N(n), F_X(x_1), \dots, F_X(x_k)\right),
	\text{ }\left( n,x_{1},...,x_{k}\right) \in \mathbb{N}_1 \times \mathbb{R}_{+}^{k},
\end{equation}
where $C$ is a $(k+1)$-variate FGM copula. Rapidly, one finds the latter construction inconvenient to derive results on aggregate claim amount rvs $S$ within collective risk models, so an alternative representation is required. 

\subsection{Main theorem}

In the following theorem, we state the main result of this paper, which provides a stochastic representation for the CRMs within $\aleph^{FGM}$. 
This stochastic representation allows us to consider any family of FGM dependence, and it is the stepping stone for all results derived in the remaining sections of this paper. 

\begin{theorem}
\label{thm:stochastic-representation-s}
Consider a CRM $(N, \underline{X}) \in \aleph^{FGM}$. 
\begin{enumerate}
	\item Then, for every $k \in \mathbb{N}_1$, the random vector $(N, X_1, \dots, X_k)$, whose joint cdf is defined in \eqref{eq:c-modele-fgm}, admits the representation
    \begin{equation} \label{eq:RepresentationNo1}
        (N, X_1, \dots, X_k) \,{\buildrel \mathrm{d} \over =}\, 
        \left((1-I_0)N_{[1]} + I_0 N_{[2]}, 
        (1-I_1)X_{[1], 1} + I_1X_{[2], 1}, 
        \dots, 
        (1-I_k)X_{[1], k} + I_kX_{[2], k}\right).
    \end{equation} 
	\item Let $S$ be the aggregate claim amount rv defined within the CRM $(N, \underline{X})$. The rv $S$ admits the representation
	\begin{equation}\label{eq:stochastic-s-fgm}
		S \,{\buildrel \mathrm{d} \over =}\, 
        \sum_{j = 1}^{\infty} \left((1-I_j)X_{[1], j} 
        + I_j X_{[2], j}\right) \times 
        \mathbbm{1}_{\{(1-I_0)N_{[1]} + I_0 N_{[2]} \geq j\}},
	\end{equation}
	or, equivalently,
		\begin{equation}\label{eq:stochastic-s-fgm2}
			S \,{\buildrel \mathrm{d} \over =}\, 
            \begin{cases}
				\sum_{j = 1}^{(1 - I_0)N_{[1]} + I_0 N_{[2]}} \left((1 - I_j)X_{[1], j} + I_j X_{[2], j}\right), 
                & (1 - I_0)N_{[1]} + I_0 N_{[2]} > 0 \\
				0, 
                & (1 - I_0)N_{[1]} + I_0 N_{[2]} = 0
					\end{cases}.
			\end{equation}
\end{enumerate}
\end{theorem}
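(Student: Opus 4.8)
The plan is to derive both representations from the Bernoulli decomposition of the FGM copula recalled in Section~\ref{ss:fgm}, the key observation being that the two building blocks of that decomposition are exactly the distribution functions of the minimum and the maximum of two i.i.d.\ copies of a margin.

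First I would prove part~1 by a direct computation on the joint cdf. Applying the Bernoulli representation of the copula $C$ to \eqref{eq:c-modele-fgm} with $d=k+1$ and arguments $F_N(n),F_X(x_1),\dots,F_X(x_k)$, the joint cdf becomes a mixture, with weights $f_{\boldsymbol I}(\boldsymbol i)$, of products of factors of the form $u\bigl(1+(-1)^{i}\overline u\bigr)$. The crucial algebraic identity is that this factor equals $u(2-u)=1-(1-u)^2$ when $i=0$ and $u^2$ when $i=1$. Recognising $1-(1-F_N(n))^2$ as the cdf of $N_{[1]}=\min(N_1,N_2)$ and $F_N(n)^2$ as the cdf of $N_{[2]}=\max(N_1,N_2)$ for i.i.d.\ $N_1,N_2\sim F_N$, and likewise for each claim margin, each term in the mixture is the cdf of a vector of \emph{mutually independent} order statistics---min or max according to whether the corresponding $i_m$ is $0$ or $1$. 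This is precisely the conditional-on-$\boldsymbol I$ law of the right-hand side of \eqref{eq:RepresentationNo1}, so averaging over $\boldsymbol I$ with pmf $f_{\boldsymbol I}$ yields the claimed equality in distribution.

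For part~2, I would substitute the representation \eqref{eq:RepresentationNo1} into the sum-form \eqref{eq:DefSNo2} of $S$. Because the indicator $\id_{\{N\ge j\}}$ truncates the series at the realised value of the count, the law of $S$ is determined by the finite-dimensional laws of $(N,X_1,\dots,X_k)$ already handled in part~1; conditioning on the transformed count $M:=(1-I_0)N_{[1]}+I_0 N_{[2]}$ reduces the infinite sum to a finite one and matches $\sum_{j=1}^{M}Y_j$ with $Y_j:=(1-I_j)X_{[1],j}+I_j X_{[2],j}$. The equivalence of \eqref{eq:stochastic-s-fgm} and \eqref{eq:stochastic-s-fgm2} is then just the equivalence between the two representations \eqref{eq:DefSNo2} and \eqref{eq:DefS} of a random sum, now applied to the count $M$ and the summands $Y_j$.

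The main obstacle I anticipate is not the finite-dimensional identity---once the min/max reading of the building blocks is in hand, part~1 is a routine verification---but the passage to the countably infinite sequence needed to state \eqref{eq:stochastic-s-fgm} as a single distributional identity. One must check that the finite FGM / symmetric-Bernoulli families are consistent across $k$, so that a single infinite sequence $(I_0,I_1,I_2,\dots)$, together with the mutually independent pairs $(N_{[1]},N_{[2]})$ and $\{(X_{[1],j},X_{[2],j})\}_{j\ge 1}$, simultaneously reproduces every margin $(N,X_1,\dots,X_k)$. Since $S$ depends on only finitely many summands per realisation, I expect this can be dispatched by conditioning on $N$ (equivalently on $M$) rather than invoking the full Kolmogorov extension, keeping the argument elementary.
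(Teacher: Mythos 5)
Your proposal is correct and takes essentially the same route as the paper: part 1 rests on the Bernoulli-mixture/order-statistic reading of the FGM copula (which the paper obtains by citing Lemma 2 of \cite{blier2023risk} for the uniforms and composing with quantile transforms, and which you re-derive directly on the joint cdf via the identities $u(2-u)=1-(1-u)^2$ and $u^2$), and part 2 substitutes that representation into \eqref{eq:DefS} and \eqref{eq:DefSNo2}, handling the extension to a single infinite sequence $\{I_j\}_{j\geq 0}$ — a step the paper also treats briefly and which you flag somewhat more explicitly.
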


\begin{proof}

Let us start with the first statement. For any $k\in \mathbb{N}_1$, let $\boldsymbol{U}$ be a $(k+1)$-variate random vector of uniform rvs such that $F_{\boldsymbol{U}} = C$. Then, from \cite[Lemma 2]{blier2023risk}, the components of $\boldsymbol{U}$ admit the stochastic representation given by $
	\boldsymbol{U} \,{\buildrel \mathrm{d} \over =}\, (\boldsymbol{1}-\boldsymbol{I}) \boldsymbol{U}_{[1]}
    + \boldsymbol{I} \boldsymbol{U}_{[2]}.$
where the pmf of $\boldsymbol{I}$ is given in \eqref{eq:theta-to-fi} and $\boldsymbol{1}$ is a $(k+1)$-variate vector of 1's. For a rv $Y$, we have that $Y_{[j]} \,{\buildrel \mathrm{d} \over =}\, F_{Y}^{-1}(U_{[j]})$ for $j \in  \{1, 2\}$ (see, for instance, \cite{scheffe1945nonparametric}). It follows that $(N, X_1, \dots, X_N)$ admits the representation
\begin{equation}\label{eq:stochastic-nx}
	((1 - I_0)N_{[1]} + I_0N_{[2]}, (1 - I_1)X_{[1], 1} + I_1X_{[2], 1}, \dots, (1 - I_k)X_{[1], k} + I_kX_{[2], k}).
\end{equation}

We now show part 2. Notice that since \eqref{eq:stochastic-nx} holds for every $k \in \mathbb{N}_1$, there exists a sequence of symmetric Bernoulli rvs $\{I_j\}_{\{j \geq 0\}}$ such that \eqref{eq:stochastic-nx} can be extended to $(N, \underline{X}) \in \aleph^{FGM}$. Then, \eqref{eq:stochastic-s-fgm} follows by substituting for every $k\in\mathbb{N}_1$ the representation from \eqref{eq:stochastic-nx} into \eqref{eq:DefS} and \eqref{eq:DefSNo2}.

\end{proof}

The stochastic representations in \eqref{eq:stochastic-s-fgm} and \eqref{eq:stochastic-s-fgm2} are analogues of \eqref{eq:DefS} or \eqref{eq:DefSNo2} respectively, but it has the great advantage of revealing the underlying dependence structure specific to the CRMs with FGM dependence. 
As noted in Section \ref{ss:fgm}, the dependence structure of $\boldsymbol{I}$ determines the dependence structure of the FGM copula. From the one-to-one correspondence between the family of FGM copulas and the family of multivariate symmetric Bernoulli distributions, it is equivalent to construct a sequence $(N, \underline{X}) \in \aleph^{FGM}$ from its cdf in \eqref{eq:cdf-nx1x2x3} or by defining a sequence $\{I_j\}_{j\geq 0}$ and using the representation in \eqref{eq:RepresentationNo1}.
In the remainder of the paper, we show the usefulness of Theorem \ref{thm:stochastic-representation-s}.

\subsection{Two special cases}

Let us introduce two dependence structures for $\{I_j\}_{j\geq 0}$ that will be useful throughout this paper. 


\begin{example}\label{ex:I-max-dep}
	Let $\{I_j\}_{j\geq 0}$ be a sequence of comonotonic rvs, that is, $I_j = I$ for all $j \geq 0$. It implies that, for every $k \in \mathbb{N}_1$, the value of the pmf of $(I_0,I_1,\dots,I_k)$ is non-zero in the two cases
	\begin{equation*}
		\Pr(I_0=0,I_1=0,\dots,I_k=0)
		=
		\Pr(I_0=1,I_1=1,\dots,I_k=1)
		=
		\frac{1}{2},
	\end{equation*}
	while it is equal to zero for all the other $2^{k+1}-2$ elements of the support of the distribution of $(I_0,I_1,\dots,I_k)$.
	For every $k \in \mathbb{N}_1$, the dependence structure of $(N,X_1, \dots, X_k)$ is the $(k+1)$-variate FGM copula called the extreme positive dependence (EPD) FGM copula in \cite{blier2021stochastic} and \cite{blier2023exchangeable}, with
	\begin{equation} \label{eq:epd}
		C\left(\boldsymbol{u}\right) = \prod_{j = 0}^{k} u_j \left(1 + \sum_{m = 1}^{\left\lfloor \frac{k+1}{2} \right\rfloor}\sum_{0\leq j_{1}<\cdots <j_{2m}\leq k} \overline{u}_{j_1}\cdots \overline{u}_{j_{2m}}\right), \quad \boldsymbol{u}=(u_0,u_1,\dots,u_k) \in [0,1]^{k+1},
	\end{equation}
	where $\lfloor y \rfloor$ is the floor function. See the two aforementioned references for a more detailed analysis of the EPD FGM copula. 
	We denote the aggregate claim amount rv within this CRM as $\Spp$.
	Rather than directly using the copula in \eqref{eq:epd} in the analysis of the corresponding aggregate claim amount rv $\Spp$, the representation in \eqref{eq:RepresentationNo1} 
	from Theorem \ref{thm:stochastic-representation-s} reveals that, for every $k \in \mathbb{N}_1$,  
	\begin{equation*}
		(N, X_1, \dots, X_k) \eqd
		\begin{cases}
			\left(N_{[1]}, X_{[1], 1}, \dots, X_{[1], k} \right),& I = 0 \\
			\left(N_{[2]}, X_{[2], 1}, \dots, X_{[2], k}\right),& I = 1
		\end{cases}.  
	\end{equation*}
	The expression in \eqref{eq:stochastic-s-fgm} in this specific case leads to the convenient expression
	\begin{equation} \label{eq:DefSexemple1}
		\Spp 
		\eqd
		\begin{cases}
			\sum_{j = 1}^{\infty} X_{[1], j}'\times \mathbbm{1}_{\{N_{[1]}'\geq j\}}, & I = 0 \\
			\sum_{j = 1}^{\infty} X_{[2], j}'\times \mathbbm{1}_{\{N_{[2]}'\geq j\}}, & I = 1
		\end{cases}.
	\end{equation}
	
\end{example}

For the CRM of the following example, we bring a slight but significant modification to the CRM with the EPD FGM copula.

\begin{example}\label{ex:I-max-dep-reverse}
	Let $\{I_j\}_{j\geq 0}$ be a sequence of rvs such that $I_0 = 1-I$ and $I_j = I$ for $j \in \mathbb{N}_1$. In other words, $\{I_j\}_{j \geq 1}$ forms a sequence of comonotonic rvs, while $I_0$ is countermonotonic with respect to each component of the sequence. 
	It implies that, for every $k \in \mathbb{N}_1$, the value of the pmf of $(I_0,I_1,\dots,I_k)$ is non-zero in two cases
	\begin{equation*}
		\Pr(I_0=1,I_1=0,\dots,I_k=0)
		=
		\Pr(I_0=0,I_1=1,\dots,I_k=1)
		=
		\frac{1}{2},
	\end{equation*}
	while it is equal to zero for all the other $2^{k+1}-2$ elements of the support of the distribution of $(I_0,I_1,\dots,I_k)$.
	For every $k \in \mathbb{N}_1$, the dependence structure of $(N,X_1, \dots, X_k)$ is the $(k+1)$~variate FGM copula  
	\begin{equation} \label{eq:epdno2}
		C\left(\boldsymbol{u}\right) = \prod_{j = 0}^{k} u_j 
		\left(1 
		-
		\sum_{m = 1}^{\left\lfloor \frac{k+1}{2} \right\rfloor}\sum_{1\leq j_{1}<\cdots <j_{2m-1}\leq k} \overline{u}_{0}\overline{u}_{j_1}\cdots \overline{u}_{j_{2m-1}}
		+
		\sum_{\substack{m = 1 \\ k \geq 2}}^{\left\lfloor \frac{k}{2} \right\rfloor} \sum_{1\leq j_{1}<\cdots <j_{2m}\leq k} \overline{u}_{j_1}\cdots \overline{u}_{j_{2m}}\right), 
	\end{equation} 
	for $\boldsymbol{u} \in [0,1]^{k+1}$.
	We denote the aggregate claim amount rv within this CRM as $\Smp$.
	Rather than directly using the copula in \eqref{eq:epdno2} in the analysis of the corresponding aggregate claim amount rv $\Smp$, the representation in \eqref{eq:RepresentationNo1} 
	from Theorem \ref{thm:stochastic-representation-s} reveals that, for every $k \in \mathbb{N}_1$, we have
	\begin{equation*}
		(N, X_1, \dots, X_k) \,{\buildrel \mathrm{d} \over =}\, 
		\begin{cases}
			\left(N_{[2]}, X_{[1], 1}, \dots, X_{[1], k} \right),& I = 0 \\
			\left(N_{[1]}, X_{[2], 1}, \dots, X_{[2], k} \right),& I = 1
		\end{cases}.
	\end{equation*}
	The expression in \eqref{eq:stochastic-s-fgm} of Theorem \ref{thm:stochastic-representation-s} becomes
	\begin{equation} \label{eq:DefSexemple2}
		\Smp 
		\,{\buildrel \mathrm{d} \over =}\,
		\begin{cases}
			\sum_{j = 1}^{\infty} X_{[1], j}'\times \mathbbm{1}_{\{N_{[2]}'\geq j\}}, & I = 0 \\
			\sum_{j = 1}^{\infty} X_{[2], j}'\times \mathbbm{1}_{\{N_{[1]}'\geq j\}}, & I = 1
		\end{cases}
		.
	\end{equation} 
\end{example}
In Example \ref{ex:I-max-dep}, the dependence structure between the claim frequency and severity is positive. On the other hand, the dependence structure in Example \ref{ex:I-max-dep-reverse} induces negative dependence between the claim frequency and severity. 
For every $k \in \{2,3,\dots\}$, the distributions of $(X_1,\dots,X_k)$ have the same dependence structure that corresponds to the $k-$variate EPD FGM copula. Notably, it implies that $\theta_{0k} = 1$ in the model described in Example \ref{ex:I-max-dep} while $\theta_{0k} = -1$ in the model described in Example \ref{ex:I-max-dep-reverse}, $k \in \mathbb{N}_1$ and $\theta_{k_1 k_2} = 1$ in the models of both Examples \ref{ex:I-max-dep} and \ref{ex:I-max-dep-reverse}, for $1 \leq k_1 < k_2 \leq \sup\{k : \gamma_N(k)> 0\}$.  
Based on the representation of $\Spp$ in \eqref{eq:DefSexemple1}, the aggregate claim amount rv is either a sum of a small number of claims of small amounts or a sum of a large number of claims of large amounts. 
Based on the representation of $\Smp$ in \eqref{eq:DefSexemple2}, the aggregate claim amount rv is either a sum of a large number of claims of small amounts or a sum of a small number of claims of large amounts.
The positive (or negative) dependence relation impacts the expectation of the aggregate claim amount rv, its variance (if it exists), and its cdf, as we will explore in the remainder of this paper. 


\subsection{Sampling algorithms}

We turn our attention to simulating from a CRM. Since we are usually interested in studying the aggregate rv within CRMs, it isn't necessary to sample the entire sequence $(N, \underline{X})$; we are only interested in observing $(N, X_1, \dots, X_N)$, where the length of the sequence depends on the first element of that sequence. 

Since we construct copula-based CRMs using \eqref{eq:DefSNo2}, we have the practical issue that we must simulate from $N$ simultaneously as simulating from $(X_1, \dots, X_N)$, without knowing the length of the latter vector. We solve this issue by conditioning on the rv $N$. 


Our simulation strategy requires two steps: first, simulate $N$, then simulate $(X_1, \dots, X_N)$ conditional on $N$. Since the cdf of $(X_1, \dots, X_n \vert N = n)$ will rarely be simple, one will usually have to resort to numerical optimization. The great advantage of the class $\aleph^{FGM}$ is the representation in \eqref{eq:RepresentationNo1}: one may simulate a sample of $N_{[2]}$, and one will be guaranteed that the number of claim amounts will be smaller or equal to $N_{[2]}$. We propose two approaches to simulate samples $(N, X_1, \dots, X_N)$ from a complete CRM $(N, \underline{X}) \in \aleph^{FGM}$. Each approach will have a particular use case. 

\begin{enumerate}
	\item One may simulate two iid samples of $N$, and denote the smallest one as an observation from $N_{[1]}$ and the largest one as an observation of $N_{[2]}$. Then, it suffices to simulate a sample of the random vector $(I_0, I_1, \dots, I_{N_{[2]}})$ and the random vector will be long enough to sample from the CRM using the representation in \eqref{eq:RepresentationNo1}, although some observations may remain unused. We detail this method in Algorithm \ref{algo:sample-stochastic-1}. 
	\item One may simulate a sample of $I_0$, then simulate a sample of ${N}$. Then, one may simulate a sample of the random vector $(I_1, \dots, I_{{N}} \vert I_0$) and the random vector will be the correct length to simulate from the CRM using the representation in \eqref{eq:RepresentationNo1}. We detail this procedure in Algorithm \ref{algo:sample-stochastic-2}. 
\end{enumerate}

\begin{algorithm}[ht]
	\KwIn{$f_{\boldsymbol{I}}$,$f_{N}$,$f_{X}$, number of simulations $n$}
	\KwOut{Samples of $(N, X_1, \dots, X_N)$}
	\nl \For{$l = 1, \dots, n$}{
		\nl Generate two iid rvs, $N_1^{(l)}$ and $N_2^{(l)}$, distributed as $N$\;
		\nl Set $N_{[1]}^{(l)} = \min\left(N_1^{(l)}, N_2^{(l)}\right)$ and $N_{[2]}^{(l)} = \max\left(N_1^{(l)}, N_2^{(l)}\right)$\;
		\nl Generate $\left(I_0^{(l)}, \dots, I^{(l)}_{N_{[2]}^{(l)}}\right)$\;
		\nl Set ${N}^{(l)} = \left(1 - I_{0}^{(l)}\right) N_{[1]}^{(l)} + I_0^{(l)} N_{[2]}^{(l)}$\;
		\nl	\If{$N^{(l)} = 0$}{
		\nl	Set $(N, \underline{X})^{(l)} = \left(N^{(l)}\right)$
		} 
		\nl \Else{
			\nl Generate two ${N}^{(l)}$-variate random vectors $\boldsymbol{Y}_1$ and $\boldsymbol{Y}_2$\;
			\nl Set $\boldsymbol{X}_{[1]}^{(l)}$ as pointwise minimum of $\boldsymbol{Y}_1$ and $\boldsymbol{Y}_2$\;
			\nl Set $\boldsymbol{X}_{[2]}^{(l)}$ as pointwise maximum of $\boldsymbol{Y}_1$ and $\boldsymbol{Y}_2$\;
			\nl Compute $X_j^{(l)} = \left(1 - I_{j}^{(l)}\right) X_{[1], j}^{(l)} + I_j^{(l)} X_{[2],j}$\;\label{algoline-x1}
                \nl Set $(N, \underline{X})^{(l)} = \left(N^{(l)}, X_1^{(l)}, \dots, X_{N^{(l)}}^{(l)}\right)$}
	}
	\nl Return $(N, \underline{X})^{(l)}$ for $l = 1, \dots, n$.
	\caption{Stochastic sampling method 1.} \label{algo:sample-stochastic-1}
\end{algorithm}
\begin{algorithm}[ht]
	\KwIn{$f_{\boldsymbol{I}}$,$f_{N}$,$f_{X}$, number of simulations $n$}
	\KwOut{Samples of $(N, X_1, \dots, X_N)$}
	\nl \For{$l = 1, \dots, n$}{
		\nl Generate one rv $I_0^{(l)} \eqd \mathrm{Bern}(0.5)$\;
		\nl Generate one rv ${N}^{(l)}$ from $N_{\left[1 + I_0^{(l)}\right]}^{(l)}$\;
		\nl	\If{$N^{(l)} = 0$}{
		\nl	Set $(N, \underline{X})^{(l)} = \left(N^{(l)}\right)$
		}
		\nl \Else{
			\nl Compute the pmf of $f_{I_1, \dots, I_{{N}^{(l)}} \vert I_0^{(l)}}(i_1, \dots, i_{{N}^{(l)}}) = 2f_{I_0, \dots, I_{{N}^{(l)}}}\left(I_0^{(l)}, i_1, \dots, i_{{N}^{(l)}}\right)$ for all $(i_1, \dots, i_{{N}^{(l)}}) \in \{0, 1\}^{{N}^{(l)}}$\;
			\nl Generate a random vector $(I_0^{(l)},\dots,I^{(l)}_{{N}^{(l)}})$ from $f_{I_1, \dots, I_{{N}^{(l)}} \vert I_0^{(l)}}$\;
                \nl Generate two ${N}^{(l)}$-variate random vectors $\boldsymbol{Y}_1$ and $\boldsymbol{Y}_2$\;
			\nl Set $\boldsymbol{X}_{[1]}^{(l)}$ as pointwise minimum of $\boldsymbol{Y}_1$ and $\boldsymbol{Y}_2$\;
			\nl Set $\boldsymbol{X}_{[2]}^{(l)}$ as pointwise maximum of $\boldsymbol{Y}_1$ and $\boldsymbol{Y}_2$\;
			\nl Compute $X_j^{(l)} = \left(1 - I_{j}^{(l)}\right) X_{[1], j}^{(l)} + I_j^{(l)} X_{[2],j}$ for $j = 1, \dots,N^{(l)}$\label{algoline-x2}\;
                \nl Set $(N, \underline{X})^{(l)} = \left(N^{(l)}, X_1^{(l)}, \dots, X_{N^{(l)}}^{(l)}\right)$}
	}
	\nl Return $(N, \underline{X})^{(l)}$ for $l = 1, \dots, n$.
	\caption{Stochastic sampling method 2.} \label{algo:sample-stochastic-2}
\end{algorithm}
One should use Algorithm \ref{algo:sample-stochastic-1} when one has access to an efficient sampling algorithm for the entire random vector $(I_0, I_1, \dots, I_k)$ for every $k \in \mathbb{N}_1$. 
On the other hand, if one defines the distribution of $\boldsymbol{I}$ using a pmf, the conditional pmf $\Pr(I_1 = i_1, \dots, I_d = i_d \vert I_0 = i_0)$ for $\boldsymbol{i}\in \{0, 1\}^{d+1}$ is simple to obtain and one should use Algorithm \ref{algo:sample-stochastic-2}. For the models in Examples \ref{ex:I-max-dep} and \ref{ex:I-max-dep-reverse}, both algorithms are efficient since it is simple to sample the entire random vector $\boldsymbol{I}$ at once. 

For both Algorithms \ref{algo:sample-stochastic-1} and \ref{algo:sample-stochastic-2}, there is no distributional hypothesis with regards to the frequency and severity distributions; one does not need to restrict one's analysis to distributions that are closed under order statistics. One disadvantage of this representation is that many observations are not used since only one of the samples from $X_{[1]}$ or $X_{[2]}$ is required in lines \ref{algoline-x1} of both Algorithms \ref{algo:sample-stochastic-1} and \ref{algo:sample-stochastic-2}. However, the author of \cite{baker2008OrderstatisticsbasedMethodConstructing} avoids this issue by storing unused samples from order statistics and using them for other samples; we also recommend this approach. 

    



\section{Analysis of the components of CRM with FGM dependence}
\label{sect:ComponentsCRM}

Aiming to understand the impact of the dependence construction between the frequency and severity of a claim, we investigate the behaviour of the conditional claim amount rv $X$ given the claim count rv $N$. Also, as mentioned in \cite[Section 4.3]{shi2020regression} and \cite{wuthrich2008stochastic}, one can use, for reserving purposes, the conditional distributions of the individual claim amount rv $X$ and of the aggregate claim amount rv $S$ given the number of claims $N$. 

In this section and the next, we will focus on CRMs satisfying Assumptions \ref{ass:CRMwithDependence} and \ref{ass:CRMwithFGM} and denote this class as 
$\aleph^{FGM*}$. Note that $\aleph^{FGM*} = \aleph^{*} \cap \aleph^{FGM}$ and that $(\aleph^{*} \cup \aleph^{FGM}) \setminus \aleph^{FGM*} \neq \emptyset$. Assumptions \ref{ass:CRMwithDependence} and \ref{ass:CRMwithFGM}, when considered together, implies that $\theta_{0k} = \theta_{01}$ and $\theta_{1k} = \theta_{12}$ for $k \in \mathbb{N}_1$. 
Further, $\theta_{0k_1k_2} = \theta_{012}$ and $\theta_{k_1k_2k_3} = \theta_{123}$ for $k_1, k_2, k_3 \in A_N$ and $k_1<k_2<k_3$. 
For this reason, we will use 
$\theta_{01}$ to refer to the bivariate dependence parameter associated with the dependence between $N$ and any claim amount; $\theta_{12}$ to refer to the bivariate dependence parameter between two claim amounts; 
$\theta_{012}$ to refer to the trivariate dependence parameter between $N$ and two claim amounts.
We provide below a general theorem useful to analyze the components of the CRMs with full FGM dependence and exchangeable claim amount rvs.

\begin{theorem}\label{thm:componentresult}
Consider a CRM $(N, \underline{X}) \in \aleph^{FGM*}$ and three functions $\varphi_0, \varphi_1$ and $\varphi_2$ such that all expectations $E[\varphi_0(N)]$, $E[\varphi_1(X)]$, and $E[\varphi_2(X)]$ exist, and assume that $E[\varphi_0(N)]>0$. 
Define the function $\Delta(F_Y; \phi)= \int_{\mathbb{R}}\phi(y) \diff F^2_Y(y) - \int_{\mathbb{R}}\phi(y) \diff (1 - \overline{F}^2_{Y}(y))$ for an integrable function $\phi$. Then, we have
\begin{align}
	E[\varphi_0(N)\varphi_1(X_1)\varphi_2(X_2)] &= \sum_{(i_0, i_1, i_2) \in \{0, 1\}^3} f_{I_0, I_1, I_2}\left(i_0, i_1, i_2\right) E\left[\varphi_0\left(N_{[1 + i_0]}\right)\right]E\left[\varphi_1\left(X_{[1 + i_1]}\right)\right]E\left[\varphi_2\left(X_{[1 + i_2]}\right)\right] \nonumber\\
 &= E\left[\varphi_0\left(N\right)\right]E\left[\varphi_1\left(X\right)\right]E\left[\varphi_2\left(X\right)\right] \nonumber\\
&\quad + \frac{\theta_{01}}{4}\Delta(F_N;\varphi_0) \left(E\left[\varphi_1\left(X_{[2]}\right)\right] E\left[\varphi_2\left(X_{[2]}\right)\right] - E\left[\varphi_1\left(X_{[1]}\right)\right] E\left[\varphi_2\left(X_{[1]}\right)\right]\right)\nonumber\\
	&\quad +\frac{\theta_{12}}{4} E\left[\varphi_0\left(N\right)\right] \Delta(F_X; \varphi_1)\Delta(F_X;\varphi_2) - \frac{\theta_{012}}{8} \Delta(F_N;\varphi_0)\Delta(F_X; \varphi_1)\Delta(F_X;\varphi_2)\label{eq:esperancefgh}
\end{align}
 and
\begin{align}
	E[\varphi_1(X_1)\varphi_2(X_2)\vert \varphi_0(N)] &=\frac{E[\varphi_0(N)\varphi_1(X_1)\varphi_2(X_2)]}{E[\varphi_0(N)]} \nonumber\\
	&= E\left[\varphi_1\left(X\right)\right]E\left[\varphi_2\left(X\right)\right] \nonumber\\
	&\quad + \frac{\theta_{01}}{4}\frac{\Delta(F_N;\varphi_0)}{E[\varphi_0(N)]} 
 \left(E\left[\varphi_1\left(X_{[2]}\right)\right] E\left[\varphi_2\left(X_{[2]}\right)\right] - E\left[\varphi_1\left(X_{[1]}\right)\right] E\left[\varphi_2\left(X_{[1]}\right)\right]\right)\nonumber\\
	&\quad +\frac{\theta_{12}}{4} \Delta(F_X; \varphi_1)\Delta(F_X;\varphi_2) - \frac{\theta_{012}}{8} \frac{\Delta(F_N;\varphi_0)}{E[\varphi_0(N)]}\Delta(F_X; \varphi_1)\Delta(F_X;\varphi_2).\label{eq:asdfasdf}
\end{align}

\end{theorem}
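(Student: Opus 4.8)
The plan is to reduce everything to the stochastic representation of Theorem~\ref{thm:stochastic-representation-s} and then to perform a finite character expansion over the underlying symmetric Bernoulli vector. First I would apply the representation \eqref{eq:RepresentationNo1} with $k = 2$, so that
$$(N, X_1, X_2) \eqd \big((1-I_0)N_{[1]} + I_0 N_{[2]},\, (1-I_1)X_{[1],1} + I_1 X_{[2],1},\, (1-I_2)X_{[1],2} + I_2 X_{[2],2}\big).$$
The decisive structural point is that in this construction the three order-statistic pairs $(N_{[1]}, N_{[2]})$, $(X_{[1],1}, X_{[2],1})$ and $(X_{[1],2}, X_{[2],2})$ are built from mutually independent pairs of i.i.d.\ copies and are independent of $\boldsymbol{I} = (I_0, I_1, I_2)$. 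Hence, conditionally on $\boldsymbol{I} = (i_0, i_1, i_2)$, the three factors are independent and distributed as $N_{[1+i_0]}$, $X_{[1+i_1]}$ and $X_{[1+i_2]}$, respectively. Taking conditional expectations and then averaging over $\boldsymbol{I}$ with pmf $f_{I_0,I_1,I_2}$ yields directly the first equality in \eqref{eq:esperancefgh}.

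For the second equality I would first record two elementary order-statistic identities. Since $\{Y_1, Y_2\} = \{Y_{[1]}, Y_{[2]}\}$ as a set for i.i.d.\ $Y_1, Y_2 \eqd Y$, we have $\phi(Y_{[1]}) + \phi(Y_{[2]}) = \phi(Y_1) + \phi(Y_2)$, so $E[\phi(Y)] = \tfrac12\big(E[\phi(Y_{[1]})] + E[\phi(Y_{[2]})]\big)$. Moreover, because $F_{Y_{[2]}} = F_Y^2$ and $F_{Y_{[1]}} = 1 - \overline{F}_Y^2$, the definition of $\Delta$ gives $\Delta(F_Y;\phi) = E[\phi(Y_{[2]})] - E[\phi(Y_{[1]})]$. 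Combining the two, $E[\phi(Y_{[1+i]})] = E[\phi(Y)] - \tfrac12(-1)^{i}\Delta(F_Y;\phi)$ for $i \in \{0,1\}$. I would apply this to $\varphi_0, N$ and to $\varphi_1, \varphi_2, X$, so that each order-statistic expectation becomes affine in the corresponding sign.

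The core computation is then to substitute these expressions together with the pmf \eqref{eq:theta-to-fi}, specialized to the trivariate $\aleph^{FGM*}$ case, where the exchangeability assumption forces $\theta_{02} = \theta_{01}$. Writing $\sigma_j = (-1)^{i_j} \in \{-1,1\}$, the pmf becomes $f_{I_0,I_1,I_2} = \tfrac18\big(1 + \theta_{01}(\sigma_0\sigma_1 + \sigma_0\sigma_2) + \theta_{12}\sigma_1\sigma_2 + \theta_{012}\sigma_0\sigma_1\sigma_2\big)$. Expanding the product of the three affine factors against the pmf and summing over $(\sigma_0,\sigma_1,\sigma_2)\in\{-1,1\}^3$, I would invoke the orthogonality relation $\sum_{\sigma\in\{-1,1\}}\sigma^{p} = 2\,\mathbbm{1}_{\{p \text{ even}\}}$: only the monomials in which every $\sigma_j$ appears to an even power survive, and each surviving monomial is neutralized by exactly one pmf term. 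This produces precisely the $\theta_{01}$, $\theta_{12}$ and $\theta_{012}$ contributions in \eqref{eq:esperancefgh}; in particular the two $\theta_{01}$ survivors recombine, via the affine identity above, into $E[\varphi_1(X_{[2]})]E[\varphi_2(X_{[2]})] - E[\varphi_1(X_{[1]})]E[\varphi_2(X_{[1]})]$. The main obstacle is purely the sign- and coefficient-bookkeeping in this expansion; the character orthogonality is what keeps it tractable and annihilates all the odd cross terms.

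Finally, \eqref{eq:asdfasdf} follows immediately by dividing \eqref{eq:esperancefgh} by $E[\varphi_0(N)] > 0$, which is legitimate under the standing assumption, since the weighted expectation $E[\varphi_1(X_1)\varphi_2(X_2)\mid\varphi_0(N)]$ is defined as exactly that ratio.
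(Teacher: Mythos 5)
Your proposal is correct and takes essentially the same route as the paper: both obtain the first equality by conditioning on $(I_0,I_1,I_2)$ through the stochastic representation of Theorem \ref{thm:stochastic-representation-s}, and both obtain the second by substituting the trivariate pmf expressed in the natural FGM parameters with $\theta_{02}=\theta_{01}$ and expanding. The only difference is that where the paper's proof compresses the expansion into ``tedious algebra manipulations,'' you make it explicit and transparent via the affine identity $E[\phi(Y_{[1+i]})] = E[\phi(Y)] - \tfrac{1}{2}(-1)^{i}\Delta(F_Y;\phi)$ together with sign orthogonality over $\{-1,1\}^3$, including the correct recombination of the two $\theta_{01}$ terms into $E[\varphi_1(X_{[2]})]E[\varphi_2(X_{[2]})] - E[\varphi_1(X_{[1]})]E[\varphi_2(X_{[1]})]$.
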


\begin{proof}
Using the stochastic representation given in \eqref{eq:RepresentationNo1} of Theorem \ref{thm:stochastic-representation-s} and by conditioning on $(I_0, I_1, I_2)$, we have
\begin{align*}
	E[\varphi_0(N)\varphi_1(X_1)\varphi_2(X_2)] &= \sum_{(i_0, i_1, i_2) \in \{0, 1\}^3} f_{I_0, I_1, I_2}\left(i_0, i_1, i_2\right) E\left[\varphi_0\left(N_{[1 + i_0]}\right)\right]E\left[\varphi_1\left(X_{[1 + i_1]}\right)\right]E\left[\varphi_2\left(X_{[1 + i_2]}\right)\right].
\end{align*}
Replacing $f_{I_0, I_1, I_2}\left(i_0, i_1, i_2\right)$ by its expression based on the natural parameters of the FGM copula given in Theorem 3.2 of \cite{blier2021stochastic}, that is, 
\begin{equation*}\label{eq:pmf-trivariate}
    f_{\boldsymbol{I}}(i_0, i_1, i_2) = \frac{1}{8}\left(1 + (-1)^{i_0 + i_1}\theta_{01} + (-1)^{i_0 + i_2}\theta_{02} + (-1)^{i_1 + i_2}\theta_{12} + (-1)^{i_0 + i_1 + i_2}\theta_{012}\right),
\end{equation*}
for $(i_0, i_1, i_2) \in \{0, 1\}^3$, and here with $\theta_{01}$ equal to $\theta_{02}$, the desired result follows after tedious algebra manipulations. The proof for the conditional expectation is similar.
\end{proof}

In the rest of this section, we use Theorem \ref{thm:componentresult} to analyze the components within $\aleph^{FGM*}$. The following corollary provides the expression of the cdf of $X$ given the value taken by the claim number rv.


\begin{corollary} \label{cor:conditional-cdf}
The conditional cdf of $(X \vert N=n)$, $n \in A_N$, is given by 
\begin{align*}
        F_{X \vert N = n}(x) 
        &= \frac{1}{\gamma_N(n)} 
        \sum_{(i_0,i_1)\in \{0,1\}^2} 
        f_{I_0,I_1}(i_0,i_1) \gamma_{N_{[1+i_0]}}(n) F_{X_{[1+i_1]}}(x),
        \quad x \geq 0.
\end{align*}
    or equivalently by
\begin{align}\label{eq:cond-cdf-x-n-v2}
        F_{X \vert N = n}(x) 
        &=
        F_X(x) 
        +
        \frac{\theta_{01}}{4}\frac{\gamma_{N_{[2]}}(n)-\gamma_{N_{[1]}}(n)}{\gamma_{N}(n)}\left(F_{X_{[2]}}(x)-F_{X_{[1]}}(x)\right), \quad x \geq 0. 
\end{align}
\end{corollary}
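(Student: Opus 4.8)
The plan is to obtain both displayed identities as a single specialization of Theorem \ref{thm:componentresult}. Since the conditional cdf factors as $F_{X \mid N = n}(x) = \Pr(N = n,\, X_1 \le x)/\gamma_N(n)$, which is exactly the normalized ratio $E[\varphi_0(N)\varphi_1(X_1)\varphi_2(X_2)]/E[\varphi_0(N)]$ appearing in \eqref{eq:asdfasdf}, I would apply the theorem with the test functions
$$\varphi_0(\cdot) = \mathbbm{1}_{\{\cdot = n\}}, \qquad \varphi_1(\cdot) = \mathbbm{1}_{\{\cdot \le x\}}, \qquad \varphi_2 \equiv 1.$$
For $n \in A_N$ the hypothesis $E[\varphi_0(N)] = \gamma_N(n) > 0$ is met, and all three expectations are bounded by $1$, so the theorem is applicable. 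The elementary pieces are $E[\varphi_0(N)] = \gamma_N(n)$, $E[\varphi_1(X)] = F_X(x)$, and $E[\varphi_2(X)] = 1$.

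The key simplification is the identification of the $\Delta$ functionals. Because the minimum and maximum of two iid copies of $Y$ satisfy $F_{Y_{[2]}} = F_Y^2$ and $F_{Y_{[1]}} = 1 - \overline{F}_Y^2$, the definition of $\Delta$ collapses to a difference of order-statistic expectations,
$$\Delta(F_Y; \phi) = E[\phi(Y_{[2]})] - E[\phi(Y_{[1]})].$$
Specializing, $\Delta(F_N; \varphi_0) = \gamma_{N_{[2]}}(n) - \gamma_{N_{[1]}}(n)$ and $\Delta(F_X; \varphi_1) = F_{X_{[2]}}(x) - F_{X_{[1]}}(x)$, while crucially $\Delta(F_X; \varphi_2) = 0$ because $\varphi_2$ is constant. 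This last vanishing annihilates both the $\theta_{12}$ and the $\theta_{012}$ terms in \eqref{eq:asdfasdf}, leaving only the $\theta_{01}$ contribution.

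For the first (summation) form I would instead start from the pmf-weighted sum in the first line of Theorem \ref{thm:componentresult}; summing over $i_2 \in \{0,1\}$ uses $E[\varphi_2(X_{[1+i_2]})] = 1$ together with $\sum_{i_2} f_{I_0,I_1,I_2}(i_0,i_1,i_2) = f_{I_0,I_1}(i_0,i_1)$ to marginalize out $I_2$, and dividing by $\gamma_N(n)$ gives the stated expression. For the second form, substituting the simplified $\Delta$ values into \eqref{eq:asdfasdf} yields \eqref{eq:cond-cdf-x-n-v2} directly. I do not expect a substantive obstacle: the argument is essentially bookkeeping, and the only points requiring genuine care are verifying the order-statistic identity $\Delta(F_Y; \phi) = E[\phi(Y_{[2]})] - E[\phi(Y_{[1]})]$ and confirming that the constant choice $\varphi_2 \equiv 1$ is admissible (its expectation trivially exists), so that the higher-order dependence terms drop out and the two forms coincide.
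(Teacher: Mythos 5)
Your proposal is correct and follows exactly the paper's own route: the paper proves this corollary by specializing Theorem \ref{thm:componentresult} with $\varphi_0(k)=\mathbbm{1}_{\{k=n\}}$, $\varphi_1(y)=\mathbbm{1}_{\{y\leq x\}}$, and $\varphi_2\equiv 1$, which is precisely your choice of test functions. Your additional bookkeeping — verifying $\Delta(F_Y;\phi)=E[\phi(Y_{[2]})]-E[\phi(Y_{[1]})]$, noting that $\Delta(F_X;\varphi_2)=0$ annihilates the $\theta_{12}$ and $\theta_{012}$ terms, and marginalizing $I_2$ out of the trivariate pmf to get the summation form — is exactly the detail the paper leaves implicit.
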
 

\begin{proof}
The result directly follows from letting functions $\varphi_0,\varphi_1,\varphi_2$ in \eqref{eq:asdfasdf} be respectively defined as $\varphi_0(k)= \mathbbm{1}_{\{k=n\}}$, $\varphi_1(y)=\mathbbm{1}_{\{y\leq x\}}$, and $\varphi_2(y)=1$. 
\end{proof}

If we assume in \eqref{eq:cond-cdf-x-n-v2} that $X$ is a continuous rv, the pdf of $(X \vert N=n)$ is given by 
\begin{equation} \label{eq:fXcondN}
    f_{X \vert N=n} (x) =
    f_X(x) 
    + \frac{\theta_{01}}{2} 
    \frac{\gamma_{N_{[2]}}(n)-\gamma_{N_{[1]}}(n)}
    {\gamma_{N}(n)}\left(f_{X_{[2]}}(x)-f_{X_{[1]}}(x)\right),
        \quad x \geq 0, \quad n \in A_N.
\end{equation}
On the other hand, 
if $X$ is a discrete strictly positive rv, we find from \eqref{eq:cond-cdf-x-n-v2} that the conditional pmf of $(X \vert N=n)$ is 
\begin{equation*}
    f_{X \vert N = n}(k) =
    f_X(k)
    + \frac{\theta_{01}}{2} 
    \frac{\gamma_{N_{[2]}}(n)-\gamma_{N_{[1]}}(n)}
    {\gamma_{N}(n)}f_{X}(k)\left(2F_X(k-1) + f_X(k) - 1\right),
        \quad k \in \mathbb{N}_1, \quad n \in A_N.
\end{equation*}


\begin{remark}
    A general rule of thumb is that random vectors constructed using Sklar's theorem inherit the dependence properties of the original copula. That is not the case for CRMs with full dependence due to the discrete component, which affects every member of the sequence of claim rvs. In particular, if $(U_1,\dots, U_k \vert U_0)$ is a vector of conditionally independent rvs, then it is not necessarily the case that $(X_1,\dots X_k \vert N)$ is a vector of conditionally independent rvs. In fact, to the best of our knowledge, we have not been able to construct a CRM in $\aleph^{FGM}$ that induces conditional independence of the claim rvs given the count rv, other than the trivial case of the classical CRM found in Assumption \ref{ass:CRMClassical}. 
\end{remark}

\begin{remark}\label{rem:mediane}
Whenever $\varphi_0(k)= \mathbbm{1}_{\{k=n\}}$ in \eqref{eq:asdfasdf}, we have
\begin{equation}\label{eq:correction-n}
    \frac{\Delta(F_N; \varphi_0)}{E[\varphi_0(N)]} = \frac{\gamma_{N_{[2]}}(n)-\gamma_{N_{[1]}}(n)}{\gamma_{N}(n)} = 4\left(\frac{F_N(n-1)+F_N(n)}{2}-\frac{1}{2}\right), \quad n \in A_N,
\end{equation}
which is positive if $F_N(n - 1) \geq 0.5$, that is, when $n - 1$ is greater than the median of $N$. On the other hand, the term on the right-hand side of \eqref{eq:correction-n} is negative when $F_N(n) \leq 0.5$, that is, when $n$ is smaller than the median of $N$ (in the case where $F_N(n - 1) < 0.5 < F_N(n)$ the sign will depend on the average between $F_N(n - 1)$ and $F_N(n)$). Further, in Corollary \ref{cor:conditional-cdf}, where $\varphi_1(y) = \mathbbm{1}_{\{y \leq x\}}$, the impact on the dependence arising from $X_1$ is
\begin{equation}\label{eq:correction-x}
    \Delta(F_X; \varphi_1) = F_{X_{[2]}}(x)-F_{X_{[1]}}(x) = 2F_X(x)(F_X(x) - 1), \quad x \geq 0.
\end{equation}
We have $2F_X(x)(F_X(x) - 1) > 0$ if $F_X(x) > 0.5$, while 
$2F_X(x)(F_X(x) - 1) < 0$ if $F_X(x) < 0.5$. Therefore, we notice sign changes of \eqref{eq:correction-n} around the median of $N$ and of \eqref{eq:correction-x} at the median of $X$, which will lead to an increase or decrease of expectations of the form $E[\varphi_0(N)\varphi_1(X_1)h(X_2)]$ compared with the independence case. Such sign changes will impact stochastic ordering and help us analyze the components of the CRM later in this section. See also \cite[Section 2]{johnson1977generalized} for similar observations about the median and order statistics in copula regression. 

\end{remark}

The following example illustrates the relationship between the conditional pdf of $X$ given $N$, the dependence parameter $\theta_{01}$, and the median of $N$.


\begin{example} 
\label{ex:PdfOfXGivenN}
Let $X \eqd \mathrm{Gamma}(3, 1/10)$ such that $\mu_{X} = 30$
and
$N \eqd \mathrm{Geometric}(1/11)$ such that $\mu_{N} = 10$ and $F_{N}^{-1}(0.5) = 7$.
In Figure \ref{fig:CondPdfDeX}, we present the values of the conditional pdf of $X$ given the number of claims, obtained with \eqref{eq:fXcondN}, along with those of the marginal pdf when the dependence between the frequency and each claim amount is negative (with $\theta_{01} = -1$ as in Example \ref{ex:I-max-dep-reverse}) and positive (with $\theta_{01} = 1$ as in Example \ref{ex:I-max-dep}). 

From Remark \ref{rem:mediane}, one observes that when dependence is positive, we have that
\begin{itemize}
    \item the pdf of $X \vert N = n$ is to the left of the pdf of $X$ when $n$ is smaller than the median of $N$, meaning that $(X \vert N = n) \preceq_{st} X$;
    \item the pdf of $X \vert N = n$ is to the right of the pdf of $X$ when $n - 1$ is larger than the median of $N$, meaning that $X \preceq_{st} (X \vert N = n)$.
\end{itemize}
The stochastic orders are reversed when $\theta = -1$. Also from Remark \ref{rem:mediane}, we observe that $f_{X \vert N = n}(F_X^{-1}(0.5))$ are identical for all $n \in A_N$, that is, each conditional pdf cross at the point $F_X^{-1}(0.5)$. Note that \cite{shi2020regression} also compares conditional pdfs and obtains similar relationships but only for positive dependence parameters. 
\begin{figure}[ht]
    \centering
    \resizebox{0.48\textwidth}{!}{
\begin{tikzpicture}
	\begin{axis}[
		axis y line=left,
		axis x line=bottom,
		width = 4in, 
		height = 3in,
		ymin = 0,
		ymax = 0.05, 
		xmin = 0, 
		xmax = 100,
		scaled y ticks=false,
		yticklabel style={
			/pgf/number format/fixed,
			/pgf/number format/precision=2
		},
		xlabel={$x$},
		ylabel={pdf}, 
		title={$\theta_{01} = -1$},
		label style={font=\Large},
		tick label style={font=\Large},
		title style={font=\Large},
		]
		\addplot[black] table [x="x", y="marg", col sep=comma] {code/density_cond_neg_dep.csv};
		\addlegendentry{$f_X(x)$}
		
		\addplot[DarkBlue] table [x="x", y="1", col sep=comma] {code/density_cond_neg_dep.csv};
		\addlegendentry{$f_{X \vert N = 1}(x)$}
		
		\addplot[LightBlue] table [x="x", y="3", col sep=comma] {code/density_cond_neg_dep.csv};
		\addlegendentry{$f_{X \vert N = 3}(x)$}
		
		\addplot[Teal] table [x="x", y="5", col sep=comma] {code/density_cond_neg_dep.csv};
		\addlegendentry{$f_{X \vert N = 5}(x)$}
		
		\addplot[green] table [x="x", y="10", col sep=comma] {code/density_cond_neg_dep.csv};
		\addlegendentry{$f_{X \vert N = 10}(x)$}
		
		\addplot[orange] table [x="x", y="15", col sep=comma] {code/density_cond_neg_dep.csv};
		\addlegendentry{$f_{X \vert N = 15}(x)$}
		
		\addplot[red] table [x="x", y="30", col sep=comma] {code/density_cond_neg_dep.csv};
		\addlegendentry{$f_{X \vert N = 30}(x)$}
		
		
	\end{axis}
\end{tikzpicture}
}
\hfill
\resizebox{0.48\textwidth}{!}{
	\begin{tikzpicture}
		\begin{axis}[
			axis y line=left,
			axis x line=bottom,
			width = 4in, 
			height = 3in,
			ymin = 0,
			ymax = 0.05, 
			xmin = 0, 
			xmax = 100,
			scaled y ticks=false,
			yticklabel style={
				/pgf/number format/fixed,
				/pgf/number format/precision=2
			},
			xlabel={$x$},
			ylabel={pdf}, 
			title={$\theta_{01} = 1$},
						label style={font=\Large},
						tick label style={font=\Large},
						title style={font=\Large},
			]
			\addplot[black] table [x="x", y="marg", col sep=comma] {code/density_cond_pos_dep.csv};
			\addlegendentry{$f_X(x)$}
			
			\addplot[DarkBlue] table [x="x", y="1", col sep=comma] {code/density_cond_pos_dep.csv};
			\addlegendentry{$f_{X \vert N = 1}(x)$}
			
			\addplot[LightBlue] table [x="x", y="3", col sep=comma] {code/density_cond_pos_dep.csv};
			\addlegendentry{$f_{X \vert N = 3}(x)$}
			
			\addplot[Teal] table [x="x", y="5", col sep=comma] {code/density_cond_pos_dep.csv};
			\addlegendentry{$f_{X \vert N = 5}(x)$}
			
			\addplot[green] table [x="x", y="10", col sep=comma] {code/density_cond_pos_dep.csv};
			\addlegendentry{$f_{X \vert N = 10}(x)$}
			
			\addplot[orange] table [x="x", y="15", col sep=comma] {code/density_cond_pos_dep.csv};
			\addlegendentry{$f_{X \vert N = 15}(x)$}
			
			\addplot[red] table [x="x", y="30", col sep=comma] {code/density_cond_pos_dep.csv};
			\addlegendentry{$f_{X \vert N = 30}(x)$}
			
			
		\end{axis}
	\end{tikzpicture}
}
    \caption{Conditional pdf of $X$ given the frequency $N$.}
    \label{fig:CondPdfDeX}
\end{figure}
\end{example}

It is useful, both for modelling purposes and to understand the aggregate claim amount rv within a CRM with dependence, to study the conditional expectation of a claim amount given the number of claims.


\begin{corollary}
\label{corol:ConditionalEXgivenN}
The conditional expectation of $(X \vert N=n)$, $n \in A_N$, is 
\begin{align} 
E[X|N = n]&=\sum_{(i_0,i_1) \in \{0,1\}^2}f_{I_0, I_1}(i_0, i_1) \frac{\gamma_{N_{[1+i_0]}}(n)}{\gamma_N(n)}\mu_{X_{[1+i_1]}}, \quad n \in A_N \label{eq:EXgivenNequalkNo2}
\end{align}
or, equivalently,
\begin{align} 
E[X|N = n]
&=E[X] +\frac{\theta_{01}}{4}\frac{\left(\gamma_{N_{[2]}}(n)-\gamma_{N_{[1]}}(n)\right)}{\gamma_N(n)}\left(\mu_{X_{[2]}}-\mu_{X_{[1]}}\right). \label{eq:EXgivenNequalk}
\end{align}
\end{corollary}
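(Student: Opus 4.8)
The plan is to obtain both identities as immediate specializations of Theorem \ref{thm:componentresult}, which already provides a closed form for $E[\varphi_0(N)\varphi_1(X_1)\varphi_2(X_2)]$ and for the associated conditional expectation for arbitrary integrable $\varphi_0,\varphi_1,\varphi_2$. Since $E[X\mid N=n]$ involves only a single claim amount together with the event $\{N=n\}$, I would take $\varphi_0(k)=\mathbbm{1}_{\{k=n\}}$, $\varphi_1(x)=x$, and $\varphi_2\equiv 1$, so that $E[\varphi_1(X_1)\varphi_2(X_2)\mid \varphi_0(N)]$ collapses to $E[X\mid N=n]$, and $E[\varphi_0(N)]=\gamma_N(n)>0$, which is legitimate because $n\in A_N$.

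First I would assemble the ingredients entering \eqref{eq:asdfasdf}. Using $F_{Y_{[2]}}=F_Y^2$ and $F_{Y_{[1]}}=1-\overline F_Y^2$, the operator satisfies $\Delta(F_Y;\phi)=E[\phi(Y_{[2]})]-E[\phi(Y_{[1]})]$; hence $\Delta(F_N;\varphi_0)=\gamma_{N_{[2]}}(n)-\gamma_{N_{[1]}}(n)$ and $\Delta(F_X;\varphi_1)=\mu_{X_{[2]}}-\mu_{X_{[1]}}$, while $E[\varphi_1(X)]=\mu_X$ and $E[\varphi_1(X_{[1+i_1]})]=\mu_{X_{[1+i_1]}}$.

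The decisive observation is that the choice $\varphi_2\equiv 1$ forces $E[\varphi_2(X)]=E[\varphi_2(X_{[1]})]=E[\varphi_2(X_{[2]})]=1$ and, crucially, $\Delta(F_X;\varphi_2)=0$. Substituting into \eqref{eq:asdfasdf} therefore annihilates both the $\theta_{12}$ and the $\theta_{012}$ terms, since each carries a factor $\Delta(F_X;\varphi_2)$, leaving only the independence term $\mu_X$ together with the $\theta_{01}$ contribution; dividing by $E[\varphi_0(N)]=\gamma_N(n)$ yields \eqref{eq:EXgivenNequalk}. For the equivalent sum form \eqref{eq:EXgivenNequalkNo2}, I would instead invoke the first (pmf) line of the theorem: with $\varphi_2\equiv 1$ each factor $E[\varphi_2(X_{[1+i_2]})]=1$, and summing the trivariate symmetric Bernoulli pmf over $i_2\in\{0,1\}$ marginalizes $f_{I_0,I_1,I_2}$ to $f_{I_0,I_1}$, producing the stated double sum after dividing by $\gamma_N(n)$.

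There is essentially no genuine obstacle here, as Theorem \ref{thm:componentresult} performs the heavy lifting; the only points demanding care are (i) verifying that $\varphi_2\equiv 1$ indeed eliminates the two higher-order dependence terms through $\Delta(F_X;\varphi_2)=0$, and (ii) correctly marginalizing the symmetric Bernoulli pmf from three indices down to two when establishing the sum representation.
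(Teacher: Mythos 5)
Your proposal is correct and follows essentially the same route as the paper: the paper's proof likewise specializes \eqref{eq:asdfasdf} of Theorem \ref{thm:componentresult} with $\varphi_0(k)=\mathbbm{1}_{\{k=n\}}$, $\varphi_1(x)=x$, $\varphi_2(x)=1$. Your additional details (checking $\Delta(F_X;\varphi_2)=0$ kills the $\theta_{12}$ and $\theta_{012}$ terms, and marginalizing $f_{I_0,I_1,I_2}$ over $i_2$ to get the pmf form) are exactly the computations the paper leaves implicit.
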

\begin{proof}
Under the assumptions of Theorem \ref{thm:componentresult} and, letting the functions $\varphi_0,\varphi_1,\varphi_2$ be defined as $\varphi_0(k)=\mathbbm{1}_{\{k=n\}}$, $\varphi_1(x)=x$ and $\varphi_2(x)=1$ in \eqref{eq:asdfasdf}, the desired result follows.
\end{proof}

\begin{example}\label{ex:cond-exp} 
Fix some $\mu > 0$ and let $X \eqd \mathrm{Exp}(1/\mu)$ and $X' \eqd \mathrm{Pareto}(\alpha, \lambda)$, $\alpha > 1$ with $\lambda = \mu(\alpha - 1)$ such that $E[X] = E[X'] = \mu$. Then, $\mu_{X_{[2]}} - \mu_{X_{[1]}} = \mu$ and $\mu_{X'_{[2]}} - \mu_{X'_{[1]}} = \mu \times 2 \alpha / (2 \alpha -1)$.
Further, \eqref{eq:EXgivenNequalk}, becomes
 \begin{align*} 
    E[X | N = n] 
    &=
    \mu
    +\frac{\theta_{01}}{4}
    \frac{\gamma_{N_{[2]}}(n) - \gamma_{N_{[1]}}(n)}{\gamma_N(n)}\times \mu,
            \quad n \in A_N,
\end{align*}  
and 
\begin{align*} 
    E[X' | N = n] 
    &=
    \mu
    +\frac{\theta_{01}}{4}
    \frac{\gamma_{N_{[2]}}(n) - \gamma_{N_{[1]}}(n)}{\gamma_N(n)} \times \mu \times \frac{2 \alpha}{2 \alpha -1},
            \quad n \in A_N.
\end{align*}  
In Figure \ref{fig:cond-exp}, we depict the values 
of $E[X_1 | N = n]$ and $E[X'_1 | N = n]$, for $n \in \{0,1,\dots,100\}$, assuming $N \eqd \mathrm{NB}(4, 0.1)$, $\mu = 20$, and $\alpha = 1.5$, where $\mathrm{NB}$ stands for negative binomial distribution. 
It implies that  $F_{N}^{-1}(0.5) = 33$, $E[N] = 36$, $\mu_{X_{[1]}} = 10$, $\mu_{X_{[2]}} = 30$,
$\mu_{X'_{[1]}} = 5$, and $\mu_{X'_{[2]}} = 35$. 





\begin{figure}[ht]
\centering
\resizebox{\textwidth}{!}{
	\begin{tikzpicture}
		\begin{axis}[
			width = 7in, 
			height = 3.5in,
			ymin = 0,
			ymax = 40, 
			xmin = 0, 
			xmax = 100,
			ylabel shift=-20pt,
			xlabel={$k$},
			ylabel={$E[Z \vert N = k]$}, 
			extra x ticks={33},
			extra x tick labels={$F_{N}^{-1}(0.5)$},
			extra x tick style={ticklabel pos = upper},
			extra y ticks={5, 10, 20, 30, 35},
			extra y tick labels={$\mu_{X'_{[1]}}$, $\mu_{X_{[1]}}$, $\mu$, $\mu_{X_{[2]}}$, $\mu_{X'_{[2]}}$},			
                extra y tick style={ticklabel pos = right},
			legend style={at={(0.98,0.5)},anchor=east},
			label style={font=\Large},
			tick label style={font=\Large},
			title style={font=\Large},
			legend style={font=\Large},
			]
			\addplot[const plot, blue] table [x="V1", y="V2", col sep=comma] {code/cond_exp.csv};
			\addlegendentry{$Z = X$ and $\theta = 1$}

                \addplot[const plot, green] table [x="x", y="y1", col sep=comma] {code/cond_exp2.csv};
                \addlegendentry{$Z = X'$ and $\theta = 1$}

                \addplot[const plot, red] table [x="V1", y="V3", col sep=comma] {code/cond_exp.csv};
			\addlegendentry{$Z = X$ and $\theta = -1$}
                
			\addplot[const plot, orange] table [x="x", y="y2", col sep=comma] {code/cond_exp2.csv};
                \addlegendentry{$Z = X'$ and $\theta = -1$}

			\draw[color=black, dashed] 
			(axis cs:33, 0) -- (axis cs:33, 40);

                \draw[color=black, dotted] 
			(axis cs:0, 20) -- (axis cs:100, 20);
   
			\draw[color=black, dotted] 
			(axis cs:0, 30) -- (axis cs:100, 30);
			\draw[color=black, dotted] 
			(axis cs:0, 10) -- (axis cs:100, 10);

                \draw[color=black, dotted] 
			(axis cs:0, 35) -- (axis cs:100, 35);
			\draw[color=black, dotted] 
			(axis cs:0, 5) -- (axis cs:100, 5);
 

   
			
		\end{axis}
\end{tikzpicture}}
\caption{Conditional expectation in Example \ref{ex:cond-exp}.}
\label{fig:cond-exp}
\end{figure}

From Remark \ref{rem:mediane}, we have that the curves of $E[Z | N = n]$ cross at the median of $N$, which is 33 in this example. Also, note for $\theta = 1$ that $E[X \vert N = 0] = \mu_{X_{[1]}}$ while $\lim_{m \to \infty} E[X \vert N = m] = \mu_{X_{[2]}}$; the orders are reversed when $\theta = -1$. The conditional expectations reach $\mu_{X_{[1]}}$ and $\mu_{X_{[2]}}$, this is because the term on the right-hand side of (\ref{eq:correction-n}), for the current claim frequency distribution, is $-2$ for $n = 0$ or 2 for $n \to \infty$. It is, however, not guaranteed that the conditional expectations will reach $\mu_{X_{[1]}}$ or $\mu_{X_{[2]}}$. 
\end{example}

\begin{corollary}
\label{thm:ConditionalEX2givenN}

The conditional second moment of $(X \vert N=n)$, $n \in A_N$, is given by 
\begin{equation*} 
    E[X ^2 | N = n] 
    =  
    \sum_{(i_0,i_1) \in \{0,1\}^2}
    f_{I_0, I_1}(i_0, i_1) 
     \frac{\gamma_{N_{[1+i_0]}}(n)}{\gamma_N(n)}\mu^{(2)}_{X_{[1+i_1]}},
            \quad n \in A_N. \label{eq:EX2givenNequalkNo1}
\end{equation*}
or, equivalently,
\begin{equation*} 
    E[X^2 | N = n] 
    =  
    E[X^2] 
    +\frac{\theta_{01}}{4} 
    \frac{
    \left(\gamma_{N_{[2]}}(n)-\gamma_{N_{[1]}}(n)\right)}{\gamma_N(n)}\left(\mu_{X_{[2]}}^{(2)}-\mu_{X_{[1]}}^{(2)}\right),
            \quad n \in A_N. 
\end{equation*}
\end{corollary}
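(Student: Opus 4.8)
The plan is to obtain this corollary exactly as Corollary \ref{corol:ConditionalEXgivenN} was obtained, namely by a direct specialization of Theorem \ref{thm:componentresult}, the only change being the choice of test function applied to the severity. Concretely, I would set $\varphi_0(k)=\mathbbm{1}_{\{k=n\}}$, $\varphi_1(x)=x^2$, and $\varphi_2(x)=1$ in both displayed identities of Theorem \ref{thm:componentresult}. The first (combinatorial) form of the corollary then comes from the first equality of the theorem: since $\varphi_2\equiv 1$ gives $E[\varphi_2(X_{[1+i_2]})]=1$ for $i_2\in\{0,1\}$, summing the trivariate pmf over $i_2$ collapses it to the bivariate marginal $f_{I_0,I_1}(i_0,i_1)$. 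Combining this with $E[\varphi_0(N_{[1+i_0]})]=\gamma_{N_{[1+i_0]}}(n)$ and $E[\varphi_1(X_{[1+i_1]})]=\mu^{(2)}_{X_{[1+i_1]}}$, and dividing through by $E[\varphi_0(N)]=\gamma_N(n)$, yields the first stated expression.

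For the equivalent closed form I would work from \eqref{eq:asdfasdf}. The key simplification is that the constant test function on the second severity forces $\Delta(F_X;\varphi_2)=0$: substituting $\phi\equiv 1$ into the definition of $\Delta$ gives $\int_{\mathbb{R}}\diff F_X^2 - \int_{\mathbb{R}}\diff(1-\overline{F}_X^2) = 1 - 1 = 0$. Hence both the $\theta_{12}$ term and the $\theta_{012}$ term in \eqref{eq:asdfasdf} drop out, leaving only the independence contribution $E[\varphi_1(X)]E[\varphi_2(X)]=E[X^2]$ and the $\theta_{01}$ term. I would then substitute the identity $\Delta(F_N;\varphi_0)/E[\varphi_0(N)]=(\gamma_{N_{[2]}}(n)-\gamma_{N_{[1]}}(n))/\gamma_N(n)$ from \eqref{eq:correction-n} of Remark \ref{rem:mediane}, together with $E[\varphi_1(X_{[2]})]-E[\varphi_1(X_{[1]})]=\mu^{(2)}_{X_{[2]}}-\mu^{(2)}_{X_{[1]}}$, to recover the parametric form. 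The equivalence of the two forms is then immediate, being the same content read through the two equalities of the theorem.

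There is essentially no structural obstacle here, since the statement is a routine specialization of an already-proved identity; the proof is a one-line invocation of Theorem \ref{thm:componentresult}. The only points deserving a moment of care are bookkeeping rather than substance: first, confirming that $\Delta(F_X;\varphi_2)=0$ for the constant function so that the higher-order dependence terms genuinely vanish, and second, noting the implicit moment condition $E[X^2]<\infty$, which is exactly the requirement in Theorem \ref{thm:componentresult} that $E[\varphi_1(X)]$ exist. Given these, the result follows verbatim from the proof template of Corollary \ref{corol:ConditionalEXgivenN}.
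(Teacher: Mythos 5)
Your proposal is correct and is exactly the paper's own argument: the paper's proof is the one-line specialization of Theorem \ref{thm:componentresult} (via \eqref{eq:asdfasdf}) with $\varphi_0(k)=\mathbbm{1}_{\{k=n\}}$, $\varphi_1(x)=x^2$, $\varphi_2(x)=1$. Your additional bookkeeping — that $\Delta(F_X;\varphi_2)=0$ for the constant function, so the $\theta_{12}$ and $\theta_{012}$ terms vanish, and that summing over $i_2$ collapses the trivariate pmf to $f_{I_0,I_1}$ — just makes explicit the computation the paper leaves to the reader.
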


\begin{proof}
Under the assumptions of Theorem \ref{thm:componentresult} and, letting the functions $\varphi_0,\varphi_1,\varphi_2$ be defined as $\varphi_0(k)=\mathbbm{1}_{\{k=n\}}$, $\varphi_1(x)=x^2$ and $\varphi_2(x)=1$ in \eqref{eq:asdfasdf}, the desired result follows.
\end{proof}

\begin{corollary}\label{cor:correlations}  
Assume the expectations $E[N^k]$ and $E[X^k]$ exist for $k = 1, 2$. Then, 
\begin{enumerate}
    \item the covariance between $N$ and $X_1$ is given by
\begin{equation*} 
    Cov(N,X_1) 
    =  \frac{\theta_{01}}{4}
    (\mu_{N_{[2]}}-\mu_{N_{[1]}})
    (\mu_{X_{[2]}}-\mu_{X_{[1]}});
\end{equation*}
    \item the covariance between $X_1$ and $X_2$ is given by
\begin{equation*} 
    Cov(X_1,X_2) 
    =  \frac{\theta_{12}}{4}
    (\mu_{X_{[2]}}-\mu_{X_{[1]}})^2.
\end{equation*}
\end{enumerate}
\end{corollary}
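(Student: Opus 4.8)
The plan is to derive both covariances as immediate specializations of the master identity \eqref{eq:esperancefgh} in Theorem \ref{thm:componentresult}, choosing the test functions $\varphi_0, \varphi_1, \varphi_2$ so that the left-hand side is exactly the cross-moment of interest and so that all but one of the dependence terms on the right vanishes. The first thing I would record is the interpretation of the operator $\Delta$: since $F_Y^2$ is the cdf of $Y_{[2]} = \max(Y_1, Y_2)$ and $1 - \overline{F}_Y^2$ is the cdf of $Y_{[1]} = \min(Y_1, Y_2)$ for two iid copies of $Y$, the definition gives $\Delta(F_Y; \phi) = E[\phi(Y_{[2]})] - E[\phi(Y_{[1]})]$. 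In particular $\Delta(F_Y; \phi) = 0$ whenever $\phi$ is constant, and taking $\phi(y) = y$ yields $\Delta(F_N; \varphi_0) = \mu_{N_{[2]}} - \mu_{N_{[1]}}$ and $\Delta(F_X; \varphi_1) = \mu_{X_{[2]}} - \mu_{X_{[1]}}$. The hypotheses that $E[N^k]$ and $E[X^k]$ are finite for $k = 1, 2$ guarantee, via Cauchy--Schwarz, that $E[NX_1]$ and $E[X_1 X_2]$ are finite, so every expectation in \eqref{eq:esperancefgh} exists and the theorem applies.

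For part (1), I would set $\varphi_0(n) = n$, $\varphi_1(x) = x$, and $\varphi_2(x) = 1$, so that the left-hand side of \eqref{eq:esperancefgh} equals $E[NX_1]$. Because $\varphi_2$ is constant, $\Delta(F_X; \varphi_2) = 0$, which annihilates both the $\theta_{12}$ and the $\theta_{012}$ terms; moreover $E[\varphi_2(X_{[2]})] = E[\varphi_2(X_{[1]})] = 1$, so the $\theta_{01}$ term collapses to $\frac{\theta_{01}}{4}(\mu_{N_{[2]}} - \mu_{N_{[1]}})(\mu_{X_{[2]}} - \mu_{X_{[1]}})$. Subtracting the leading term $E[N]E[X]$ then gives $Cov(N, X_1)$ as stated.

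For part (2), I would instead set $\varphi_0(n) = 1$ and $\varphi_1(x) = \varphi_2(x) = x$, so the left-hand side becomes $E[X_1 X_2]$. Now $\varphi_0$ is constant, so $\Delta(F_N; \varphi_0) = 0$ eliminates the $\theta_{01}$ and $\theta_{012}$ terms, leaving the leading term $E[X]^2$ together with $\frac{\theta_{12}}{4}\Delta(F_X; \varphi_1)\Delta(F_X; \varphi_2) = \frac{\theta_{12}}{4}(\mu_{X_{[2]}} - \mu_{X_{[1]}})^2$. Subtracting $E[X]^2$ delivers $Cov(X_1, X_2)$.

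There is no genuine obstacle beyond the bookkeeping: the entire content is the observation that each covariance isolates exactly one dependence parameter because the complementary test function is chosen constant, forcing the corresponding $\Delta$ to vanish. The only point deserving a line of care is confirming finiteness of the product moments under the stated moment hypotheses, which is precisely what licenses the use of \eqref{eq:esperancefgh}.
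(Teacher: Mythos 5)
Your proof is correct and follows exactly the paper's own route: both covariances are obtained from identity \eqref{eq:esperancefgh} of Theorem \ref{thm:componentresult} with the same choices $\varphi_0(n)=n$, $\varphi_1(x)=x$, $\varphi_2(x)=1$ for $Cov(N,X_1)$ and $\varphi_0(n)=1$, $\varphi_1(x)=\varphi_2(x)=x$ for $Cov(X_1,X_2)$. The only difference is that you spell out the bookkeeping (the vanishing of $\Delta$ for constant test functions and the finiteness of the cross-moments) that the paper leaves implicit.
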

\begin{proof}
The expectation of the product of $N$ and $X_1$ is obtained with Theorem \ref{thm:componentresult} with functions $\varphi_0,\varphi_1,\varphi_2$ defined as $\varphi_0(n)=n$, $\varphi_1(x)=x$ and $\varphi_2(x)=1$ in \eqref{eq:esperancefgh}. As for the expectation of the product of $X_1$ and $X_2$, we let functions $\varphi_0(n)=1$, $\varphi_1(x)=x$ and $\varphi_2(x)=x$ in \eqref{eq:esperancefgh}. Both expressions of the covariance directly follow.
\end{proof}


    


\begin{remark} 
    In Corollaries \ref{corol:ConditionalEXgivenN}, \ref{thm:ConditionalEX2givenN} and \ref{cor:correlations}, one observes that the (conditional) moments or covariances only depend on the difference between moments of $X_{[2]}$ and $X_{[1]}$ or of $N_{[2]}$ and $N_{[1]}$. To study the impact of the frequency or claim amount rvs on the components of the CRM, we will rely on the dispersive order. Note that if $Z \preceq_{disp} Z^\dagger$, then we have from Theorem 3.B.31 of \cite{shaked2007stochastic} that one may order the spacings according to the usual stochastic order, that is, $(Z_{[2]} - Z_{[1]}) \preceq_{st} (Z_{[2]}^\dagger - Z_{[1]}^\dagger)$. Therefore, for increasing functions $f, g$ or $h$ in Theorem \ref{thm:componentresult}, $N \preceq_{disp} N^\dagger$ and $X \preceq_{disp} X^\dagger$ implies $E\left[\varphi_0(N)\varphi_1(X_1)\varphi_2(X_2)\right] \leq E\left[\varphi_0(N^\dagger)\varphi_1(X_1^\dagger)\varphi_2(X_2^\dagger)\right]$.
\end{remark}

We conclude this section by providing the conditional covariance of a pair of claim amount rvs given the claim number rv.

\begin{lemma}
For a CRM $(N, \underline{X}) \in \aleph^{FGM*}$, we have
\begin{align*}
    Cov(X_1, X_2  \vert N = n)&= \frac{\theta_{12}}{4} 
	\left(\mu_{X_{[2]}}-\mu_{X_{[1]}}\right)^2 \nonumber\\
	& \quad -\frac{\theta_{012}}{8} 
	\frac{\left(\gamma_{N_{[2]}}(n)-\gamma_{N_{[1]}}(n)\right)}{\gamma_N(n)}\left(\mu_{X_{[2]}}-\mu_{X_{[1]}}\right)^2\\
    & \quad -\frac{\theta^2_{01}}{4} 
	\frac{\left(\gamma_{N_{[2]}}(n)-\gamma_{N_{[1]}}(n)\right)^2}{\gamma_N(n)^2}\left(\mu_{X_{[2]}}-\mu_{X_{[1]}}\right)^2, \quad n \in A_N.
\end{align*}
\end{lemma}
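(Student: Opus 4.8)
The plan is to start from the definition of the conditional covariance, $Cov(X_1, X_2 \vert N = n) = E[X_1 X_2 \vert N = n] - E[X_1 \vert N = n]\, E[X_2 \vert N = n]$, and to assemble each of the two pieces from results already established for $\aleph^{FGM*}$. Both ingredients are special cases of Theorem \ref{thm:componentresult}, so no new machinery is needed; the work lies entirely in the algebraic simplification and, crucially, in verifying one cancellation.

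First I would compute $E[X_1 X_2 \vert N = n]$ by applying \eqref{eq:asdfasdf} with $\varphi_0(k) = \mathbbm{1}_{\{k = n\}}$ and $\varphi_1(x) = \varphi_2(x) = x$. Here $\Delta(F_X; \varphi_1) = \Delta(F_X; \varphi_2) = \mu_{X_{[2]}} - \mu_{X_{[1]}}$, the products $E[\varphi_1(X_{[1+i_1]})]\,E[\varphi_2(X_{[1+i_2]})]$ collapse to $\mu_{X_{[2]}}^2$ and $\mu_{X_{[1]}}^2$, and by \eqref{eq:correction-n} of Remark \ref{rem:mediane} the frequency factor $\Delta(F_N;\varphi_0)/E[\varphi_0(N)]$ equals $(\gamma_{N_{[2]}}(n) - \gamma_{N_{[1]}}(n))/\gamma_N(n)$. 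This presents $E[X_1 X_2 \vert N = n]$ as the sum of $E[X]^2$, a term linear in $\theta_{01}$ proportional to $\mu_{X_{[2]}}^2 - \mu_{X_{[1]}}^2$, the $\theta_{12}$ term $\frac{\theta_{12}}{4}(\mu_{X_{[2]}} - \mu_{X_{[1]}})^2$, and the $\theta_{012}$ term. Next I would take $E[X_1 \vert N = n] = E[X_2 \vert N = n]$ from Corollary \ref{corol:ConditionalEXgivenN} (the two are equal by exchangeability) and square it.

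Subtracting, the $E[X]^2$ contributions cancel, and the $\theta_{12}$ and $\theta_{012}$ pieces pass through untouched, immediately giving the first two lines of the claim. The delicate point — and the main obstacle — is the fate of the terms linear in $\theta_{01}$: the one coming from $E[X_1 X_2 \vert N = n]$ is proportional to $\mu_{X_{[2]}}^2 - \mu_{X_{[1]}}^2$, while the cross term $-2E[X]\,(\text{correction})$ arising from squaring the conditional mean is proportional to $\mu_{X_{[2]}} - \mu_{X_{[1]}}$. These must cancel exactly, and they do because $\mu_{X_{[2]}}^2 - \mu_{X_{[1]}}^2 = (\mu_{X_{[2]}} - \mu_{X_{[1]}})(\mu_{X_{[2]}} + \mu_{X_{[1]}})$ together with the order-statistics identity $\mu_{X_{[1]}} + \mu_{X_{[2]}} = 2E[X]$, valid since $X_{[1]} + X_{[2]} = X_1 + X_2$ for an i.i.d.\ pair. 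Once this cancellation is confirmed, the only surviving contribution from squaring the conditional mean is the negative square of its correction term, which is quadratic in $\theta_{01}$ and produces the final line of the statement. I would emphasize that isolating and checking this cancellation — rather than the routine substitutions — is where care is required, since it is precisely what removes the spurious linear dependence on $\theta_{01}$ and leaves the clean closed form.
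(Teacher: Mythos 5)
Your proof is correct and follows exactly the paper's own route: apply Theorem \ref{thm:componentresult} (via \eqref{eq:asdfasdf}) with $\varphi_0(k)=\mathbbm{1}_{\{k=n\}}$, $\varphi_1(x)=\varphi_2(x)=x$ to obtain $E[X_1X_2\mid N=n]$, subtract the square of the conditional mean from Corollary \ref{corol:ConditionalEXgivenN}, and cancel the terms linear in $\theta_{01}$ through $\mu_{X_{[1]}}+\mu_{X_{[2]}}=2E[X]$ — the cancellation you rightly single out, which the paper leaves implicit behind ``completes the proof.'' One remark: carried to the end, this computation (yours and the paper's alike) yields the quadratic coefficient $\theta_{01}^2/16$, i.e.\ the square of $\theta_{01}/4$, rather than the $\theta_{01}^2/4$ printed in the lemma, so the displayed statement appears to contain a typo rather than your argument a gap.
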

\begin{proof}
From Theorem \ref{thm:componentresult} with $\varphi_0(k) = \id_{\{k = n\}}$, $\varphi_1(x) = x$ and $\varphi_2(x) = x$, we have
\begin{align} 
    E[X_1 X_2 | N = n] 
    &=  
    E[X_1]^2 
    +\frac{\theta_{01}}{4} 
    \frac{\left(\gamma_{N_{[2]}}(n)-\gamma_{N_{[1]}}(n)\right)}{\gamma_N(n)}\left(\mu_{X_{[2]}}^2-\mu_{X_{[1]}}^2\right) +\frac{\theta_{12}}{4} 
    \left(\mu_{X_{[2]}}-\mu_{X_{[1]}}\right)^2 \nonumber\\
    &\qquad-\frac{\theta_{012}}{8} 
    \frac{\left(\gamma_{N_{[2]}}(n)-\gamma_{N_{[1]}}(n)\right)}{\gamma_N(n)}\left(\mu_{X_{[2]}}-\mu_{X_{[1]}}\right)^2
    ,\quad n \in A_N. \notag\label{eq:EX1X2givenNequalkNo2}
\end{align}
Using the definition 
$Cov(X_1, X_2 \vert N = n) = E[X_1X_2 \vert N = n] - E[X\vert N = n]^2$ and inserting \eqref{eq:EXgivenNequalk} completes the proof.
\end{proof}

\section{Analysis of aggregate claim amount rvs within CRMs}
\label{sect:AggregateClaimCRM}

In this section, we examine the distribution of the aggregate claim amount rv $S$ defined within a CRM $(N, \underline{X})\in \aleph^{FGM*}$, starting with its expectation and variance. Then, we derive the expression of its LST, and we show how to use the FFT algorithm to compute the probability masses of $S$ when the distribution of the claim amount rvs is discrete. Finally, we find exact methods to identify the distribution of the aggregate claim amount rv when the common distribution of the individual claim amount rvs belongs to the class of mixed Erlang distributions.

\subsection{Expectation of S}


The following theorem reveals the impact of the FGM dependence on the expectation of the aggregate claim amount rv $S$ within the family of CRMs within $\aleph^{FGM*}$. 

\begin{theorem}\label{thm:ES}
	Assume that $E[X]<\infty$ and $E[N] < \infty$. Then, the first moment of $S$ in \eqref{eq:stochastic-s-fgm} is
    	\begin{equation}\label{eq:expected-value-S-bern}
	E[S] 
    = \sum_{(i_0, i_1) \in \{0, 1\}^2}f_{I_0, I_1}(i_0, i_1) 
        \mu_{N_{[1 + i_0]}} \mu_{X_{[1 + i_1]}}, 
	    \end{equation} 
        or, equivalently,
        \begin{equation}
        \label{eq:expected-value-s-natural}
	E[S] 
        = E[S^{(\perp, \perp)}]
        + C_{Dep}^{Exp}(S),
	\end{equation}
where $E[S^{(\perp, \perp)}]$ is given 
in \eqref{eq:esperance-indep}
and 
\begin{equation}
\label{eq:ContributionEsperanceS}
    C_{Dep}^{Exp}(S) = \frac{\theta_{01}}{4} (\mu_{N_{[2]}}-\mu_{N_{[1]}}) (\mu_{X_{[2]}}-\mu_{X_{[1]}}), \quad \theta_{01} \in [-1,1],
\end{equation}
is the contribution of the dependence within the sequence $(N,\underline X)$ to $E[S]$. 
\end{theorem}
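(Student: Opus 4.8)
The plan is to work directly from the stochastic representation \eqref{eq:stochastic-s-fgm} of Theorem \ref{thm:stochastic-representation-s}, exploiting the fact that all of the dependence is carried by the symmetric Bernoulli sequence $\{I_j\}_{j \geq 0}$, which is independent of the two iid order-statistic arrays built from $N$ and from $X$ (and these two arrays are independent of one another). Writing the local shorthands $N^\star = (1-I_0)N_{[1]} + I_0 N_{[2]}$ and $X_j^\star = (1-I_j)X_{[1],j} + I_j X_{[2],j}$, I would take expectations term by term in $S \eqd \sum_{j \geq 1} X_j^\star\, \mathbbm{1}_{\{N^\star \geq j\}}$. Since the claim amounts are strictly positive, Tonelli's theorem legitimizes interchanging expectation with the infinite sum, giving $E[S] = \sum_{j \geq 1} E[X_j^\star\, \mathbbm{1}_{\{N^\star \geq j\}}]$.

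For each $j$, I would condition on the pair $(I_0, I_j)$. Because $\boldsymbol{I}$ is independent of the order statistics and the $N$-array is independent of the $X$-array, the conditional expectation factorizes as $E[X_j^\star\, \mathbbm{1}_{\{N^\star \geq j\}} \mid I_0 = i_0,\, I_j = i_j] = \mu_{X_{[1+i_j]}}\, \Pr(N_{[1+i_0]} \geq j)$. Invoking the exchangeability of $\{X_j\}$ (Assumption \ref{ass:CRMwithDependence}) to replace $f_{I_0, I_j}$ by $f_{I_0, I_1}$, summing over $j$, and using the elementary identity $\sum_{j \geq 1} \Pr(N_{[1+i_0]} \geq j) = \mu_{N_{[1+i_0]}}$ valid for a nonnegative integer-valued order statistic, I obtain exactly \eqref{eq:expected-value-S-bern}.

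To reach the equivalent form \eqref{eq:expected-value-s-natural}, I would substitute the bivariate symmetric Bernoulli pmf $f_{I_0,I_1}(i_0,i_1) = \tfrac14\bigl(1 + (-1)^{i_0+i_1}\theta_{01}\bigr)$ from \eqref{eq:theta-to-fi} and separate the two resulting sums. The constant term collapses via $\mu_{N_{[1]}} + \mu_{N_{[2]}} = 2E[N]$ and $\mu_{X_{[1]}} + \mu_{X_{[2]}} = 2E[X]$ (the sum of a minimum and a maximum equals the sum of the two iid copies), yielding $E[N]E[X] = E[\Sindep]$. The remaining term factors as $\tfrac{\theta_{01}}{4}\bigl(\sum_{i_0}(-1)^{i_0}\mu_{N_{[1+i_0]}}\bigr)\bigl(\sum_{i_1}(-1)^{i_1}\mu_{X_{[1+i_1]}}\bigr) = \tfrac{\theta_{01}}{4}(\mu_{N_{[2]}}-\mu_{N_{[1]}})(\mu_{X_{[2]}}-\mu_{X_{[1]}})$, which is precisely $C_{Dep}^{Exp}(S)$ in \eqref{eq:ContributionEsperanceS}.

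I expect the only genuinely delicate point to be the justification of the term-by-term expectation and the finiteness of the double sum; this is handled cleanly by nonnegativity (Tonelli) together with the hypotheses $E[N], E[X] < \infty$, which force $\mu_{N_{[1+i_0]}}, \mu_{X_{[1+i_1]}} < \infty$. Everything else is bookkeeping. As an independent cross-check, and arguably a shorter route, one can instead combine the Wald-type identity \eqref{eq:moment-s-general}, $E[S] = E[N\,E[X \mid N]]$, with the conditional-mean formula \eqref{eq:EXgivenNequalk} of Corollary \ref{corol:ConditionalEXgivenN}: the factor $\gamma_N(n)$ cancels, the correction weights $n\bigl(\gamma_{N_{[2]}}(n) - \gamma_{N_{[1]}}(n)\bigr)$ sum to $\mu_{N_{[2]}} - \mu_{N_{[1]}}$, and \eqref{eq:ContributionEsperanceS} reappears directly.
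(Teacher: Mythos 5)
Your proof is correct, but your main argument takes a genuinely different route from the paper's. The paper conditions on the claim count: it starts from the identity \eqref{eq:moment-s-general}, $E[S]=\sum_{n\geq 1} n\gamma_N(n)E[X\mid N=n]$, and inserts the two forms of the conditional mean from Corollary \ref{corol:ConditionalEXgivenN} — form \eqref{eq:EXgivenNequalkNo2} gives \eqref{eq:expected-value-S-bern} after exchanging the order of summation, and form \eqref{eq:EXgivenNequalk} gives \eqref{eq:expected-value-s-natural} after the $\gamma_N(n)$ cancels; this is precisely the ``cross-check'' you sketch in your final paragraph. Your primary argument instead works term by term in the series \eqref{eq:stochastic-s-fgm}, conditioning on the Bernoulli pair $(I_0,I_j)$ rather than on $N$, and closing with the tail-sum identity $\sum_{j\geq 1}\Pr\left(N_{[1+i_0]}\geq j\right)=\mu_{N_{[1+i_0]}}$. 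Both routes rest on the same foundation (Theorem \ref{thm:stochastic-representation-s} and the independence of $\boldsymbol{I}$ from the two order-statistic arrays), but yours is more self-contained: it bypasses Theorem \ref{thm:componentresult} and Corollary \ref{corol:ConditionalEXgivenN} entirely, it explicitly justifies the interchange of expectation and infinite summation (Tonelli plus $\mu_{N_{[2]}}\leq 2E[N]$ and $\mu_{X_{[2]}}\leq 2E[X]$), and it makes visible exactly where Assumption \ref{ass:CRMwithDependence} enters, namely through $\theta_{0j}=\theta_{01}$, hence $f_{I_0,I_j}=f_{I_0,I_1}$ for all $j\geq 1$. What the paper's route buys is brevity and reuse of machinery already established for the components of the CRM; what yours buys is a direct, convergence-aware derivation from the representation that the theorem statement itself refers to. One small notational slip: each factor in your final factorization is $\sum_{i}(-1)^{i}\mu_{Z_{[1+i]}}=\mu_{Z_{[1]}}-\mu_{Z_{[2]}}$, i.e.\ the negative of the spacing; since both factors flip sign, their product is unchanged and your identity \eqref{eq:ContributionEsperanceS} stands.
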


\begin{proof}
Inserting \eqref{eq:EXgivenNequalkNo2} from Corollary \ref{corol:ConditionalEXgivenN} in \eqref{eq:moment-s-general}, we have 
\begin{align*}
    E[S] 
    &=
    \sum_{n = 1}^{\infty} n \gamma_N(n) E[X \vert N = n] 
    \notag =
    \sum_{n = 1}^{\infty} n \gamma_N(n) 
    \sum_{(i_0,i_1) \in \{0,1\}^2} f_{I_0, I_1}(i_0, i_1) 
     \frac{\gamma_{N_{[1+i_0]}}(n)}{\gamma_N(n)}\mu_{X_{[1+i_1]}} 
    \notag \\
    &=
    \sum_{(i_0,i_1) \in \{0,1\}^2} f_{I_0, I_1}(i_0, i_1)
    \sum_{n = 1}^{\infty} n \gamma_{N_{[1+i_0]}}(n) \mu_{X_{[1+i_1]}}
    =
    \sum_{(i_0,i_1) \in \{0,1\}^2} f_{I_0, I_1}(i_0, i_1)
   \mu_{N_{[1+i_0]}}(n) \mu_{X_{[1+i_1]}},
\end{align*}
where the latter corresponds to the expression in \eqref{eq:expected-value-S-bern}.
Using \eqref{eq:EXgivenNequalk} from Corollary \ref{corol:ConditionalEXgivenN} in \eqref{eq:moment-s-general}, we have
\begin{align}
    E[S] 
    &= 
    \sum_{n = 1}^{\infty} n \gamma_N(n) E[X \vert N = n] 
    \notag \\
    & = 
    \sum_{n = 1}^{\infty} n \gamma_N(n) E[X] 
    + 
    \frac{\theta_{01}}{4} 
    \sum_{n = 1}^{\infty} n \gamma_N(n) 
    \frac{(\gamma_{N_{[2]}} - \gamma_{N_{[1]}}) (\mu_{X_{[2]}} - \mu_{X_{[1]}}) }{\gamma_N(n)}
    \notag \\
    & =
    E[N] E[X]
    + \frac{\theta_{01}}{4}  (\mu_{N_{[2]}} - \mu_{N_{[1]}}) (\mu_{X_{[2]}} - \mu_{X_{[1]}}),
    \notag\label{eq:PreuveES}
\end{align}
which is the desired result in \eqref{eq:expected-value-s-natural}.
\end{proof}

\begin{remark}
\label{rem:AnalysisES}
The expectation of $S$ in \eqref{eq:expected-value-s-natural} consists of the sum of two components. The first one is the expectation of the aggregate claim amount rv within the classical CRM. The second component, $C_{Dep}^{Exp}(S)$, reveals the contribution to the $E[S]$ of the positive or negative relation of dependence between the claim number rv $N$ and the sequence of claim amount rvs. The decomposition \eqref{eq:expected-value-s-natural} allows us to make the two following observations:
\begin{enumerate}
    \item \textbf{Impact of the dependence between the claim number rv and the sequence of the claim amount rvs}. 
    Since $(\mu_{N_{[2]}} - \mu_{N_{[1]}}) \geq 0$ 
    and $(\mu_{X_{[2]}} - \mu_{X_{[1]}}) \geq 0$ for any distribution of $N$ and $X$, a positive (negative) dependence relation between the claim number rv $N$ and the sequence of claim amount rvs induces a positive (negative) value to the dependence parameter $\theta_{01}$ leading to a positive (negative) value for the component $C_{Dep}^{Exp}(S)$.
    
    \item \textbf{Impact of the marginals}. 
    The marginal distributions of $N$ and $X$ have an impact on the second component $C_{Dep}^{Exp}(S)$, as laid out in Lemma \ref{lem:ImpactMarginals}.  
\end{enumerate}
\begin{lemma} \label{lem:ImpactMarginals}
Consider two CRMs $(N, \underline{X}) \in \aleph^{FGM*}(F_N, F_X)$ and $(N^{\dagger}, \underline{X}^{\dagger}) \in \aleph^{FGM*}(F_{N^\dagger}, F_{X^\dagger})$ such that the parameters $\theta_{01}$ of both CRMs coincide. If $N \preceq_{cx} N^{\dagger}$ 
and $X \preceq_{cx} X^{\dagger}$, 
then $C_{Dep}^{Exp}(S) \leq C_{Dep}^{Exp}(S^{\dagger})$.   
\end{lemma}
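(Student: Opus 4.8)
The plan is to recognize each factor $\mu_{Y_{[2]}}-\mu_{Y_{[1]}}$, for $Y\in\{N,X\}$, as a Gini-type mean difference, and then to prove that this functional is monotone with respect to the convex order. First I would record the elementary identity that, for two iid copies $Y_1,Y_2$ of $Y$, one has $Y_{[2]}-Y_{[1]}=\max(Y_1,Y_2)-\min(Y_1,Y_2)=|Y_1-Y_2|$, and therefore
\[
\mu_{Y_{[2]}}-\mu_{Y_{[1]}}=E\!\left[\,|Y_1-Y_2|\,\right].
\]
This reduces the lemma to the two claims $E|N_1-N_2|\le E|N_1^{\dagger}-N_2^{\dagger}|$ and $E|X_1-X_2|\le E|X_1^{\dagger}-X_2^{\dagger}|$ under the hypotheses $N\preceq_{cx}N^{\dagger}$ and $X\preceq_{cx}X^{\dagger}$.

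The core step, which I expect to be the main obstacle, is establishing that the mean difference is monotone in the convex order. For a generic pair $Y\preceq_{cx}Y^{\dagger}$ I would argue by conditioning. Define the mean-absolute-deviation functions $h(y)=E|Y-y|$ and $h^{\dagger}(y)=E|Y^{\dagger}-y|$; both are convex in $y$, being expectations of the convex maps $y\mapsto|u-y|$. For each fixed $y$ the map $x\mapsto|x-y|$ is convex, so the convex-order hypothesis gives the pointwise bound $h(y)\le h^{\dagger}(y)$. Writing $E|Y_1-Y_2|=E[h(Y_2)]$ by independence of $Y_1,Y_2$, I would then chain
\[
E|Y_1-Y_2|=E[h(Y_2)]\le E[h^{\dagger}(Y_2)]\le E[h^{\dagger}(Y_2^{\dagger})]=E|Y_1^{\dagger}-Y_2^{\dagger}|,
\]
where the first inequality is monotone integration of the pointwise bound $h\le h^{\dagger}$ against the law of $Y$, and the second applies the definition of $\preceq_{cx}$ to the convex function $h^{\dagger}$ together with $Y_2\preceq_{cx}Y_2^{\dagger}$. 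The argument is distribution-free, so it applies verbatim to the discrete frequency $N$ and to the (continuous or discrete) severity $X$, using only finiteness of $E[N]$ and $E[X]$ to keep every quantity well defined.

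Finally I would assemble the pieces. The monotonicity step yields $0\le \mu_{N_{[2]}}-\mu_{N_{[1]}}\le \mu_{N^{\dagger}_{[2]}}-\mu_{N^{\dagger}_{[1]}}$ and $0\le \mu_{X_{[2]}}-\mu_{X_{[1]}}\le \mu_{X^{\dagger}_{[2]}}-\mu_{X^{\dagger}_{[1]}}$, the nonnegativity having already been noted before the lemma. Multiplying two ordered nonnegative quantities preserves the order, so the product $(\mu_{N_{[2]}}-\mu_{N_{[1]}})(\mu_{X_{[2]}}-\mu_{X_{[1]}})$ is dominated by its dagger counterpart. Since both CRMs share the same $\theta_{01}$, scaling by $\theta_{01}/4\ge 0$ gives $C_{Dep}^{Exp}(S)\le C_{Dep}^{Exp}(S^{\dagger})$; the substantive content is the product inequality, and I would flag explicitly that the scaling factor $\theta_{01}$ must be nonnegative for the stated direction (under negative dependence both contributions are nonpositive and the inequality reverses).
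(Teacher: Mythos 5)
Your proof is correct and, in its key step, more elementary than the paper's. The paper rewrites the dependence contribution as $C_{Dep}^{Exp}(S)=\theta_{01}\left(\mu_{N_{[2]}}-\mu_{N}\right)\left(\mu_{X_{[2]}}-\mu_{X}\right)$ (using $\mu_{Y_{[1]}}+\mu_{Y_{[2]}}=2\mu_{Y}$) and then cites Proposition 3.4.25(ii) of \cite{denuit2006actuarial} to obtain $\mu_{Y_{[2]}}\leq\mu_{Y_{[2]}^{\dagger}}$ and $\mu_{Y}=\mu_{Y^{\dagger}}$ under $Y\preceq_{cx}Y^{\dagger}$, for $Y\in\{N,X\}$. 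You instead identify the spacing $\mu_{Y_{[2]}}-\mu_{Y_{[1]}}=E\left[\,|Y_1-Y_2|\,\right]$ as a Gini mean difference and prove its convex-order monotonicity from scratch via the chain $E[h(Y_2)]\leq E[h^{\dagger}(Y_2)]\leq E[h^{\dagger}(Y_2^{\dagger})]$; this is valid, since $h^{\dagger}$ is convex, the pointwise bound $h\leq h^{\dagger}$ follows from applying $\preceq_{cx}$ to the convex map $x\mapsto|x-y|$ for each fixed $y$, and independence of the two copies justifies $E|Y_1-Y_2|=E[h(Y_2)]$. The two routes are equivalent in substance: since $\max(a,b)=\tfrac{1}{2}\left(a+b+|a-b|\right)$, ordering expected maxima under equal means is the same statement as ordering Gini mean differences. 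What your version buys is a self-contained argument that avoids the textbook citation, at the cost of a few extra lines.

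Your closing flag is moreover a genuine catch that the paper's proof glosses over: the stated conclusion $C_{Dep}^{Exp}(S)\leq C_{Dep}^{Exp}(S^{\dagger})$ requires $\theta_{01}\geq 0$. The paper multiplies the ordered nonnegative factors by the common parameter $\theta_{01}$ without comment; when $\theta_{01}<0$ the inequality reverses (strictly whenever the two products differ), so the lemma as stated needs either the restriction $\theta_{01}\geq 0$ or an absolute-value formulation such as $\left|C_{Dep}^{Exp}(S)\right|\leq\left|C_{Dep}^{Exp}(S^{\dagger})\right|$, which is what the convex-order hypotheses actually deliver for every admissible $\theta_{01}$.
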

\begin{proof} 
We rewrite the expression $C_{Dep}^{Exp}(S)$ in 
\eqref{eq:ContributionEsperanceS} 
as follows
\begin{equation}
\label{eq:ContributionEsperanceSNo2}
    C_{Dep}^{Exp}(S) = \theta_{01} (\mu_{N_{[2]}}-\mu_{N}) (\mu_{X_{[2]}}-\mu_{X}), \quad \theta_{01} \in [-1,1].
\end{equation}
When $N \preceq_{cx} N^{\dagger}$ 
and $X \preceq_{cx} X^{\dagger}$,
it follows from Proposition 3.4.25(ii) of \cite{denuit2006actuarial} that 
\begin{equation}
\label{eq:JeVaisSouper}
\mu_{N_{[2]}} \leq \mu_{N_{[2]}^{\dagger}}, \quad
\mu_{X_{[2]}} \leq \mu_{X_{[2]}^{\dagger}}, \quad
\mu_{N} = \mu_{N^{\dagger}}, 
\quad \text{and} \quad
\mu_{X} = \mu_{X^{\dagger}}. 
\end{equation}
Combining \eqref{eq:ContributionEsperanceSNo2}
and \eqref{eq:JeVaisSouper}, 
we find that
$C_{Dep}^{Exp}(S) \leq C_{Dep}^{Exp}(S^{\dagger})$. 
\end{proof}

\end{remark}

\subsection{Variance of S}

In \eqref{eq:var-general}, we have provided the general formula for a CRM satisfying Assumption \ref{ass:CRMwithDependence}. We now provide an expression for the variance of the aggregate claim amount rv within a CRM $(N, \underline{X}) \in \aleph^{FGM*}$. 

\begin{theorem}
\label{thm:var-s}
Consider a CRM $(N, \underline{X}) \in \aleph^{FGM*}$. Assume $E[X^k] <  \infty$ and $E[N^k] < \infty$, for $k = 1,2$. The expressions of the three terms of the decomposition in \eqref{eq:var-general} of the variance of $S$ are given by
\begin{align}
	E[NVar(X \vert N)]
	&=E[N]Var(X) + \frac{\theta_{01}}{4}\left(\mu_{N_{[2]}} - \mu_{N_{[1]}}\right)\left(Var(X_{[2]}) - Var(X_{[1]})\right) \notag \\
	&\qquad\qquad - \frac{\theta_{01}^2}{16}\left(\mu_{X_{[2]}} - \mu_{X_{[1]}}\right)^2
		\sum_{n = 1}^{\infty} \gamma_{N}(n) n  \left(\frac{\gamma_{N_{[2]}}(n) - \gamma_{N_{[1]}}(n)}{\gamma_{N}(n)}\right)^2;
  \label{eq:VarSComponentNo1}
\end{align}
\begin{align}
	E[N(N-1)Cov(X_1, X_2\vert N)]
	&= \frac{\theta_{12}}{4}\left(\mu_{N}^{(2)} - \mu_{N}\right)\left(\mu_{X_{[2]}}-\mu_{X_{[1]}}\right)^2 \notag \\
	&\qquad\qquad -\frac{\theta_{012}}{8}\left(\mu_{N_{[2]}}^{(2)} - \mu_{N_{[2]}} - \mu_{N_{[1]}}^{(2)} + \mu_{N_{[1]}}\right)\left(\mu_{X_{[2]}}-\mu_{X_{[1]}}\right)^2 \notag \\
	&\qquad\qquad- \frac{\theta_{01}^2}{16}\left(\mu_{X_{[2]}}-\mu_{X_{[1]}}\right)^2 \sum_{n = 1}^{\infty}\gamma_{N}(n)n^2\left(\frac{\gamma_{N_{[2]}}(n)-\gamma_{N_{[1]}}(n)}{\gamma_N(n)}\right)^2 \notag \\
	&\qquad\qquad+ \frac{\theta_{01}^2}{16}\left(\mu_{X_{[2]}}-\mu_{X_{[1]}}\right)^2 \sum_{n = 1}^{\infty}\gamma_{N}(n)n\left(\frac{\gamma_{N_{[2]}}(n)-\gamma_{N_{[1]}}(n)}{\gamma_N(n)}\right)^2;
 \label{eq:VarSComponentNo2}
\end{align}
\begin{align}
	Var(NE[X \vert N])
	&= Var(N)E\left[X\right]^2 + \frac{\theta_{01}}{4}\left(Var\left(N_{[2]}\right) - Var\left(N_{[1]}\right)\right)\left(\mu_{X_{[2]}}^{2} - \mu_{X_{[1]}}^{2}\right) \notag \\
	&\qquad\qquad + \frac{\theta_{01}}{8}\left(\mu_{N_{[2]}}^2 - \mu_{N_{[1]}}^2\right)\left(\mu_{X_{[2]}}^2 - \mu_{X_{[1]}}^2\right) \notag\\
	&\qquad\qquad + \frac{\theta_{01}^2}{16}\left(\mu_{X_{[2]}} - \mu_{X_{[1]}}\right)^2\sum_{n = 1}^\infty \gamma_N(n)n^2\left(\frac{\gamma_{N_{[2]}}(n) - \gamma_{N_{[1]}}(n)}{\gamma_{N}(n)}\right)^2 \notag \\
	&\qquad\qquad - \frac{\theta_{01}^2}{16}\left(\mu_{N_{[2]}} - \mu_{N_{[1]}}\right)^2\left(\mu_{X_{[2]}} - \mu_{X_{[1]}}\right)^2.
 \label{eq:VarSComponentNo3}
\end{align}
The variance of $S$ also has the following decomposition: 
\begin{equation}
    Var(S) 
    = 
    Var\left(S^{(\perp,\perp)}\right) + C_{Dep}^{Var}(S),
    \label{eq:VarS}
\end{equation}
where $Var\left(S^{(\perp,\perp)}\right)$ is as given in \eqref{eq:var-indep} and 
\begin{align}
	C_{Dep}^{Var}(S) 
    &=  
    \frac{\theta_{01}}{4} \left[ \left(\mu_{N_{[2]}}           - \mu_{N_{[1]}}\right)           \left(Var\left(X_{[2]}\right) - Var\left(X_{[1]}\right)\right) +\right. \notag \\
	& \qquad\qquad           \left(Var\left(N_{[2]}\right) - Var\left(N_{[1]}\right)\right) \left(\mu_{X_{[2]}}^2         - \mu_{X_{[1]}}^2\right) +  \notag  \\
	& \qquad\qquad \left.    \frac{1}{2}\left(\mu_{N_{[2]}}^2         - \mu_{N_{[1]}}^2\right)     \left(\mu_{X_{[2]}}^2         - \mu_{X_{[1]}}^2\right) \right] \notag \\
	&\qquad - \frac{\theta_{01}^2}{16} \left(\mu_{N_{[2]}}-\mu_{N_{[1]}}\right)^2\left(\mu_{X_{[2]}}-\mu_{X_{[1]}}\right)^2 \notag \\
	& \qquad + \frac{\theta_{12}}{4}\left(\mu^{(2)}_{N}-\mu_{N}\right)\left(\mu_{X_{[2]}}-\mu_{X_{[1]}}\right)^2  \notag \\
	& \qquad 
	- \frac{\theta_{012}}{8}\left[\left(\mu^{(2)}_{N_{[2]}} - \mu_{N_{[2]}}\right) - \left(\mu^{(2)}_{N_{[1]}} - \mu_{N_{[1]}}\right)\right]\left(\mu_{X_{[2]}}-\mu_{X_{[1]}}\right)^2. 
 \label{eq:ContributionVarS}
\end{align} 
\end{theorem}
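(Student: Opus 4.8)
The plan is to evaluate the three summands of the decomposition \eqref{eq:var-general} one at a time, since that decomposition already holds for any CRM in $\aleph^{*}$ and hence in $\aleph^{FGM*}$. The raw materials are the closed-form conditional quantities established earlier: $E[X\vert N=n]$ from Corollary \ref{corol:ConditionalEXgivenN}, $E[X^2\vert N=n]$ from Corollary \ref{thm:ConditionalEX2givenN}, and $Cov(X_1,X_2\vert N=n)$ from the preceding Lemma. Throughout I would abbreviate by $r(n):=\left(\gamma_{N_{[2]}}(n)-\gamma_{N_{[1]}}(n)\right)/\gamma_N(n)$ the recurring median-correction factor, which is exactly $\Delta(F_N;\varphi_0)/E[\varphi_0(N)]$ for $\varphi_0(k)=\mathbbm{1}_{\{k=n\}}$ (Remark \ref{rem:mediane}). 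The two collapse identities I would lean on are $\sum_n n\,\gamma_N(n)\,r(n)=\mu_{N_{[2]}}-\mu_{N_{[1]}}$ and $\sum_n n^2\gamma_N(n)\,r(n)=\mu^{(2)}_{N_{[2]}}-\mu^{(2)}_{N_{[1]}}$, both immediate from $\gamma_N(n)\,r(n)=\gamma_{N_{[2]}}(n)-\gamma_{N_{[1]}}(n)$, together with the order-statistic mean relations $\mu_{X_{[1]}}+\mu_{X_{[2]}}=2E[X]$ and $\mu_{N_{[1]}}+\mu_{N_{[2]}}=2E[N]$, which hold because $X_{[1]}+X_{[2]}=X_1+X_2$ for the two iid copies.

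For \eqref{eq:VarSComponentNo1} I would write $Var(X\vert N=n)=E[X^2\vert N=n]-\left(E[X\vert N=n]\right)^2$, substitute the two corollaries, multiply by $n\,\gamma_N(n)$ and sum over $n$. The linear-in-$r(n)$ terms telescope through the collapse identities, while squaring the conditional mean leaves the non-collapsing sum $\sum_n \gamma_N(n)\,n\,r(n)^2$ with coefficient $-\theta_{01}^2/16$. The one algebraic move to spot is that $2E[X](\mu_{X_{[2]}}-\mu_{X_{[1]}})=\mu_{X_{[2]}}^2-\mu_{X_{[1]}}^2$, which turns $\mu^{(2)}_{X_{[2]}}-\mu^{(2)}_{X_{[1]}}-2E[X](\mu_{X_{[2]}}-\mu_{X_{[1]}})$ into $Var(X_{[2]})-Var(X_{[1]})$. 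For \eqref{eq:VarSComponentNo2} I would insert the Lemma's conditional covariance, multiply by $n(n-1)\gamma_N(n)$ and sum, using $\sum_n n(n-1)\gamma_N(n)=\mu_N^{(2)}-\mu_N$ for the $\theta_{12}$ term and the analogous second-factorial collapse $\sum_n n(n-1)\left(\gamma_{N_{[2]}}(n)-\gamma_{N_{[1]}}(n)\right)=\left(\mu^{(2)}_{N_{[2]}}-\mu_{N_{[2]}}\right)-\left(\mu^{(2)}_{N_{[1]}}-\mu_{N_{[1]}}\right)$ for the $\theta_{012}$ term; the $\theta_{01}^2$ piece splits, via $n(n-1)=n^2-n$, into the two non-collapsing sums weighted by $n^2$ and by $n$. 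For \eqref{eq:VarSComponentNo3} I would use $Var(NE[X\vert N])=E[(NE[X\vert N])^2]-E[S]^2$, expand the square of $E[X\vert N=n]$ from Corollary \ref{corol:ConditionalEXgivenN}, and borrow $E[S]$ from Theorem \ref{thm:ES}; here the same symmetry identities convert the cross terms into the differences of squared means displayed in \eqref{eq:VarSComponentNo3}, and the $n^2$-weighted non-collapsing sum appears with coefficient $+\theta_{01}^2/16$.

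The final decomposition \eqref{eq:VarS}--\eqref{eq:ContributionVarS} then follows by adding the three components. The independence contributions $E[N]Var(X)$ (from the first component) and $Var(N)E[X]^2$ (from the third) reassemble into $Var\left(S^{(\perp,\perp)}\right)$ via \eqref{eq:var-indep}. The step I expect to be the real crux --- and the reason the total is clean despite each component separately carrying an intractable sum --- is the cancellation of the two non-collapsing sums $\sum_n \gamma_N(n)\,n\,r(n)^2$ and $\sum_n \gamma_N(n)\,n^2\,r(n)^2$: the former enters with coefficient $-\theta_{01}^2/16$ in \eqref{eq:VarSComponentNo1} and $+\theta_{01}^2/16$ in \eqref{eq:VarSComponentNo2}, and the latter with $-\theta_{01}^2/16$ in \eqref{eq:VarSComponentNo2} and $+\theta_{01}^2/16$ in \eqref{eq:VarSComponentNo3}, so both vanish from the sum. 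What survives is precisely the closed-form $C_{Dep}^{Var}(S)$ of \eqref{eq:ContributionVarS}. The bulk of the work is therefore careful bookkeeping of the $\theta_{01}$, $\theta_{01}^2$, $\theta_{12}$, and $\theta_{012}$ coefficients rather than any single delicate estimate; the only genuinely load-bearing facts are the order-statistic mean relations and the moment collapses of the correction factor $r(n)$.
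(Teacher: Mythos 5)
Your proposal is correct and takes essentially the same route as the paper: the paper's (very terse) proof likewise applies the stochastic representation---in the form of the conditional moment results of Corollaries \ref{corol:ConditionalEXgivenN} and \ref{thm:ConditionalEX2givenN} and the conditional covariance Lemma---to each of the three terms of \eqref{eq:var-general} and then recombines them ``after rearrangements,'' which is exactly the cancellation of the two non-collapsing sums $\sum_n \gamma_N(n)\,n\,r(n)^2$ and $\sum_n \gamma_N(n)\,n^2\,r(n)^2$ that you identify. One minor remark: the coefficient $-\theta_{01}^2/16$ that you (correctly) carry on the quadratic term of $Cov(X_1,X_2\,\vert\, N=n)$ is what the computation yields and what \eqref{eq:VarSComponentNo2} requires, even though the paper's Lemma displays that coefficient as $-\theta_{01}^2/4$, an apparent typo.
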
 

\begin{proof}
    Applying Theorem \ref{thm:stochastic-representation-s} to each term of the decomposition of $Var(S)$ in \eqref{eq:var-general}, we find the components \eqref{eq:VarSComponentNo1}, \eqref{eq:VarSComponentNo2}, and
    \eqref{eq:VarSComponentNo3}.
    Replacing the three terms in Theorem \ref{thm:var-s} in \eqref{eq:var-general} and after rearrangements, 
    we find \eqref{eq:VarS} and \eqref{eq:ContributionVarS}. 
\end{proof}

It is natural to ask if any of the three contributions to the variance in Theorem \ref{thm:var-s} dominates the others. In the appendix supplement, we illustrate that this is not the case through a numerical example, where we see that the magnitude of each component depends on the distribution of claim amounts, claim counts and the dependence structure.

From Theorem \ref{thm:ES} and Theorem \ref{thm:var-s}, we derive expressions of the mean and variance of the aggregate claim amount rv with the specific dependence structures presented in Examples \ref{ex:I-max-dep} and \ref{ex:I-max-dep-reverse}. 

\begin{corollary}\label{cor:moments-mix-interpretation}
The first moment and the variance of $S^{\left( \bigtriangleup,\bigtriangleup\right)}$ are respectively given by
	$$E\left[S^{\left( \bigtriangleup,\bigtriangleup\right)}\right] = \frac{1}{2}\mu_{N_{[1]}}\mu_{X_{[1]}} + \frac{1}{2}\mu_{N_{[2]}}\mu_{X_{[2]}};$$
	\begin{align*}
		Var\left(\Spp\right) &= \frac{1}{2} \left\{\mu_{N_{[1]}}  Var\left(X_{[1]}\right) + Var\left(N_{[1]}\right)  \mu_{X_{[1]}}^2\right\}\\
		&\quad +\frac{1}{2} \left\{\mu_{N_{[2]}}  Var\left(X_{[2]}\right) + Var\left(N_{[2]}\right)  \mu_{X_{[2]}}^2\right\} \\
		& \quad +\left(\frac{\mu_{N_{[2]}}\mu_{X_{[2]}}-\mu_{N_{[1]}}\mu_{X_{[1]}}}{2}\right)^2.
	\end{align*}
 Similarly, the first moment and the variance of $S^{\left( \bigtriangledown,\bigtriangleup\right)}$ are 
	$$E\left[S^{\left( \bigtriangledown,\bigtriangleup\right)}\right] = \frac{1}{2}\mu_{N_{[1]}}\mu_{X_{[2]}} + \frac{1}{2}\mu_{N_{[2]}}\mu_{X_{[1]}};$$
	\begin{align*}
		Var\left(\Smp\right) &= \frac{1}{2} \left\{\mu_{N_{[1]}}  Var\left(X_{[2]}\right) + Var\left(N_{[1]}\right)  \mu_{X_{[2]}}^2\right\}\\
		&\quad +\frac{1}{2} \left\{\mu_{N_{[2]}}  Var\left(X_{[1]}\right) + Var\left(N_{[2]}\right)  \mu_{X_{[1]}}^2\right\} \\
		& \quad +\left(\frac{\mu_{N_{[2]}}\mu_{X_{[1]}} - \mu_{N_{[1]}}\mu_{X_{[2]}}}{2}\right)^2.
	\end{align*}
\end{corollary}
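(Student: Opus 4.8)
The plan is to avoid re-deriving everything from Theorem~\ref{thm:ES} and Theorem~\ref{thm:var-s}, and instead exploit the transparent mixture structure exposed in Examples~\ref{ex:I-max-dep} and \ref{ex:I-max-dep-reverse}. The representation \eqref{eq:DefSexemple1} shows that, conditionally on the single symmetric Bernoulli rv $I$, the rv $\Spp$ is a \emph{classical} CRM: given $I=0$ its frequency is distributed as $N_{[1]}$ with iid severities distributed as $X_{[1]}$, and given $I=1$ its frequency is distributed as $N_{[2]}$ with iid severities distributed as $X_{[2]}$, each case occurring with probability $\tfrac{1}{2}$. Likewise \eqref{eq:DefSexemple2} shows $\Smp$ is conditionally classical, but with the order statistics paired in the opposite order, namely $(N_{[2]}, X_{[1]})$ on $\{I=0\}$ and $(N_{[1]}, X_{[2]})$ on $\{I=1\}$. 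The conditional independence needed here holds because the iid $N$-sample used to form $N_{[1]}, N_{[2]}$ is independent of the iid $X$-samples generating $\{X_{[1],j}\}$ and $\{X_{[2],j}\}$.

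First I would obtain the expectations by conditioning on $I$ and applying Wald's equation \eqref{eq:esperance-indep} to each conditional classical CRM. For $\Spp$ this gives $E[\Spp \mid I=0] = \mu_{N_{[1]}}\mu_{X_{[1]}}$ and $E[\Spp \mid I=1] = \mu_{N_{[2]}}\mu_{X_{[2]}}$, so $E[\Spp] = \tfrac{1}{2}\mu_{N_{[1]}}\mu_{X_{[1]}} + \tfrac{1}{2}\mu_{N_{[2]}}\mu_{X_{[2]}}$; the expression for $E[\Smp]$ follows identically after the relabeling above.

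Next I would compute the variance via the law of total variance, $Var(\Spp) = E[Var(\Spp \mid I)] + Var(E[\Spp \mid I])$. The first term is the $\tfrac{1}{2}$--$\tfrac{1}{2}$ average of the two conditional variances, each supplied by the classical formula \eqref{eq:var-indep}, which reproduces the first two braced lines of the claim. The second term is the variance of $E[\Spp \mid I]$, a two-valued rv equal to $\mu_{N_{[1]}}\mu_{X_{[1]}}$ or $\mu_{N_{[2]}}\mu_{X_{[2]}}$, each with probability $\tfrac{1}{2}$; since a symmetric two-point rv on $\{a, b\}$ has variance $\bigl(\tfrac{b-a}{2}\bigr)^2$, this yields precisely the final squared term. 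The argument for $\Smp$ is the same with the relabeled order statistics.

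The only point that requires care---and the closest thing to an obstacle---is justifying that each conditional law is genuinely a classical CRM, that is, that given $I$ the relevant frequency order statistic is independent of its severity sequence and that this sequence is iid; this is exactly what Theorem~\ref{thm:stochastic-representation-s} delivers once $\{I_j\}_{j \geq 0}$ collapses to the single rv $I$ as in the two examples. Everything else is direct substitution. As a consistency check, one may alternatively evaluate \eqref{eq:expected-value-s-natural} and \eqref{eq:ContributionVarS} at $(\theta_{01}, \theta_{12}, \theta_{012}) = (1, 1, 0)$ for $\Spp$ and at $(-1, 1, 0)$ for $\Smp$, the parameter values corresponding to the copulas \eqref{eq:epd} and \eqref{eq:epdno2}, and recover the same formulas.
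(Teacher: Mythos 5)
Your proof is correct, but it takes a genuinely different route from the paper. The paper obtains this corollary by specializing its general moment formulas: one substitutes the parameter values $(\theta_{01},\theta_{12},\theta_{012})=(1,1,0)$ for $\Spp$ and $(-1,1,0)$ for $\Smp$ into \eqref{eq:expected-value-s-natural} of Theorem \ref{thm:ES} and into \eqref{eq:VarS}--\eqref{eq:ContributionVarS} of Theorem \ref{thm:var-s}, and simplifies --- which is exactly what you relegate to a ``consistency check.'' You instead exploit the fact that when $\{I_j\}_{j\geq 0}$ is driven by a single symmetric Bernoulli rv $I$, the representations \eqref{eq:DefSexemple1} and \eqref{eq:DefSexemple2} exhibit $\Spp$ and $\Smp$ as fifty-fifty mixtures of two \emph{classical} CRMs (the conditional independence and iid structure you invoke does hold, since the $N$-pair and the $X$-pairs underlying the order statistics are mutually independent and independent of $\boldsymbol{I}$); then Wald's equation \eqref{eq:esperance-indep}, the classical variance formula \eqref{eq:var-indep}, and the law of total variance with the two-point variance identity $\left(\tfrac{b-a}{2}\right)^2$ yield the stated expressions with essentially no algebra. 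Your argument is more elementary and more illuminating --- it explains structurally why the variance splits into the two braced classical terms plus a mixture term, and it foreshadows the mixture form of the LSTs in Corollary \ref{cor:lst-min-max} --- whereas the paper's route, though computationally heavier here, is the one that scales to arbitrary dependence structures for $\boldsymbol{I}$, where no collapse to a single $I$ is available and the mixture argument breaks down.
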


In \cite{blier2023risk}, the authors provide a closed-form formula for moments of the sum of Pareto or Weibull distributed rvs when the dependence structure underlying their joint distribution is a FGM copula. Similarly, we have exact expressions for the moments of aggregate claim amount rvs within CRMs when $(N, \underline{X}) \in \aleph^{FGM}$ and when the sequence of claim rvs $X$ are Pareto or Weibull distributed. We explore this statement in the following example. 
\begin{example}\label{ex:pareto}
Let $\underline{X}$ be a sequence of Pareto distributed rvs with survival function $\overline{F}_X(x) = \lambda^\alpha/(\lambda + x)^\alpha$ and mean $\lambda/(\alpha - 1)$, where $\lambda, \alpha > 0$ and $x\geq 0$. Then, $X_{[1]}$ is Pareto distributed with parameter $2\alpha$. It follows that $\mu_{X_{[1]}} = \lambda/(2\alpha - 1)$ and $\mu_{X_{[2]}} = \lambda(3\alpha-1)/(\alpha - 1)/(2\alpha - 1)$. Let $N$ be a rv following a geometric distribution with success probability $0 < p < 1$. We have $\gamma_{N}(k) = p(1 - p)^{k}$ for $k \in \mathbb{N}_0$ and $E[N] = (1-p)/p$. One may show that $N_{[1]}$ follows a geometric distribution with success probability $p(2-p)$. Then, for $\alpha > 1$, the expectations of $S^{\left(\bigtriangledown,\bigtriangleup\right)}$ and $S^{\left(\bigtriangleup,\bigtriangleup\right)}$ are
	$$E\left[S^{\left(\bigtriangledown,\bigtriangleup\right)}\right] = E[N]E[X] \times \left(\frac{1}{2}\frac{\alpha - 1}{2\alpha - 1}\frac{3 - p}{2 - p} + \frac{1}{2}\frac{3\alpha - 1}{2 \alpha - 1}\frac{1 - p}{2 - p} \right) = E[N]E[X]\left(1 - \frac{\alpha}{(2\alpha - 1)(2 - p)}\right);$$
	$$E\left[S^{\left( \bigtriangleup,\bigtriangleup\right)}\right] = E[N]E[X] \times \left(\frac{1}{2}\frac{\alpha - 1}{2\alpha - 1} \frac{1 - p}{2 - p} + \frac{1}{2}\frac{3\alpha - 1}{2 \alpha - 1} \frac{3 - p}{2 - p}\right) = E[N]E[X] \left(1 + \frac{\alpha}{(2\alpha - 1)(2 - p)}\right),$$
    while a similar (although omitted due to space constraints) expression for the variance is obtained for $\alpha > 2$ by using Theorem \ref{thm:var-s} or Corollary \ref{cor:moments-mix-interpretation}. 
 
    For a numerical application, we set $p = 10/11$, $\alpha = 2.1$ and $\lambda = 2200$, such that $E[N] = 0.1$ and $E[X] = 2000$. We present the mean and variance of $S^{\left(\bigtriangledown, \bigtriangleup\right)}$, $S^{\left(\perp, \perp\right)}$
	and $S^{\left(\bigtriangleup, \bigtriangleup\right)}$ in Table \ref{tab:pareto-mean-variance}. Within any CRM with FGM dependence, the FGM copula only induces mild positive and negative dependence between the claim number rv and the claim amount rvs. It also induces mild positive and negative dependence between two claim amount rvs. 
    However, the CRMs defined by different sequences $(N, \underline{X}) \in \aleph^{FGM}$ generate a wide range of possible values of the expectation and variance of the aggregate claim amount rvs.
	\begin{table}[ht]
		\centering
		\begin{tabular}{crr}
			& Mean     & Variance     \\ \hline
			$S^{\left( \bigtriangledown,\bigtriangleup\right)}$ & 79.6875  & 786~547   \\
			$S^{(\perp, \perp)}$                 & 200.0000 & 8~840~000  \\
			$S^{\left( \bigtriangleup,\bigtriangleup\right)}$  & 320.3125 & 16~133~409
		\end{tabular}
		\caption{Mean and variance of aggregate claim amount rvs defined within some CRMs with full FGM dependence in Example \ref{ex:pareto}.}\label{tab:pareto-mean-variance}
	\end{table}

\end{example}

\subsection{Laplace-Stieltjes transform of S}
We now turn our attention to cases where the aggregate claim amount rv derived from a CRM defined by a sequence $(N, \underline{X}) \in \aleph^{FGM*}$ will belong to a known family of distributions. Our approach will rely on identifying the LST of the aggregate claim amount rv derived from a CRM, which has a convenient form when using the stochastic formulation of the FGM copula. 

\begin{theorem}\label{thm:laplace-crm-identical}
	The LST of the aggregate claim amount rv in \eqref{eq:lst-s}, when $(N, \underline{X}) \in \aleph^{FGM*}$,  becomes
	\begin{equation} \label{eq:lse-with-k-No1}
	    \mathcal{L}_{S}(t) = \gamma_N(0) + \sum_{n = 1}^{\infty} E_{I_0, I_1, \dots, I_n}\left\{\gamma_{N_{[1 + I_0]}}(n) \prod_{j = 1}^{n} \mathcal{L}_{X_{[1 + I_j]}}(t) \right\}, \quad t \geq 0.
	\end{equation}
	Alternatively, denote $K_j = I_1 + \dots + I_j$ for all $j \in \mathbb{N}_1$. 
 Then, we have
\begin{equation}\label{eq:lse-with-k}
	\mathcal{L}_{S}(t) = \gamma_N(0) + \sum_{i \in \{0, 1\}} \sum_{n = 1}^{\infty}\gamma_{N_{[1 + i]}}(n) \sum_{k = 0}^{n}  \Pr(I_0 = i, K_n = k) \mathcal{L}_{X_{[1]}}(t)^{n-k} \mathcal{L}_{X_{[2]}}(t)^{k}, \quad t \geq 0.
\end{equation}
\end{theorem}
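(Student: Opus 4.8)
The plan is to compute $\mathcal{L}_S(t) = E[\mathrm{e}^{-tS}]$ directly from the stochastic representation \eqref{eq:stochastic-s-fgm2} of Theorem \ref{thm:stochastic-representation-s}, exploiting the independence structure underlying that representation. Write $M = (1-I_0)N_{[1]} + I_0 N_{[2]}$ for the effective claim count, so that $S \eqd \sum_{j=1}^{M}\left((1-I_j)X_{[1],j} + I_j X_{[2],j}\right)$ on $\{M > 0\}$ and $S = 0$ on $\{M = 0\}$. In this representation the pair $(N_{[1]}, N_{[2]})$, the Bernoulli sequence $\{I_j\}_{j \geq 0}$, and the independent severity pairs $\{(X_{[1],j}, X_{[2],j})\}_{j \geq 1}$ are mutually independent; moreover $M \eqd N$, so that $\Pr(M = 0) = \gamma_N(0)$, which produces the leading term.

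First I would fix $n \geq 1$ and condition on the Bernoulli sequence together with the event $\{M = n\}$. Since the severity pairs are iid across $j$ and independent of everything else, the conditional Laplace transform factorizes:
\begin{equation*}
E\left[\mathrm{e}^{-tS} \,\middle|\, M = n, I_1, \dots, I_n\right] = \prod_{j=1}^{n} E\left[\mathrm{e}^{-t((1-I_j)X_{[1],j} + I_j X_{[2],j})} \,\middle|\, I_j\right] = \prod_{j=1}^{n} \mathcal{L}_{X_{[1+I_j]}}(t),
\end{equation*}
because the $j$-th factor equals $\mathcal{L}_{X_{[1]}}(t)$ when $I_j = 0$ and $\mathcal{L}_{X_{[2]}}(t)$ when $I_j = 1$. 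Next, using the independence of $(N_{[1]}, N_{[2]})$ from $\{I_j\}_{j\geq 0}$, I would evaluate the expectation of the indicator $\id_{\{M = n\}}$ given $I_0$, which yields $\Pr(M = n \mid I_0) = \gamma_{N_{[1+I_0]}}(n)$ (equal to $\gamma_{N_{[1]}}(n)$ when $I_0 = 0$ and $\gamma_{N_{[2]}}(n)$ when $I_0 = 1$). Combining the two conditionings, taking the outer expectation over $(I_0,\dots,I_n)$, and summing over $n$ — the interchange of sum and expectation being justified by $0 \le \mathrm{e}^{-tS} \le 1$ for $t \ge 0$ — gives \eqref{eq:lse-with-k-No1}.

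To obtain \eqref{eq:lse-with-k}, I would rewrite the product in terms of $K_n = I_1 + \dots + I_n$, the number of indices among $1, \dots, n$ for which $I_j = 1$: since each factor contributes $\mathcal{L}_{X_{[2]}}(t)$ exactly when $I_j = 1$,
\begin{equation*}
\prod_{j=1}^{n} \mathcal{L}_{X_{[1+I_j]}}(t) = \mathcal{L}_{X_{[1]}}(t)^{\,n - K_n}\,\mathcal{L}_{X_{[2]}}(t)^{\,K_n}.
\end{equation*}
The summand $\gamma_{N_{[1+I_0]}}(n)$ depends on the sequence only through $I_0$ and the product only through $K_n$, so conditioning on $(I_0, K_n) = (i, k)$ replaces the expectation over $(I_0, \dots, I_n)$ by the double sum $\sum_{i \in \{0,1\}}\sum_{k=0}^{n}\Pr(I_0 = i, K_n = k)\,(\cdot)$; reordering the summations to place the sum over $i$ outermost yields \eqref{eq:lse-with-k}.

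The main obstacle is bookkeeping the dependence, not any analytic difficulty. Because the FGM structure permits $I_0$ to be dependent on $I_1, \dots, I_n$, the randomness governing the number of summands (through $I_0$) cannot be separated from the randomness governing the shared severity order statistics (through $K_n$); the joint law $\Pr(I_0 = i, K_n = k)$ is precisely the quantity recording this coupling, and the key realization is that $(I_0, K_n)$ is the only feature of the Bernoulli sequence that survives the two conditionings. Some care is also needed to confirm $M \eqd N$, so that the $n = 0$ contribution is exactly $\gamma_N(0)$; this follows either from the representation in part~1 of Theorem \ref{thm:stochastic-representation-s} or from $\gamma_{N_{[1]}}(n) + \gamma_{N_{[2]}}(n) = 2\gamma_N(n)$ combined with $\Pr(I_0 = 0) = \Pr(I_0 = 1) = 1/2$.
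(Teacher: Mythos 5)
Your proof is correct and follows essentially the same route as the paper's: both invoke the stochastic representation of Theorem \ref{thm:stochastic-representation-s}, condition on the symmetric Bernoulli rvs so the expectation factorizes into $\gamma_{N_{[1+I_0]}}(n)\prod_{j}\mathcal{L}_{X_{[1+I_j]}}(t)$, and then collect terms via $K_n$ to obtain \eqref{eq:lse-with-k} (your direct decomposition of $E[\mathrm{e}^{-tS}]$ over the events $\{M=n\}$ is just the paper's passage through \eqref{eq:lst-s} written without the intermediate $1/\gamma_N(n)$ normalization). As a minor point in your favour, your exponents $\mathcal{L}_{X_{[1]}}(t)^{n-K_n}\mathcal{L}_{X_{[2]}}(t)^{K_n}$ are the consistent ones matching the theorem statement, whereas the paper's intermediate display transposes them in an apparent typo.
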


\begin{proof}
We begin by developing the expression of the conditional LST, 
    \begin{align}
        E\left[\mathrm{e}^{-t (X_1+\dots+X_n)} \mid N = n\right] = \frac{1}{\gamma_{N}(n)}E\left[\mathbbm{1}_{\{N = n\}}  \prod_{j = 1}^{k} \mathrm{e}^{-t X_j}  \right],
        \quad t \geq 0.
        \label{eq:MauditePreuveNo2}
    \end{align}
Then, using  the stochastic representation provided in Theorem \ref{thm:stochastic-representation-s} and conditioning on the symmetric Bernoulli rvs, \eqref{eq:MauditePreuveNo2} becomes 
    \begin{align}
    E\left[\mathrm{e}^{-t (X_1+\dots+X_n)} \mid N = n\right]
	&= \frac{1}{\gamma_{N}(n)}E_{I_0, I_1, \dots, I_k}\left[E\left[\mathbbm{1}_{\{N_{[1 + I_0]} = n\}}\right]\prod_{j = 1}^{k} E\left[\mathrm{e}^{-t X_{[1 + I_j]}} \right]\right]\nonumber\\
	&= \frac{1}{\gamma_{N}(n)}E_{I_0, I_1, \dots, I_k}\left[\gamma_{N_{[1 + I_0]}}\prod_{j = 1}^{k} \mathcal{L}_{X_{[1 + I_j]}}(t)\right],
    \quad t \geq 0. \label{eq:cond-lst}
    \end{align}
We obtain \eqref{eq:lse-with-k-No1}
by inserting \eqref{eq:cond-lst} into \eqref{eq:lst-s}. 
For \eqref{eq:lse-with-k}, we have 
$$\mathcal{L}_{S}(t) = \gamma_N(0) + \sum_{n = 1}^{\infty} E_{I_0,I_1,\dots,I_n}\left\{\gamma_{N_{[1 + I_0]}}(n) \mathcal{L}_{X_{[1]}}(t)^{I_1+\dots+I_n} \mathcal{L}_{X_{[2]}}(t)^{n -I_1-\dots-I_n}\right\},$$
 which, upon collecting like terms, becomes
	$$\mathcal{L}_{S}(t)= \gamma_N(0) + \sum_{n = 1}^{\infty} E_{I_0,K_n}\left\{\gamma_{N_{[1 + I_0]}}(n) \mathcal{L}_{X_{[1]}}(t)^{K_n} \mathcal{L}_{X_{[2]}}(t)^{n -K_n}\right\}, \quad t \geq 0.$$
 Evaluating the expectation and changing the order of summation leads to the result in \eqref{eq:lse-with-k}.
\end{proof}

The expression in \eqref{eq:lse-with-k} of Theorem \ref{thm:laplace-crm-identical} has two main advantages. First, it is practical for computational purposes since it combines common terms for permutations of $(i_1,\dots,i_n) \in \{0, 1\}^n$. We compute values of almost all numerical examples using \eqref{eq:lse-with-k}. We also use \eqref{eq:lse-with-k} of Theorem \ref{thm:laplace-crm-identical} to derive the results in Section \ref{ss:mxErl}.
Second, it is useful to define dependence structures for CRMs. It is sufficient to define a rv $K_n$ with support $\{0,1,\dots,n\}$ and mean $n/2$ for all $n \in \mathbb{N}_1$. An illustration of this second advantage in given in Corollary \ref{cor:lst-min-max}.

\begin{corollary}\label{cor:lst-min-max}
Assume that the conditions of Theorem \ref{thm:laplace-crm-identical} are satisfied. We consider the following three dependence structures:
\begin{itemize}
    \item If the sequence of rvs $(I_0, I_1, \dots)$ is defined as in Example \ref{ex:I-max-dep}, then the LST of the aggregate claim amount rv $\Spp$ is given by
    \begin{equation*}
        \mathcal{L}_{\Spp}(t) = \frac{1}{2}\mathcal{P}_{N_{[1]}}\left(\mathcal{L}_{X_{[1]}}(t)\right) + \frac{1}{2}\mathcal{P}_{N_{[2]}}\left(\mathcal{L}_{X_{[2]}}(t)\right), 
        \quad t \geq 0.    
    \end{equation*}

    \item If the sequence of rvs $(I_0, I_1, \dots)$ is defined as in Example \ref{ex:I-max-dep-reverse}, then the LST of the aggregate claim amount rv $\Smp$ is given by
    \begin{equation*}
        \mathcal{L}_{\Smp}(t) = \frac{1}{2}\mathcal{P}_{N_{[1]}}\left(\mathcal{L}_{X_{[2]}}(t)\right) 
        + \frac{1}{2}\mathcal{P}_{N_{[2]}}\left(\mathcal{L}_{X_{[1]}}(t)\right), 
        \quad t \geq 0.    
    \end{equation*}
    \item Let $\{I_j\}_{j \geq 1}$ be a sequence of comonotonic rvs and let $I_0$ be independent of the sequence $\{I_j\}_{j \geq 1}$. Let the aggregate claim amount rv within this CRM be defined as $\Sindepp$. Then, the LST of $\Sindepp$ is
    \begin{equation*}
        \mathcal{L}_{\Sindepp}(t) = \frac{1}{2}\mathcal{P}_{N}\left(\mathcal{L}_{X_{[1]}}(t)\right) + \frac{1}{2}\mathcal{P}_{N}\left(\mathcal{L}_{X_{[2]}}(t)\right), \quad t \geq 0.  
    \end{equation*}
\end{itemize}


\end{corollary}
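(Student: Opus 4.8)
The plan is to specialize the general formula \eqref{eq:lse-with-k} of Theorem \ref{thm:laplace-crm-identical} to each of the three prescribed dependence structures. Since \eqref{eq:lse-with-k} already isolates the only two quantities that depend on the dependence structure---the joint probabilities $\Pr(I_0 = i, K_n = k)$ with $K_n = I_1 + \dots + I_n$---the task reduces to computing this joint law in each case and then simplifying the resulting double sum. In all three settings the sequence $\{I_j\}_{j \geq 1}$ is comonotonic, so $I_1 = \dots = I_n$ and hence $K_n = n I_1$ takes only the two values $0$ and $n$; this immediately collapses the inner sum over $k$, leaving at most the terms $k = 0$ and $k = n$.

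For the structure of Example \ref{ex:I-max-dep} I would use that $I_0 = I_1$, so the pair $(I_0, K_n)$ is supported on $\{(0,0),(1,n)\}$ with mass $1/2$ each; substituting into \eqref{eq:lse-with-k} leaves $\tfrac12 \gamma_{N_{[1]}}(n)\mathcal{L}_{X_{[1]}}(t)^n$ from the $i=0$ branch and $\tfrac12 \gamma_{N_{[2]}}(n)\mathcal{L}_{X_{[2]}}(t)^n$ from the $i=1$ branch. For Example \ref{ex:I-max-dep-reverse}, the relation $I_0 = 1 - I_1$ reverses the coupling, so $(I_0, K_n)$ is supported on $\{(1,0),(0,n)\}$, which instead pairs $N_{[1]}$ with $X_{[2]}$ and $N_{[2]}$ with $X_{[1]}$. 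In both cases summing over $n \geq 1$ produces two single series of the form $\tfrac12 \sum_{n\geq 1}\gamma_{N_{[1+i]}}(n)\mathcal{L}_{X_{[\cdot]}}(t)^n$.

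For the third structure, $I_0$ is independent of the comonotonic block, so the joint law factorizes and $(I_0, K_n)$ puts mass $1/4$ on each of $(0,0),(0,n),(1,0),(1,n)$. After substitution this yields four series, which I would then collect according to the power $\mathcal{L}_{X_{[1]}}(t)^n$ versus $\mathcal{L}_{X_{[2]}}(t)^n$; each group carries the coefficient $\tfrac14\gamma_{N_{[1]}}(n) + \tfrac14\gamma_{N_{[2]}}(n)$.

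The remaining, and most delicate, step is common to all three cases: I must recognize the single series as probability generating functions evaluated at an LST, which requires restoring the $n = 0$ term that is carried separately as $\gamma_N(0)$ in \eqref{eq:lse-with-k}. The lever here is the identity $\gamma_N(n) = \tfrac12 \gamma_{N_{[1]}}(n) + \tfrac12 \gamma_{N_{[2]}}(n)$, valid for every $n \geq 0$, which follows from part 1 of Theorem \ref{thm:stochastic-representation-s}: $N \eqd (1 - I_0)N_{[1]} + I_0 N_{[2]}$ with $I_0$ marginally symmetric Bernoulli and independent of the order statistics $(N_{[1]}, N_{[2]})$. In the first two cases this identity gives $\gamma_N(0) = \tfrac12\gamma_{N_{[1]}}(0) + \tfrac12\gamma_{N_{[2]}}(0)$, exactly the two missing $n=0$ terms, so each series extends to $n \geq 0$ and becomes $\tfrac12\mathcal{P}_{N_{[1]}}(\cdot) + \tfrac12\mathcal{P}_{N_{[2]}}(\cdot)$ with the appropriate arguments. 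In the third case the same identity collapses the coefficient $\tfrac14\gamma_{N_{[1]}}(n)+\tfrac14\gamma_{N_{[2]}}(n)$ to $\tfrac12\gamma_N(n)$ and likewise supplies the $n=0$ term, producing $\tfrac12\mathcal{P}_N(\mathcal{L}_{X_{[1]}}(t)) + \tfrac12\mathcal{P}_N(\mathcal{L}_{X_{[2]}}(t))$. The main obstacle is thus purely bookkeeping---correctly tracking the $n=0$ boundary term and, in the third case, recombining the order-statistic pmf's back into $\gamma_N$---rather than any analytic difficulty.
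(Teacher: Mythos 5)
Your proposal is correct and follows exactly the route the paper intends: the corollary is stated as a direct specialization of \eqref{eq:lse-with-k} in Theorem \ref{thm:laplace-crm-identical}, which is precisely what you carry out, with the joint law of $(I_0,K_n)$ collapsing to two (or four) atoms in each case. Your careful handling of the $n=0$ boundary term via the marginal identity $\gamma_N(n) = \tfrac12\gamma_{N_{[1]}}(n) + \tfrac12\gamma_{N_{[2]}}(n)$ is exactly the bookkeeping needed to recognize the sums as pgf's, and it is valid since $I_0$ is independent of the order statistics in the representation \eqref{eq:RepresentationNo1}.
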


We provide additional applications of Theorem \ref{thm:laplace-crm-identical} and examples in the appendix supplement.

\subsection{Mixed Erlang claims}\label{ss:mxErl}

One of the most important families of distributions in non-life actuarial science is the mixed Erlang family; practical applications of this class of distributions are provided in \cite{lee2010modeling}, \cite{albrecher2017}, and \cite{gui2018fitting}. This family of distributions has convenient properties when combined with FGM copulas; see \cite{blier2023risk} in the context of risk aggregation. We now extend these results to CRMs using Theorem \ref{thm:laplace-crm-identical}.

The authors of \cite{landriault2015note} show that the order statistics of iid mixed Erlang distributed rvs are also mixed Erlang distributed. If $X$ is mixed Erlang distributed with scale $\beta$ and sequence of masses $\underline{q}$, then $X_{[i]}$, $i \in \{1, 2\}$ is also mixed Erlang distributed with scale $2\beta$ and with the sequence of masses given by
\begin{equation*}
	q_{k, \{i\}} := \begin{cases}
			\frac{1}{2^{k-1}} \sum_{m = 0}^{k-1} \binom{k-1}{m} q_{m + 1} \left(1 - Q_{k-1-m}\right), & \text{for } i = 1\\
			\frac{1}{2^{k-1}} \sum_{m = 0}^{k-1} \binom{k-1}{m} q_{m + 1} Q_{k-1-m}, & \text{for } i = 2
		\end{cases},
\end{equation*}
where $Q_{k} = \sum_{m = 1}^k q_{m}$, for $k \in \mathbb{N}_1$ and $Q_{0} = 0$.
%

Let $\underline{X}$ form a sequence of claim amount rvs such that each claim amount rv is mixed Erlang distributed. We now show that the aggregate claim amount rv derived within a CRM defined by the sequence $(N, \underline{X}) \in \aleph^{FGM*}$ is mixed Erlang distributed. From Theorem \ref{thm:laplace-crm-identical}, we have
$$\mathcal{L}_{S}(t) = \gamma_N(0) + \sum_{i \in \{0, 1\}} \sum_{n = 1}^{\infty}\Pr(N_{[1 + i]} = n) \sum_{k = 0}^{n}  \Pr(I_0 = i, K_n = k) \mathcal{L}_{X_{[1]}}(t)^{n-k} \mathcal{L}_{X_{[2]}}(t)^{k},
\quad t \geq 0,$$
which becomes
\begin{align*}
	\mathcal{L}_{S}(t) &= \gamma_N(0) + \sum_{i \in \{0, 1\}} \sum_{n = 1}^{\infty}\Pr(N_{[1 + i]} = n) \sum_{k = 0}^{n}  \Pr(I_0 = i, K_n = k) \times \nonumber\\
	&\qquad \qquad \qquad \left\{\sum_{j = 1}^{\infty} q_{j, \{1\}} \left(\frac{2\beta}{2\beta + t}\right)^{j}\right\}^{n-k} \left\{\sum_{j = 1}^{\infty} q_{j, \{2\}} \left(\frac{2\beta}{2\beta + t}\right)^{j}\right\}^{k},
    \quad t \geq 0.
\end{align*}
Let $J_i$, $i \in \{1, 2\}$ be a rv with pmf $\Pr(J_i = j) = q_{j, \{1 + i\}}$, for $j \in \mathbb{N}_1$. Let $J_{k, n}$ be a rv with pgf
$$\mathcal{P}_{J_{k, n}}(z) = \mathcal{P}_{J_1}(z)^{n-k}\mathcal{P}_{J_2}(z)^{k}, \quad k \in \{0,1,\dots,n\}, \quad n \in \mathbb{N}_1, \quad z \in [-1,1].$$
Further, let $M$ be a rv with pgf
\begin{equation}\label{eq:pgf-skelette-mx-erl}
	\mathcal{P}_{M}(z) = \gamma_N(0) + \sum_{i \in \{0, 1\}} \sum_{n = 1}^{\infty}\Pr(N_{[1 + i]} = n) \sum_{k = 0}^{n}  \Pr(I_0 = i, K_n = k) \mathcal{P}_{J_{k, n}}(z),
    \quad z \in [-1,1].
\end{equation}
Then, the LST of $S$ becomes
\begin{equation}\label{eq:lst-mx-erl}
	\mathcal{L}_S(t) = \mathcal{P}_{M}\left(\frac{2\beta}{2\beta + t}\right), \quad t \geq 0.
\end{equation}
From the relation in \eqref{eq:lst-mx-erl}, it becomes simple to compute the cdf, VaR or TVaR for the aggregate claim amount rv derived within a CRM defined by a sequence $(N, \underline{X})\in \aleph^{FGM*}$. In particular, one may use numerical methods like the fast Fourier transform to compute the pmf of the rv $M$ from the pgf in \eqref{eq:pgf-skelette-mx-erl}. It suffices to manipulate the terms of the vector of probabilities associated with the pgf of $M$ to obtain the LST of $S$. The method described is exact if the rvs $N$, $J_1$, and $J_2$ have pmfs with finite support. 

\subsection{Discrete claim amount random variables}

The pgf of the aggregate claim amount rv $S$ will be convenient when the claim amount rvs are discrete. In this case, we adapt the LST of $S$ in \eqref{eq:lse-with-k} to obtain the pgf of $S$
\begin{equation*}
	\mathcal{P}_{S}(t) = \gamma_N(0) + \sum_{i \in \{0, 1\}} \sum_{n = 1}^{\infty}\Pr(N_{[1 + i]} = n) \sum_{k = 0}^{n}  \Pr(I_0 = i, K_n = k) \mathcal{P}_{X_{[1]}}(t)^{n-k} \mathcal{P}_{X_{[2]}}(t)^{k},
\end{equation*}
for $|t| \leq 1$. A discrete version of Corollary \ref{cor:lst-min-max} will also be useful for applications. Let $N$ be a rv with discrete support and $\underline{X}$ form a sequence of identically distributed rvs with discrete support. Then, the pgf of the aggregate claim amount rv within the CRMs are 
\begin{equation}\label{eq_pgf_s_pp}
	\mathcal{P}_{\Spp}(t) = \frac{1}{2}\mathcal{P}_{N_{[1]}}\left(\mathcal{P}_{X_{[1]}}(t)\right) + \frac{1}{2}\mathcal{P}_{N_{[2]}}\left(\mathcal{P}_{X_{[2]}}(t)\right)
\end{equation}
and
\begin{equation}\label{eq:pgf_s_mp}
	\mathcal{P}_{\Smp}(t) = \frac{1}{2}\mathcal{P}_{N_{[2]}}\left(\mathcal{P}_{X_{[1]}}(t)\right) + \frac{1}{2}\mathcal{P}_{N_{[1]}}\left(\mathcal{P}_{X_{[2]}}(t)\right),
\end{equation}
for $\vert t \vert \leq 1$. The expressions in \eqref{eq_pgf_s_pp} and \eqref{eq:pgf_s_mp} lead to an efficient fast Fourier transform method to compute the pmf of the rvs $\Spp$ and $\Smp$. In particular, when the distributions of $X_{[1]}$ and $X_{[2]}$ are unknown or have inconvenient forms, one may discretize the cdf of $X$ as in the following example. 

\begin{example}\label{ex:discretization}
Let $N \eqd \mathrm{NB}(10, 2/3)$. Further, let $X$ be log-normally distributed with $E[X] = 20$ and $Var(X) = 100$. We approximate the rv $X$ by a discrete rv $\widehat{X}$ using the moment-preserving method (see, for instance, \cite[Appendix E.2]{klugman2018loss}). Further, we approximate the aggregate claim amount rv $S$ within the CRM $(N, \underline{X}) \in \aleph^{FGM*}$ by the aggregate claim amount rv $\widehat{S}$ within the CRM $(N, \underline{\widehat{X}})\in\aleph^{FGM*}$. We study the rvs $\Smpt, \Sindept$ and $\Sppt$ that respectively approximate the rvs $\Smp, \Sindep$ and $\Spp$. Define the Value-at-Risk ($\mathrm{VaR}$) of a random variable $Z$ at confidence level $\kappa$ for $0<\kappa<1$, which corresponds to the left inverse of the cdf, that is, $\mathrm{VaR}_{\kappa}(Z) = F_Z^{-1}(\kappa)$. The Tail-Value-at-Risk ($\mathrm{TVaR}$) is defined as 
$$\mathrm{TVaR}_{\kappa}(Z) = \frac{1}{1-\kappa}\int_\kappa^{1} \mathrm{VaR}_{u}(Z) \diff u,$$
if $E[Z] < \infty$. We present risk measures for the discretized rvs $\Smpt, \Sindept$ and $\Sppt$ in Table \ref{tab:discretize-example}. In Figure \ref{fig:pmf-cdf-discretize}, we present the stairstep graphs of the pmf and cdf of the rvs $\Smpt, \Sindept$ and $\Sppt$. Even if FGM copulas induce moderate dependence, we observe that the risk measures can take very different values and that the tails of the cdfs in Figure \ref{fig:pmf-cdf-discretize} are noticeably different. 
\begin{table}[ht]
	\centering
	\begin{tabular}{rrrrr}
		          & $E[\widehat{S}]$ & $\sqrt{Var(\widehat{S})}$ & $\text{VaR}_{0.99}(\widehat{S})$ & $\text{TVaR}_{0.99}(\widehat{S})$ \\ \hline
		   $\Smpt$ &  92.08 &           47.20 &                 225 &                  260.21 \\
		$\Sindept$ & 100.00 &           59.17 &                 272 &                  314.03 \\
		   $\Sppt$ & 107.92 &           78.46 &                 336 &                  386.26
	\end{tabular}
	\caption{Values of expectation, standard deviation, VaR and TVaR of $\widehat{S}$ in Example \ref{ex:discretization}.}\label{tab:discretize-example}
\end{table}

\begin{figure}[ht]
	\centering
		\resizebox{0.48\textwidth}{!}{
	\begin{tikzpicture}
		\begin{axis}[
			width = 3in, 
			height = 3in,
			ymin = 0,
			xmin = 0, 
			ymax = 0.02, 
			xmax = 400,
			xlabel={$x$},
			ylabel={$\Pr\left(S = x\right)$}, 
			legend style={at={(1,1)},anchor=north east},
			]
			\addplot[const plot, green] table [x="x", y="pmfm", col sep=comma] {pmf_cdf_approx.csv};
			\addlegendentry{$\Pr(\Smpt = x)$}
			\addplot[const plot, blue] table [x="x", y="pmfi", col sep=comma] {pmf_cdf_approx.csv};
			\addlegendentry{$\Pr(\Sindept = x)$}
			\addplot[const plot, red] table [x="x", y="pmfp", col sep=comma] {pmf_cdf_approx.csv};
			\addlegendentry{$\Pr(\Sppt = x)$}			
		\end{axis}
\end{tikzpicture}}\hfill 
		\resizebox{0.48\textwidth}{!}{
	\begin{tikzpicture}
		\begin{axis}[
			width = 3in, 
			height = 3in,
			ymin = 0,
			xmin = 0, 
			ymax = 1, 
			xmax = 400,
			xlabel={$x$},
			ylabel={$\Pr\left(S \leq x\right)$}, 
			legend style={at={(1,0)},anchor=south east},
			]
			\addplot[const plot, green] table [x="x", y="cdfm", col sep=comma] {pmf_cdf_approx.csv};
			\addlegendentry{$\Pr(\Smpt \leq x)$}
			\addplot[const plot, blue] table [x="x", y="cdfi", col sep=comma] {pmf_cdf_approx.csv};
			\addlegendentry{$\Pr(\Sindept \leq x)$}
			\addplot[const plot, red] table [x="x", y="cdfp", col sep=comma] {pmf_cdf_approx.csv};
			\addlegendentry{$\Pr(\Sppt \leq x)$}			
		\end{axis}
\end{tikzpicture}}
	\caption{Pmf and cdf of $\Smpt$, $\Sindept$ and $\Sppt$ from Example \ref{ex:discretization}.}\label{fig:pmf-cdf-discretize}	
\end{figure}
\end{example}

\section{Dependence properties}\label{sec:dependence}

In this section, we build on the stochastic representation of Theorem \ref{thm:stochastic-representation-s} to analyze the impact of dependence on the aggregate claim amount rv $S$. We briefly recall the notions required for this section. Let $S$ and $S^{\dagger}$ be the aggregate claim amount rvs defined within the CRMs $(N, \underline{X})$ and $(N^{\dagger}, \underline{X}^{\dagger})$ respectively. 

\subsection{Impact of dependence and increasing convex order for aggregate claim amount rv}

As in \cite{cossette2019collective}, we aim to compare two aggregate claim amount rvs $S$ and $S^{\dagger}$ under the increasing convex order, allowing one to compare the rvs in terms of variability. To proceed, we need the following proposition, which comes from Proposition 4 of \cite{cossette2019collective} (itself an extension of Proposition \ref{prop:order-cx-s}, initially introduced by \cite{muller1997stop} and provided in Appendix \ref{app:defn-stochastic-orders}).

\begin{proposition}\label{prop:SuperModularIncreasingConvex}
    If $(N,X_1, \dots, X_k) \preceq_{sm} 
    (N^{\dagger},X_1^{\dagger}, \dots, X_k^{\dagger})$, 
    for every $k \in A_N$, 
    then $S \preceq_{icx} S^{\dagger}$. 
\end{proposition}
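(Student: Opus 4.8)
The plan is to work directly from the pathwise representation \eqref{eq:DefSNo2}, writing $S \eqd \sum_{j=1}^{\infty} X_j \mathbbm{1}_{\{N \ge j\}}$ and the analogous expression for $S^{\dagger}$, and to reduce the claim to the defining inequality of the increasing convex order, namely $E[\phi(S)] \le E[\phi(S^{\dagger})]$ for every increasing convex $\phi$ for which the expectations exist. The engine is that the supermodular order passes through any supermodular functional, which is exactly the mechanism behind Proposition \ref{prop:order-cx-s}; the real work is to exhibit the right functional.

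First I would fix an increasing convex $\phi$ and, for each $k$, introduce the truncated aggregate $S_k = \sum_{j=1}^{k} X_j \mathbbm{1}_{\{N \ge j\}}$ together with $S_k^{\dagger}$ built identically from the dagger vector. Writing $g_k(n, x_1, \dots, x_k) = \phi\bigl(\sum_{j=1}^{k} x_j \mathbbm{1}_{\{n \ge j\}}\bigr)$ on $\mathbb{N}_0 \times \mathbb{R}_+^k$, one has $\phi(S_k) = g_k(N, X_1, \dots, X_k)$, so the level-$k$ hypothesis $(N, X_1, \dots, X_k) \preceq_{sm} (N^{\dagger}, X_1^{\dagger}, \dots, X_k^{\dagger})$ yields $E[\phi(S_k)] \le E[\phi(S_k^{\dagger})]$ as soon as $g_k$ is shown to be supermodular. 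I would note that $\preceq_{sm}$ is closed under marginalization, so although the hypothesis is stated for $k \in A_N$, it holds for every $k \le \sup A_N$, which is all that is needed; severities indexed beyond $\sup A_N$ contribute a zero indicator almost surely.

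The crux is the supermodularity of $g_k$, which I would verify through its pairwise mixed second differences. For a pair of severity coordinates the inner sum is linear with $0/1$ coefficients, so convexity of $\phi$ makes $g_k$ supermodular in that pair. For a pair consisting of the count coordinate $n$ and a severity $x_j$, I would split into cases according to whether the threshold $j$ falls below, at, or above the two values of $n$: when $x_j$ is switched on at both counts, nonnegativity of the mixed difference follows from $\phi$ having increasing increments (convexity), while when raising $n$ first switches $x_j$ on, the mixed difference collapses to a single increment of $\phi$, nonnegative because $\phi$ is increasing. This is where both hypotheses on $\phi$ are used, and it is the step I expect to be the main obstacle, since the discrete threshold coordinate $n$ couples simultaneously to every severity.

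Finally I would remove the truncation. Because the $X_j$ are nonnegative, $S_k \uparrow S$ and $S_k^{\dagger} \uparrow S^{\dagger}$ almost surely as $k \to \infty$, so monotone convergence (using that $\phi$ is increasing, hence $\phi(S_k) \uparrow \phi(S)$) gives $E[\phi(S_k)] \to E[\phi(S)]$ and likewise on the dagger side. Passing to the limit in $E[\phi(S_k)] \le E[\phi(S_k^{\dagger})]$ yields $E[\phi(S)] \le E[\phi(S^{\dagger})]$, and since $\phi$ was an arbitrary increasing convex function this is exactly $S \preceq_{icx} S^{\dagger}$. When $\sup A_N < \infty$ the truncation is already exact at $k = \sup A_N$, so this last limiting step is unnecessary.
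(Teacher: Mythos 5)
Your proof is correct, but note how it relates to the paper: the paper does not prove this proposition at all, it imports it as Proposition 4 of \cite{cossette2019collective}, itself an extension of Proposition \ref{prop:order-cx-s} of \cite{muller1997stop}, whose proof runs along the lines of Theorem 3.4.5 of \cite{muller2002comparison} --- precisely the source the paper credits for the representation \eqref{eq:DefSNo2}. Your blind reconstruction is, in substance, that classical argument made self-contained: truncate to $S_k=\sum_{j=1}^{k}X_j\mathbbm{1}_{\{N\geq j\}}$, prove the key lemma that $g_k(n,x_1,\dots,x_k)=\phi\bigl(\sum_{j=1}^{k}x_j\mathbbm{1}_{\{n\geq j\}}\bigr)$ is supermodular when $\phi$ is increasing and convex, invoke closure of $\preceq_{sm}$ under marginalization (which also forces $N\eqd N^{\dagger}$, the fact behind your remark that indices beyond $\sup A_N$ are harmless), and finish by monotone convergence applied to $\phi(S_k)-\phi(0)\geq 0$. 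Your case analysis for the pair $(n,x_j)$ is exactly the right one. A single technical point deserves a sentence in a polished version: your case where both counts switch $x_j$ on uses that the \emph{other} severity arguments are nonnegative (so that raising $n$ cannot decrease the inner sum); hence $g_k$ as written is supermodular on $\mathbb{R}\times\mathbb{R}_{+}^{k}$ but not on all of $\mathbb{R}^{k+1}$, whereas Definition \ref{defSupermodularOrder} quantifies over supermodular functions on the whole space. The routine repair is to apply the hypothesis instead to $\tilde g_k(y_0,y_1,\dots,y_k)=\phi\bigl(\sum_{j=1}^{k}\max(y_j,0)\,\mathbbm{1}_{\{y_0\geq j\}}\bigr)$, which is supermodular on $\mathbb{R}^{k+1}$ (supermodularity is preserved under coordinatewise nondecreasing transformations) and agrees with $g_k$ on the supports of both random vectors, so the inequality $E[\phi(S_k)]\leq E[\phi(S_k^{\dagger})]$ and everything downstream of it are unaffected.
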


In the following theorem, we provide the conditions under which Proposition \ref{prop:SuperModularIncreasingConvex} holds for CRMs within $\aleph^{FGM}$. 
\begin{theorem}
\label{thm:Resultats}
If $ (I_0,I_1, \dots, I_k) 
        \preceq_{sm} 
        (I_0^{\dagger},I_1^{\dagger}, \dots, I_k^{\dagger})$, 
        for every $k \in \mathbb{N}_1$, then for every $k \in A_N$ 
        \begin{equation} \label{item:Theorem4point2}
            (N,X_1, \dots, X_k) 
        \preceq_{sm} 
        (N^{\dagger},X_1^{\dagger}, \dots, X_k^{\dagger}),
        \end{equation}
        and 
        \begin{equation}\label{item:Theorem4point2b}
            S \preceq_{icx} S^{\dagger}.
        \end{equation}
\end{theorem}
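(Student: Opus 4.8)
The plan is to establish the two conclusions in order, observing that \eqref{item:Theorem4point2b} is immediate from \eqref{item:Theorem4point2}: once the supermodular ordering $(N,X_1,\dots,X_k) \preceq_{sm} (N^{\dagger},X_1^{\dagger},\dots,X_k^{\dagger})$ holds for every $k \in A_N$, Proposition \ref{prop:SuperModularIncreasingConvex} yields $S \preceq_{icx} S^{\dagger}$ directly. Hence the substantive task is \eqref{item:Theorem4point2}, namely lifting the supermodular ordering of the Bernoulli vectors to the full CRM vectors. Since the supermodular order presupposes equal marginals and the hypothesis $(I_0,\dots,I_k)\preceq_{sm}(I_0^{\dagger},\dots,I_k^{\dagger})$ already forces $I_j \eqd I_j^{\dagger}$, I read the comparison as being between two members of $\aleph^{FGM}$ sharing the marginals $F_N$ and $F_X$; consequently the order-statistic pairs $(N_{[1]},N_{[2]})$ and $(X_{[1],j},X_{[2],j})$ of \eqref{eq:RepresentationNo1} have the same laws under both models.

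First I would invoke the stochastic representation of Theorem \ref{thm:stochastic-representation-s}. Write $\boldsymbol{W} = (N_{[1]},N_{[2]},X_{[1],1},X_{[2],1},\dots,X_{[1],k},X_{[2],k})$ for the collection of mutually independent order-statistic pairs, which is independent of $\boldsymbol{I}$. Each coordinate of $(N,X_1,\dots,X_k)$ is then $g_0(I_0;\boldsymbol{W}) = (1-I_0)N_{[1]} + I_0 N_{[2]}$ and $g_j(I_j;\boldsymbol{W}) = (1-I_j)X_{[1],j} + I_j X_{[2],j}$ for $j\geq 1$, with the daggered model using the identical maps $g_j$ and an identically distributed, independent copy of $\boldsymbol{W}$. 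For a fixed supermodular test function $\phi$ I would condition on $\boldsymbol{W}$ and set $\psi_{\boldsymbol{w}}(i_0,\dots,i_k) = \phi\bigl(g_0(i_0;\boldsymbol{w}),\dots,g_k(i_k;\boldsymbol{w})\bigr)$ on $\{0,1\}^{k+1}$.

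The crux is the elementary observation that $\psi_{\boldsymbol{w}}$ is supermodular for every realization $\boldsymbol{w}$. Indeed each map $i \mapsto g_j(i;\boldsymbol{w})$ is nondecreasing, since the realized order statistics satisfy $N_{[1]}\leq N_{[2]}$ and $X_{[1],j}\leq X_{[2],j}$; because componentwise nondecreasing transformations commute with the lattice operations $\vee$ and $\wedge$, the composition of a supermodular $\phi$ with such maps is again supermodular (a standard closure property of $\preceq_{sm}$; see \cite{shaked2007stochastic}). Applying the hypothesis $\boldsymbol{I}\preceq_{sm}\boldsymbol{I}^{\dagger}$ to $\psi_{\boldsymbol{w}}$ gives $E[\psi_{\boldsymbol{w}}(\boldsymbol{I})] \leq E[\psi_{\boldsymbol{w}}(\boldsymbol{I}^{\dagger})]$ for each $\boldsymbol{w}$; integrating against the common law of $\boldsymbol{W}$, and using its independence from $\boldsymbol{I}$ and $\boldsymbol{I}^{\dagger}$, yields $E[\phi(N,X_1,\dots,X_k)] \leq E[\phi(N^{\dagger},X_1^{\dagger},\dots,X_k^{\dagger})]$. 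Letting $\phi$ range over supermodular functions establishes \eqref{item:Theorem4point2}, and Proposition \ref{prop:SuperModularIncreasingConvex} then gives \eqref{item:Theorem4point2b}.

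I expect the main obstacle to be bookkeeping rather than any single hard estimate: one must ensure $\boldsymbol{W}$ is genuinely independent of $\boldsymbol{I}$ in the representation, couple the two models so they share the order-statistic law, and verify that supermodularity of $\psi_{\boldsymbol{w}}$ holds uniformly in $\boldsymbol{w}$ so that the Fubini/tower step is legitimate. A secondary technical point is integrability of $\phi$ against the resulting mixtures; it is cleanest to argue first for bounded supermodular $\phi$ and then pass to the general case by a standard monotone approximation, which also keeps the conditional-expectation manipulation above fully justified.
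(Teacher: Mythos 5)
Your proof is correct, and it reaches the paper's two conclusions by a genuinely more self-contained route. The paper's own proof is essentially a citation: \eqref{item:Theorem4point2} is obtained by invoking Lemma 3 of \cite{blier2023risk}, and \eqref{item:Theorem4point2b} then follows from Proposition \ref{prop:SuperModularIncreasingConvex}, exactly as in your last step. What you have done is reconstruct the content of that external lemma from first principles: condition on the vector $\boldsymbol{W}$ of independent order-statistic pairs furnished by Theorem \ref{thm:stochastic-representation-s}, note that for each realization $\boldsymbol{w}$ the coordinate maps $i_j \mapsto (1-i_j)w_{[1],j}+i_j w_{[2],j}$ are nondecreasing because $w_{[1],j}\leq w_{[2],j}$, use the closure of supermodular functions under componentwise nondecreasing transformations to conclude that $\psi_{\boldsymbol{w}}$ is supermodular, apply the hypothesis $\boldsymbol{I}\preceq_{sm}\boldsymbol{I}^{\dagger}$ pointwise in $\boldsymbol{w}$, and integrate using the independence of $\boldsymbol{W}$ from $\boldsymbol{I}$ and $\boldsymbol{I}^{\dagger}$. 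Your reading that the two CRMs share the marginals $F_N$ and $F_X$ (so that both models use the same law of $\boldsymbol{W}$) is the correct interpretation of the theorem and is forced by the supermodular hypothesis in any case; it is also how the paper deploys the result in Corollary \ref{cor: ExtremalElement}. The trade-off is transparent: the paper's proof is shorter but leans on a companion paper, while yours exposes the mechanism and keeps the whole argument inside the present framework.

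One technical remark: your closing suggestion to first treat bounded supermodular $\phi$ and then pass to the general case by ``monotone approximation'' is unnecessary and, as stated, delicate, since truncation does not preserve supermodularity (for $\phi(x,y)=xy$, the function $\phi\wedge 1$ already violates the supermodular inequality). No approximation is needed: Definition \ref{defSupermodularOrder} only quantifies over supermodular $\phi$ whose expectations exist; for every fixed $\boldsymbol{w}$ the inner expectations $E[\psi_{\boldsymbol{w}}(\boldsymbol{I})]$ and $E[\psi_{\boldsymbol{w}}(\boldsymbol{I}^{\dagger})]$ are finite sums over $\{0,1\}^{k+1}$, and the tower property applied to the integrable rvs $\phi(N,X_1,\dots,X_k)$ and $\phi(N^{\dagger},X_1^{\dagger},\dots,X_k^{\dagger})$ justifies the outer integration directly.
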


\begin{proof} 
The relation in \eqref{item:Theorem4point2} follows from Lemma 3 from \cite{blier2023risk}. 
The relation in \eqref{item:Theorem4point2b} results from combining \eqref{item:Theorem4point2} and Proposition \ref{prop:SuperModularIncreasingConvex}. 
\end{proof}

\begin{corollary}\label{cor: ExtremalElement}

For any aggregate claim amount rv within $(N, \underline{X}) \in \aleph^{FGM}$, we have
$S \preceq_{icx} \Spp$.
\end{corollary}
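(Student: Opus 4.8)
The plan is to deduce the corollary directly from Theorem \ref{thm:Resultats} by recognizing $\Spp$ as the aggregate claim amount rv generated by the comonotonic symmetric Bernoulli sequence of Example \ref{ex:I-max-dep}, and then showing that this sequence is the supermodular-largest admissible mixing sequence. Concretely, let $(N, \underline{X}) \in \aleph^{FGM}$ be arbitrary, with underlying symmetric Bernoulli sequence $\{I_j\}_{j \geq 0}$ supplied by Theorem \ref{thm:stochastic-representation-s}, and let $\{I_j^{\triangle}\}_{j \geq 0}$ denote the comonotonic sequence $I_j^{\triangle} = I$ of Example \ref{ex:I-max-dep}, whose associated aggregate rv is precisely $\Spp$. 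Since $\Spp$ and the arbitrary $S$ are constructed from the same marginals $F_N$ and $F_X$ (Assumption \ref{ass:CRMClassical}\eqref{ass:SuiteID} is retained throughout), Theorem \ref{thm:Resultats} reduces the claim to establishing $(I_0, \dots, I_k) \preceq_{sm} (I_0^{\triangle}, \dots, I_k^{\triangle})$ for every $k \in \mathbb{N}_1$.

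First I would observe that every $I_j$ and every $I_j^{\triangle}$ is a symmetric Bernoulli rv, so all components of both vectors share the common $\mathrm{Bern}(1/2)$ marginal; this is exactly the equality-of-marginals requirement that makes the supermodular comparison meaningful. Next I would note that $(I_0^{\triangle}, \dots, I_k^{\triangle}) = (I, \dots, I)$ is the comonotonic coupling of these marginals, i.e. the Fréchet–Hoeffding upper bound of the class of $(k+1)$-variate distributions with $\mathrm{Bern}(1/2)$ margins. Invoking the classical fact that the comonotonic random vector is the greatest element in the supermodular order among all random vectors with fixed univariate marginals (see, e.g., \cite{shaked2007stochastic}), the inequality $(I_0, \dots, I_k) \preceq_{sm} (I, \dots, I)$ follows at once, uniformly in $k$.

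With the supermodular ordering of the mixing sequences in hand, I would feed it into Theorem \ref{thm:Resultats}, taking the daggered CRM to be the one generating $\Spp$, to obtain $(N, X_1, \dots, X_k) \preceq_{sm} (N^{\triangle}, X_1^{\triangle}, \dots, X_k^{\triangle})$ for every $k \in A_N$ and, consequently, $S \preceq_{icx} \Spp$, which is the assertion. The only delicate point is the justification in the previous paragraph: one must be certain that the comonotonic symmetric Bernoulli vector is genuinely supermodular-maximal and that the comparison is among vectors with identical margins. I would guard against this by explicitly recording the $\mathrm{Bern}(1/2)$ marginals and citing the Fréchet-bound/supermodular maximality result; everything else is a direct application of results already established in the excerpt.
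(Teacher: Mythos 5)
Your proposal is correct and follows essentially the same route as the paper's own proof: identify $\Spp$ with the comonotonic Bernoulli sequence of Example \ref{ex:I-max-dep}, invoke the supermodular maximality of the comonotonic vector among vectors with fixed marginals (the paper cites Theorem 9.A.21 of \cite{shaked2007stochastic} for exactly this), and conclude via Theorem \ref{thm:Resultats}. Your additional care in recording the common $\mathrm{Bern}(1/2)$ marginals is a sound precaution but does not change the argument.
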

\begin{proof}
	For any $k \in \mathbb{N}_1$, let $(I_0^{\dagger}, I_1^{\dagger}, \dots, I_k^{\dagger})$ form a vector of comonotonic rvs. Then, for any random vector $(I_0, I_1, \dots, I_k)$, we have from \cite[Theorem 9.A.21]{shaked2007stochastic} that $(I_0, I_1, \dots, I_k) \preceq_{sm} (I_0^{\dagger}, I_1^{\dagger}, \dots, I_k^{\dagger})$. It follows from Theorem \ref{thm:Resultats} that $S^{\dagger}$ is larger under the increasing convex order than any other aggregate claim amount rv within a CRM $(N, \underline{X}) \in \aleph^{FGM}$. Noticing that $S^{\dagger} \eqd \Spp$ completes the proof. 
\end{proof}



\subsection{Negative dependence}

As stated in Section \ref{sec:introduction}, capturing negative dependence between $N$ and $\underline{X}$ is important since one observes this phenomenon on non-life insurance data. From Remark \ref{rem:AnalysisES}, we can conclude that the CRM $(N, \underline{X}) \in \aleph^{FGM*}$ that leads to the smallest expected value for the aggregate claim amount rv is obtained when $\theta_{0j} = -1$ for all $j \in \mathbb{N}_1$. This CRM corresponds to the model described in Example \ref{ex:I-max-dep-reverse}, that is, the CRM that leads to $\Smp$ (one may also show that $E[\Smp]$ is a lower bound for the expected value of the aggregate claim amount rv within any CRM $(N, \underline{X}) \in \aleph^{FGM}$). However, one has to be careful by concluding that $\Smp$ is a ``safe'' aggregate claim amount rv. Indeed, $\Smp$ is not the smallest (minimal) rv under the increasing convex order for aggregate claim amount rvs within CRMs $(N, \underline{X}) \in \aleph^{FGM*}$. We will investigate the effect of dependence on different dependence structures in the following example. 




\begin{example}\label{ex:finite-support}
    Consider a CRM where $\gamma_N(0) = \gamma_N(1) = 0.05$ and $\gamma_N(2) = 0.9$.  
    Further assume that $X_1 \eqd X_2 \eqd \mathrm{Gamma}(4, 1/100)$.
    In this example, we investigate the effect of dependence for eight CRMs within $\aleph^{FGM*}$. 
    Model \ref{modelindep} corresponds to the classical CRM, while Models \ref{modelsmp} and \ref{modelspp} are described in Examples \ref{ex:I-max-dep} and \ref{ex:I-max-dep-reverse}, respectively. 
    The dependence structure of Model \ref{modelsmm} is the extreme negative dependence in the case of exchangeable FGM copulas provided in Theorem 7 of \cite{blier2023exchangeable}. 
    In Models \ref{modelindepm} and \ref{model6}, the underlying Bernoulli rv $I_0$ are independent of $I_1$ and $I_2$, but $I_1$ and $I_2$ are respectively countermonotonic and comonotonic.     
    Models \ref{model4} and \ref{model7} are defined with the trivariate FGM copula with 2-independence and 3-dependence, studied in \cite[Section 7.2]{blier2021stochastic}.
    
    Table \ref{tab:ex-extremal-points} presents the dependence parameters of Models \ref{modelsmp} to \ref{modelspp} with the first two moments and the TVaR at level $\kappa=0.99$ of the aggregate claim amount rv $S$. We sort the rows by the values of the expectation of the aggregate claim amount rv. 
    We can apply Theorem \ref{thm:Resultats} to compare the aggregate claim amount rvs 
    \begin{equation}
    \label{eq:InegalitesSSSS}
        \Smm \preceq_{icx} \Sindep 
        \preceq_{icx} \Sindepp \preceq_{icx} \Spp
    \end{equation}
    and 
    \begin{equation*}
         \Sindepm \preceq_{icx} \Sindep 
        \preceq_{icx} \Sindepp \preceq_{icx} \Spp.
    \end{equation*}
    Therefore, it is unsurprising that values of the $\mathrm{TVaR}$ of the aggregate claim amount rvs defined within Models \ref{modelsmm}, \ref{modelindep}, \ref{model6} and \ref{modelspp} are ordered along the inequalities in \eqref{eq:InegalitesSSSS}. However, Models \ref{modelsmm} and \ref{modelindepm} are not comparable under the supermodular order; we cannot compare $\Smm$ and $\Sindepm$ under the increasing convex order. 
    As mentioned in \cite[Section 7.2]{blier2021stochastic}, we cannot refer to Theorem \ref{thm:Resultats} to compare the aggregate claim amount rvs defined within Models \ref{model4} and \ref{model7}.
    We can apply Theorem \ref{thm:Resultats} to compare the aggregate claim amount rvs defined within Models \ref{modelsmp}, \ref{model6} and \ref{modelspp}; we have
    \begin{equation*}
            \Smp \preceq_{icx} \Sindepp \preceq_{icx} \Spp.
    \end{equation*}
    However, the aggregate claim amount rvs defined within Models \ref{modelsmp}, \ref{model4} and \ref{modelindep} are not comparable. 
    
    
\begin{table}[ht]
	\centering
	\begin{tabular}{cccccccc}
		        Model          &    $S$     & $\theta_{01}$ & $\theta_{12}$ & $\theta_{012}$ & $E[S]$ & $E\left[S^2\right]$ & $TVaR_{0.99}(S)$   \\ \hline
		 \newtag{1}{modelsmp}  &   $\Smp$   &      -1       &       1       &       0        & 724.96 &      650248.05      &     1810.88        \\
		 \newtag{2}{modelsmm}  &   $\Smm$   &     -1/3      &     -1/3      &       0        & 734.99 &      641060.55      &     1690.24        \\
		  \newtag{3}{modelindepm}   & $\Sindepm$ &       0       &      -1       &       0        & 740.00 &      636466.80      &     1585.99        \\
		  \newtag{4}{model4}   &     --     &       0       &       0       &       1        & 740.00 &      655846.68      &     1731.00        \\
		\newtag{5}{modelindep} & $\Sindep$  &       0       &       0       &       0        & 740.00 &      658000.00      &     1742.28        \\
		  \newtag{6}{model6}   & $\Sindepp$ &       0       &       1       &       0        & 740.00 &      679533.20      &     1827.92        \\
		  \newtag{7}{model7}   &     --     &       0       &       0       &       -1       & 740.00 &      660153.32      &     1752.93        \\ 
		 \newtag{8}{modelspp}  &   $\Spp$   &       1       &       1       &       0        & 755.04 &      708818.36      &     1843.25      
	\end{tabular}
	\caption{Results for the aggregate claim amount rv within Models \ref{modelsmp} to \ref{modelspp} in Example \ref{ex:finite-support}.}
	\label{tab:ex-extremal-points}
\end{table}
\end{example}

\section{Discussion}\label{sect:conclusion}

In this paper, we study the effect of dependence within CRMs with full dependence. In particular, we study the aggregate claim amount rv derived from CRMs $(N, \underline{X})\in \aleph^{FGM*}$. We develop convenient representations for the moments and Laplace-Stieltjes transforms of aggregate claim amount rvs. Since the entire class of multivariate symmetric Bernoulli distributions map to the class of FGM copulas, it facilitates the investigation of the effect of dependence on the resulting aggregate claim amount rv within a CRM. Through examples, we show that, even if a FGM copula admits moderate dependence, it significantly affects the resulting aggregate claim amount rv defined within the corresponding CRM, especially when the dependence structure between the frequency and the claim amounts is negative. 

The expressions for the moments and the LST developed in this paper are more convenient than the results previously explored in the literature of copula-based CRMs. Within the previous approach, the complexity arises from the formation of the CRM based on \eqref{eq:cdf-nx1x2x3}, representing the cumulative distribution function of a mixed random vector, where $N$ serves as a frequency rv and the sequence $\{X_j\}_{\{j \geq 1\}}$ can be discrete, continuous, or mixed. Generally, to obtain formulas for moments or LSTs, it is essential to employ expressions involving the conditional cdf $F_{X_1,\dots, X_k \vert N=n}$ and the conditional density function $f_{X_1,\dots, X_k \vert N=n}$, applicable for $k\in \mathbb{N}_1$ and $n \in A_N$. These expressions tend to be cumbersome; hence, deriving moments or LSTs that yield simple expressions is uncommon. Within this paper, we circumvent the difficulties arising from copulas with discrete rvs by using the stochastic representation proposed in Theorem \ref{thm:stochastic-representation-s}. 
We have built on this stochastic representation to analyze the impact of dependence on the aggregate claim amount rv $S$. We have found that within the class $\aleph^{FGM}$, the most dangerous case under the increasing convex order is when all components of the symmetric Bernoulli random vector are comonotonic, meaning that the dependence structure of $(N, \underline{X})$ is the extreme positive dependence FGM copula. The impact of negative dependence has shown to be more difficult to infer. Caution is required to find the safest CRM when the number of claims exhibits negative dependence on the individual claim amounts. We have been able to order different CRMs under the increasing convex order, but no general statement under negative dependence.

Further work includes expanding the results from this paper to generalized FGM copulas and Bernstein copulas. Our results extend to constructing multivariate versions of copula-based CRMs, leading to multi-period or multi-peril CRMs; we defer this analysis to other work. 

\section*{Funding}
This work was partially supported by the Natural Sciences and Engineering Research Council of Canada (Cossette: 04273; Marceau: 05605). The authors report there are no competing interests to declare. We thank the handling editor and the anonymous referees for their insightful comments. 

\bibliographystyle{apalike}
\bibliography{ref}

\begin{thebibliography}{}

\bibitem[Ahn et~al., 2021]{ahn2021copula}
Ahn, J.~Y., Fuchs, S., and Oh, R. (2021).
\newblock A copula transformation in multivariate mixed discrete-continuous
  models.
\newblock {\em Fuzzy Sets and Systems}, 415:54--75.

\bibitem[Ahn et~al., 2022]{younahn2022simple}
Ahn, J.~Y., Jeong, H., and Lu, Y. (2022).
\newblock A simple {{Bayesian}} state-space approach to the collective risk
  models.
\newblock {\em Scandinavian Actuarial Journal}, pages 1--21.

\bibitem[Albrecher et~al., 2017]{albrecher2017}
Albrecher, H., Teugels, J.~L., and Beirlant, J. (2017).
\newblock {\em Reinsurance: Actuarial and Statistical Aspects}.
\newblock John Wiley \& Sons.

\bibitem[Baker, 2008]{baker2008OrderstatisticsbasedMethodConstructing}
Baker, R. (2008).
\newblock An order-statistics-based method for constructing multivariate
  distributions with fixed marginals.
\newblock {\em Journal of Multivariate Analysis}, 99(10):2312--2327.

\bibitem[Barg{\`e}s et~al., 2009]{barges2009tvarbased}
Barg{\`e}s, M., Cossette, H., and Marceau, {\'E}. (2009).
\newblock {{TVaR-based}} capital allocation with copulas.
\newblock {\em Insurance: Mathematics and Economics}, 45(3):348--361.

\bibitem[Belzunce et~al., 2006]{belzunce2006variability}
Belzunce, F., Ortega, E.-M., Pellerey, F., and Ruiz, J.~M. (2006).
\newblock Variability of total claim amounts under dependence between claims
  severity and number of events.
\newblock {\em Insurance: Mathematics and Economics}, 38(3):460--468.

\bibitem[{Blier-Wong} et~al., 2022]{blier2021stochastic}
{Blier-Wong}, C., Cossette, H., and Marceau, E. (2022).
\newblock Stochastic representation of {{FGM}} copulas using multivariate
  {{Bernoulli}} random variables.
\newblock {\em Computational Statistics \& Data Analysis}, 173.

\bibitem[Blier-Wong et~al., 2023a]{blier2023exchangeable}
Blier-Wong, C., Cossette, H., and Marceau, E. (2023a).
\newblock Exchangeable {FGM} copulas.
\newblock {\em Advances in Applied Probability}.

\bibitem[Blier-Wong et~al., 2023b]{blier2023risk}
Blier-Wong, C., Cossette, H., and Marceau, E. (2023b).
\newblock Risk aggregation with {{FGM}} copulas.
\newblock {\em Insurance: Mathematics and Economics}, 111:102--120.

\bibitem[Cambanis, 1977]{cambanis1977properties}
Cambanis, S. (1977).
\newblock Some properties and generalizations of multivariate
  {{Eyraud-Gumbel-Morgenstern}} distributions.
\newblock {\em Journal of Multivariate Analysis}, 7(4):551--559.

\bibitem[Cossette et~al., 2019]{cossette2019collective}
Cossette, H., Marceau, E., and Mtalai, I. (2019).
\newblock Collective risk models with dependence.
\newblock {\em Insurance: Mathematics and Economics}, 87:153--168.

\bibitem[Czado et~al., 2012]{czado2012mixed}
Czado, C., Kastenmeier, R., Brechmann, E.~C., and Min, A. (2012).
\newblock A mixed copula model for insurance claims and claim sizes.
\newblock {\em Scandinavian Actuarial Journal}, 2012(4):278--305.

\bibitem[Denuit et~al., 2006]{denuit2006actuarial}
Denuit, M., Dhaene, J., Goovaerts, M., and Kaas, R. (2006).
\newblock {\em Actuarial {T}heory for {D}ependent {R}isks: {M}easures, {O}rders
  and {M}odels}.
\newblock Wiley.

\bibitem[Erhardt and Czado, 2012]{erhardt2012modeling}
Erhardt, V. and Czado, C. (2012).
\newblock Modeling dependent yearly claim totals including zero claims in
  private health insurance.
\newblock {\em Scandinavian Actuarial Journal}, 2012(2):106--129.

\bibitem[Frees et~al., 2016]{frees2016multivariate}
Frees, E.~W., Lee, G., and Yang, L. (2016).
\newblock Multivariate frequency-severity regression models in insurance.
\newblock {\em Risks}, 4(1):4.

\bibitem[Garrido et~al., 2016]{garrido2016generalized}
Garrido, J., Genest, C., and Schulz, J. (2016).
\newblock Generalized linear models for dependent frequency and severity of
  insurance claims.
\newblock {\em Insurance: Mathematics and Economics}, 70:205--215.

\bibitem[Genest and Favre, 2007]{genest2007everything}
Genest, C. and Favre, A.-C. (2007).
\newblock Everything you always wanted to know about copula modeling but were
  afraid to ask.
\newblock {\em Journal of Hydrologic Engineering}, 12(4):347--368.

\bibitem[Gschl{\"o}{\ss}l and Czado, 2007]{gschlossl2007spatial}
Gschl{\"o}{\ss}l, S. and Czado, C. (2007).
\newblock Spatial modelling of claim frequency and claim size in non-life
  insurance.
\newblock {\em Scandinavian Actuarial Journal}, 2007(3):202--225.

\bibitem[Gui et~al., 2018]{gui2018fitting}
Gui, W., Huang, R., and Lin, X.~S. (2018).
\newblock Fitting the {Erlang} mixture model to data via a {GEM-CMM} algorithm.
\newblock {\em Journal of Computational and Applied Mathematics}, 343:189--205.

\bibitem[Johnson and Kotz, 1977]{johnson1977generalized}
Johnson, N.~L. and Kotz, S. (1977).
\newblock On some generalized {{Farlie-Gumbel-Morgenstern}}
  distributions-{{II}} regression, correlation and further generalizations.
\newblock {\em Communications in Statistics - Theory and Methods},
  6(6):485--496.

\bibitem[Klugman et~al., 2018]{klugman2018loss}
Klugman, S.~A., Panjer, H.~H., and Willmot, G.~E. (2018).
\newblock {\em Loss Models: From Data to Decisions}.
\newblock Wiley Series in Probability and Statistics. {Society of Actuaries;
  Wiley}.

\bibitem[Kr{\"a}mer et~al., 2013]{kramer2013total}
Kr{\"a}mer, N., Brechmann, E.~C., Silvestrini, D., and Czado, C. (2013).
\newblock Total loss estimation using copula-based regression models.
\newblock {\em Insurance: Mathematics and Economics}, 53(3):829--839.

\bibitem[Landriault et~al., 2015]{landriault2015note}
Landriault, D., Moutanabbir, K., and Willmot, G.~E. (2015).
\newblock A note on order statistics in the mixed {{Erlang}} case.
\newblock {\em Statistics \& Probability Letters}, 106:13--18.

\bibitem[Lee et~al., 2022]{lee2022insurance}
Lee, G.~Y., Liu, H., and Shi, P. (2022).
\newblock Insurance risk retention under dependent risks.
\newblock {\em Working paper}.

\bibitem[Lee and Lin, 2010]{lee2010modeling}
Lee, S. C.~K. and Lin, X.~S. (2010).
\newblock Modeling and evaluating insurance losses via mixtures of {{Erlang}}
  distributions.
\newblock {\em North American Actuarial Journal}, 14(1):107--130.

\bibitem[Liu and Wang, 2017]{liu2017collective}
Liu, H. and Wang, R. (2017).
\newblock Collective risk model with dependence uncertainty.
\newblock {\em ASTIN Bulletin: The Journal of the IAA}, 47(2):361--389.

\bibitem[M{\"u}ller, 1997]{muller1997stop}
M{\"u}ller, A. (1997).
\newblock Stop-loss order for portfolios of dependent risks.
\newblock {\em Insurance: Mathematics and Economics}, 21(3):219--223.

\bibitem[M{\"u}ller and Stoyan, 2002]{muller2002comparison}
M{\"u}ller, A. and Stoyan, D. (2002).
\newblock {\em Comparison {M}ethods for {S}tochastic {M}odels and {R}isks}.
\newblock Wiley.

\bibitem[Oh et~al., 2021a]{oh2021copulabased}
Oh, R., Ahn, J.~Y., and Lee, W. (2021a).
\newblock On copula-based collective risk models: From elliptical copulas to
  vine copulas.
\newblock {\em Scandinavian Actuarial Journal}, 2021(1):1--33.

\bibitem[Oh et~al., 2021b]{oh2021predictive}
Oh, R., Lee, Y., Zhu, D., and Ahn, J.~Y. (2021b).
\newblock Predictive risk analysis using a collective risk model: {{Choosing}}
  between past frequency and aggregate severity information.
\newblock {\em Insurance: Mathematics and Economics}, 96:127--139.

\bibitem[Panjer and Willmot, 1992]{panjer1992insurance}
Panjer, H.~H. and Willmot, G.~E. (1992).
\newblock {\em Insurance Risk Models}.
\newblock Society of Actuaries.

\bibitem[Parodi, 2023]{parodi2023pricing}
Parodi, P. (2023).
\newblock {\em Pricing in General Insurance}.
\newblock CRC Press.

\bibitem[Rolski et~al., 1999]{rolski1999stochastic}
Rolski, T., Schmidli, H., Schmidt, V., and Teugels, J.~L. (1999).
\newblock {\em Stochastic {{Processes}} for {{Insurance}} and {{Finance}}}.
\newblock {John Wiley \& Sons}.

\bibitem[Scheffe and Tukey, 1945]{scheffe1945nonparametric}
Scheffe, H. and Tukey, J.~W. (1945).
\newblock Non-parametric estimation. i. validation of order statistics.
\newblock {\em The Annals of Mathematical Statistics}, 16(2):187--192.

\bibitem[Shaked and Shanthikumar, 2007]{shaked2007stochastic}
Shaked, M. and Shanthikumar, J.~G. (2007).
\newblock {\em Stochastic {O}rders}.
\newblock Springer.

\bibitem[Shi et~al., 2015]{shi2015dependent}
Shi, P., Feng, X., and Ivantsova, A. (2015).
\newblock Dependent frequency\textendash severity modeling of insurance claims.
\newblock {\em Insurance: Mathematics and Economics}, 64:417--428.

\bibitem[Shi and Zhao, 2020]{shi2020regression}
Shi, P. and Zhao, Z. (2020).
\newblock Regression for copula-linked compound distributions with applications
  in modeling aggregate insurance claims.
\newblock {\em Annals of Applied Statistics}, 14:357--380.

\bibitem[Wald, 1945]{wald1945some}
Wald, A. (1945).
\newblock Some generalizations of the theory of cumulative sums of random
  variables.
\newblock {\em The Annals of Mathematical Statistics}, 16(3):287--293.

\bibitem[Woo and Cheung, 2013]{woo2013note}
Woo, J.-K. and Cheung, E.~C. (2013).
\newblock A note on discounted compound renewal sums under dependency.
\newblock {\em Insurance: Mathematics and Economics}, 52(2):170--179.

\bibitem[W{\"u}thrich and Merz, 2008]{wuthrich2008stochastic}
W{\"u}thrich, M.~V. and Merz, M. (2008).
\newblock {\em Stochastic Claims Reserving Methods in Insurance}.
\newblock Wiley.

\bibitem[Wüthrich, 2022]{wuthrich2022non}
Wüthrich, M.~V. (2022).
\newblock {\em Non-Life Insurance: Mathematics \& Statistics}.
\newblock Available at SSRN 2319328.

\end{thebibliography}

\appendix

\section{Background on stochastic orders}\label{app:defn-stochastic-orders}

In this paper, we rely on stochastic orders to study the effect of dependence on different CRMs and aggregate claim amount rvs within CRMs. Throughout the paper, we use notions of stochastic orders, assuming the reader is familiar with these notions. 
We define the relevant stochastic orders to make the paper self-contained. 

\begin{definition}
	Let $Z$ and $Z^{\dagger}$ be two rvs with cdfs $F_Z$ and $F_{Z'}$, respectively. We say that $Z$ is smaller than $Z^{\dagger}$ under the usual stochastic order, 
    denoted $Z \preceq_{st} Z^{\dagger}$, 
    if $F_Z(x) \geq F_{Z'}(x)$, for all $x \in \mathbb{R}$. 
    Equivalently, $Z \preceq_{st} Z^{\dagger}$,
    if $E[\phi(Z)] \leq E[\phi(Z^{\dagger})]$ for every increasing function $\phi$, when the expectations exist.
\end{definition}

\begin{definition}
	Let $Z$ and $Z^{\dagger}$ be two rvs with finite expectations. We say that $Z$ is smaller than $Z^{\dagger}$ under the (increasing) convex order, 
    denoted as $Z (\preceq_{icx}) \preceq_{cx} Z^{\dagger}$, 
    if $E[\phi(Z)] \leq E[\phi(Z^{\dagger})]$ for every (increasing) convex function $\phi$, when the expectations exist.
\end{definition}

In actuarial science, the increasing convex order is typically called the stop loss order. Among the relevant implications of the relation 
$Z \preceq_{icx} Z^{\dagger}$, we mention the inequalities $E[Z] \leq  E[Z^{\dagger}]$ and $\text{TVaR}_{\kappa}(Z) \leq \text{TVaR}_{\kappa}(Z^{\dagger})$, for all $\kappa \in (0, 1)$. 
A more detailed list of the implications of the increasing convex order is provided in \cite{muller2002comparison,denuit2006actuarial, shaked2007stochastic}. 
Moreover, if $Z \preceq_{icx} Z^{\dagger}$ 
and $E[Z] = E[Z^{\dagger}]$, we say that $Z$ is smaller than $Z^{\dagger}$ under the convex order, denoted as $Z \preceq_{cx} Z^{\dagger}$.

\begin{definition}
    Let $Z$ and $Z^{\dagger}$ be two rvs with cdfs $F_Z$ and $F_{Z^{\dagger}}$, respectively.
    If 
    $F_Z^{-1}(u) - F_Z^{-1}(v) \leq F_{Z^{\dagger}}^{-1}(u) - F_{Z^{\dagger}}^{-1}(v)$,
    for $0<v \leq u<1$,
    then $Z$ is said to be smaller than $Z^{\dagger}$ according to the dispersive order, which we denote 
    $X \preceq_{disp} Z^{\dagger}$.
\end{definition}

To qualify the impact of dependence within the same family $\aleph^{FGM}$ of CRMs with FGM dependence, we use the supermodular order as presented in Sections 3.8 and 3.9 of \cite{muller2002comparison}.
\begin{definition}[Supermodular order] \label{defSupermodularOrder}
	We say 
    $(V_0, V_1,\dots,V_k)$ is smaller 
    than $(V_0^{\dagger}, V_1^{\dagger},\dots,V_k^{\dagger})$ under the supermodular order, 
    denoted $(V_0, V_1,\dots,V_k) \preceq_{sm}(V_0^{\dagger}, V_1^{\dagger},\dots,V_k^{\dagger})$, if $E\left[\phi (V_0, V_1,\dots,V_k)\right] \leq E\left[\phi (V_0^{\dagger}, V_1^{\dagger},\dots,V_k^{\dagger})\right]$ for all supermodular functions $\phi $, given that the expectations exist. 
    A function $\phi :\mathbb{R}^{k+1}\rightarrow \mathbb{R}$ is said to be supermodular if
	\begin{align*}
		\phi (x_0, x_{1},&\dots,x_{i}+\varepsilon ,\dots,x_{j}+\delta ,\dots,x_{k})-\phi
		(x_0,x_{1},\dots,x_{i}+\varepsilon ,\dots,x_{j},\dots,x_{k}) \\
		& \quad \geq \phi (x_0,x_{1},\dots,x_{i},\dots,x_{j}+\delta ,\dots,x_{k})-\phi
		(x_0,x_{1},\dots,x_{i},\dots,x_{j},\dots,x_{k})
	\end{align*}
	holds for all $(x_0,x_1, \dots, x_k)\in \mathbb{R}^{k+1}$, $0\leq
	i < j\leq k$\ and all $\varepsilon$, $\delta >0$.
\end{definition}
As mentioned in Section 3.8 of \cite{muller2002comparison}, the supermodular order satisfies the nine desired properties for dependence orders. One finds additional details on the supermodular order and its application in  \cite{shaked2007stochastic} and \cite{denuit2006actuarial}. 

Establishing the supermodular order between two vectors of rvs allows one to establish the increasing convex order between the sums of the components of those two vectors, as stated in the following proposition, introduced in \cite[Theorem 3.1]{muller1997stop}); the proof is also provided in \cite[Theorem 8.3.3]{muller2002comparison} or \cite[Proposition 6.3.9]{denuit2006actuarial}. 
\begin{proposition}\label{prop:order-cx-s}
	If $(X_1, \dots X_k) \preceq_{sm} (X_1^{\dagger}, \dots X_k^{\dagger})$ holds, then $\sum_{j=1}^k X_j \preceq_{icx} \sum_{j=1}^k X_j^{\dagger}$. 
\end{proposition}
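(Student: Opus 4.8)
The plan is to reduce the one-dimensional increasing convex order on the sums to the multivariate supermodular order on the vectors by choosing a single well-adapted test function. By definition, the conclusion $\sum_{j=1}^k X_j \preceq_{icx} \sum_{j=1}^k X_j^{\dagger}$ amounts to the inequality $E[\psi(\sum_{j} X_j)] \leq E[\psi(\sum_{j} X_j^{\dagger})]$ for every increasing convex $\psi : \mathbb{R} \to \mathbb{R}$. It therefore suffices to fix an arbitrary such $\psi$ and establish this single scalar inequality from the supermodular hypothesis.

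First I would introduce the composite function $\phi : \mathbb{R}^k \to \mathbb{R}$ defined by $\phi(x_1, \dots, x_k) = \psi(x_1 + \dots + x_k)$. The crucial observation is that $\phi$ is supermodular whenever $\psi$ is convex. To verify this, fix indices $i < j$ and increments $\varepsilon, \delta > 0$, and write $a = \sum_{m=1}^k x_m$ for the total coordinate-sum at the base point. Substituting into the defining inequality of Definition \ref{defSupermodularOrder} and cancelling the contribution of all coordinates other than $i$ and $j$, the supermodularity condition collapses to $\psi(a + \varepsilon + \delta) - \psi(a + \varepsilon) \geq \psi(a + \delta) - \psi(a)$. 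This is precisely the statement that the increment of $\psi$ across an interval of length $\delta$ is nondecreasing in its left endpoint, which is exactly convexity of $\psi$ expressed through finite differences. Note that this step uses only convexity, not monotonicity or any smoothness of $\psi$, since the supermodular criterion is itself phrased in terms of increments rather than derivatives.

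Next I would apply the hypothesis $(X_1, \dots, X_k) \preceq_{sm} (X_1^{\dagger}, \dots, X_k^{\dagger})$ to this particular supermodular $\phi$, obtaining $E[\phi(X_1, \dots, X_k)] \leq E[\phi(X_1^{\dagger}, \dots, X_k^{\dagger})]$, which is verbatim $E[\psi(\sum_{j} X_j)] \leq E[\psi(\sum_{j} X_j^{\dagger})]$. As $\psi$ was an arbitrary increasing convex function, this delivers $\sum_{j} X_j \preceq_{icx} \sum_{j} X_j^{\dagger}$, completing the argument.

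There is no obstacle of real substance here; the supermodularity computation is routine, and the whole proof hinges on recognizing that composition of a convex map with the summation functional preserves supermodularity. The only point warranting mild care is integrability: the supermodular order may be invoked for $\phi$ only when $E[\phi(X)]$ and $E[\phi(X^{\dagger})]$ exist. This is ensured by the finite-mean assumption built into the increasing convex order, after a standard truncation-and-limit argument (approximating $\psi$ by bounded increasing convex functions such as $\min(\psi, c)$ and passing to the limit by monotone convergence) if one prefers not to assume integrability of $\phi$ directly.
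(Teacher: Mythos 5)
Your main argument is correct and is essentially the same proof found in the sources the paper cites for this proposition (\cite[Theorem 3.1]{muller1997stop}, \cite[Theorem 8.3.3]{muller2002comparison}, \cite[Proposition 6.3.9]{denuit2006actuarial}): compose an arbitrary increasing convex $\psi$ with the summation map, check via finite differences that $\phi(x_1,\dots,x_k)=\psi(x_1+\cdots+x_k)$ is supermodular because convexity of $\psi$ is exactly monotonicity of its increments, and then invoke the supermodular hypothesis with this test function. One parenthetical in your integrability discussion does not work as stated: the truncation $\min(\psi,c)$ is in general \emph{not} convex (take $\psi(x)=x$ and $c=0$, giving the concave function $\min(x,0)$), so these truncations are not admissible test functions for either order and the proposed truncation-and-limit argument fails. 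The standard repair is to test instead with the stop-loss functions $\psi_t(x)=(x-t)_+$, which are increasing, convex, generate the order $\preceq_{icx}$, and have finite expectations whenever the sums have finite means; alternatively, observe that both Definition \ref{defSupermodularOrder} and the paper's definition of $\preceq_{icx}$ require the expectation inequality only ``when the expectations exist,'' so under the paper's conventions your composition $\phi$ can be used directly and no truncation is needed.
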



\section{On Wald's equation}\label{app:wald}

From \eqref{eq:expected-value-s-natural} in Theorem \ref{thm:ES}, one can see that Wald's equation (see \cite{wald1945some}) in \eqref{eq:esperance-indep} holds within the context of this paper when we have $\theta_{0j} = 0$, for $j \in \mathbb{N}_1$. Indeed, Wald's equation holds under the assumption that 
\begin{equation}\label{eq:suff-cond-wald}
	E\left[X_j \times \mathbbm{1}_{\{N \geq n\}}\right] = E[X_j] \times \Pr(N \geq n)
\end{equation}
for all $j \in \mathbb{N}_1$ and $n \in A_N$; this is indeed the case for CRMs in $\aleph^{FGM}$ with $\theta_{0j} = 0$, for $j \in \mathbb{N}_1$. 

Let us consider a count rv $N$ with support of $\{0, 1, 2\}$ and a CRM $(N, X_1, X_2) \in \aleph^{FGM}$. The FGM copula associated with that CRM has four parameters $\theta_{01}, \theta_{02}, \theta_{12}$ and $\theta_{012}$. The conditions of Wald's equation hold if $\theta_{01} = \theta_{02} = 0$. In that case, we have that 
$$F_{N,X_1,X_2}(k,x_1,x_2) = C(F_N(k),F_{X_1}(x_1),F_{X_2}(x_2))$$
with
\begin{equation}\label{eq:copula-wald}
	C(u_0, u_1, u_2) = u_0u_1u_2\left(\theta_{12}\overline{u}_1 \overline{u}_2 + \theta_{012}\overline{u}_0\overline{u}_1\overline{u}_2\right),
\end{equation}
The admissible domain for the parameters in \eqref{eq:copula-wald} is represented by the region $\{(\theta_{12}, \theta_{012})\in [-1, 1]^2: |\theta_{12}| + |\theta_{012}| \leq 1\}$ and corresponds to a convex hull with extremal points $(1, 0), (0, 1), (-1, 0)$ and $(0, -1)$. Note that for any copula in \eqref{eq:copula-wald}, the expected value of the aggregate claim amount rv $S$ within the CRM is $E\left[S\right]=E\left[\Sindep\right] = E[N]E[X]$. However, the components of the CRM are not independent. Indeed, we have
\begin{align*}
    Var(S) &= E[N]Var(X) + \frac{\theta_{12}}{4}\left(\mu_{N}^{(2)} - \mu_{N}\right)\left(\mu_{X_{[2]}}-\mu_{X_{[1]}}\right)^2\\
	&\qquad\qquad -\frac{\theta_{012}}{8}\left(\mu_{N_{[2]}}^{(2)} - \mu_{N_{[2]}} - \mu_{N_{[1]}}^{(2)} + \mu_{N_{[1]}}\right)\left(\mu_{X_{[2]}}-\mu_{X_{[1]}}\right)^2\\
 &\qquad\qquad + Var(N)E[X]^2,
\end{align*}
where the first term comes from $C_{EVar}^{Var}(S)$, the second and third terms come from $C_{ECov}^{Var}(S)$ and the final term comes from the component $C_{VarE}^{Var}(S)$. 

\begin{example}\label{ex:wald}
    Let $N$ be a discrete rv defined by the pmf $\gamma_N(0) = 1/16, \gamma_N(1) = 3/8, \gamma_N(2) = 9/16$ and $\gamma_N(n) = 0$ for $n \notin \mathbb{N}_0 \setminus \{0, 1, 2\}$. Further, let $X_1 \,{\buildrel \mathrm{d} \over =}\,X_2 \eqd \mathrm{Gamma}(5, 3/8)$. Here, whenever $|\theta_{12}| + |\theta_{012}| \leq 1$, we have $E[S] = 20$, $C_{EVar}^{Var}(S) = 160/3$ and $C_{VarE}^{Var}(S) = 200/3$. We present the effect of the variance for different dependence structures in Table \ref{tab:ex-wald}. 
\begin{table}[ht]
	\centering
	\begin{tabular}{rrcrr}
		$\theta_{12}$ & $\theta_{012}$ &        $S$ & $C_{ECov}^{Var}(S)$ & $Var(S)$ \\ \hline
		            0 &              0 &  $\Sindep$ &                               0 &   120.00 \\
		            1 &              0 & $\Sindepp$ &                           12.11 &   132.11 \\
		            -1 &              0 & $\Sindepm$ &                          -12.11 &   107.89 \\
		            0 &              1 &         -- &                           -5.30 &   114.70 \\
		            0 &             -1 &         -- &                            5.30 &   125.30
	\end{tabular}
	\caption{Effect of $\theta_{12}$ and $\theta_{123}$ on $Var(S)$ when $E[S] = E\left[\Sindep\right]$ in Example \ref{ex:wald}.}\label{tab:ex-wald}
\end{table}

\end{example}

\section{Decomposition of the variance}

This section contains an example that illustrates the impact of the claim amount rv, the claim counts rv, and the dependence structure on the three components of the variance in Theorem \ref{thm:var-s}.

\begin{example}
\label{ex:MegaExemple}
Let us first introduce a CRM using Theorem \ref{thm:stochastic-representation-s} where $\{I_j\}_{j \geq 1}$ is a sequence of comonotonic rvs and where $I_0$ is independent of the sequence $\{I_j\}_{j \geq 1}$. We will denote the aggregate claim amount rv within this CRM as $\Sindepp$. 

Within this example, $\mathrm{Ga}$ and $\mathrm{Pa}$ refer respectively to $\mathrm{Gamma}(2, 1/1000)$ and $\mathrm{Pareto}(2.1, 2200)$ such that both severity distributions have mean 2000. Further, $\mathrm{Po}$ refers to a Poisson distribution with rate $\mu_N$, while $\mathrm{NB}$ refers to a negative binomial distribution with parameter $r = 2$ and success probability $1/(1 + \mu_N / r)$ such that the expected frequency is $\mu_N$. Within this example, we will let $\mu_N$ be either 2 or 100, such that the success probability is either 1/2 or 1/51, respectively. 

In Table \ref{tab:decomposition-variance}, we decompose the variance of $S$ into its three components. In general, we observe that we obtain a high coefficient of variation when $\mu_N = 2$ and when the claim amount rv has a higher variance (with the Pareto claim amounts); this is attributable to the increased impact of dependence due to the rvs $(X \vert N = 0)$ being unobserved. Also, remark that $C_{ECov}^{Var}(S)$ is the same for $\Smp$ and $\Spp$. For $\Sindep$ and $\Sindepp$, the components $C_{EVar}^{Var}(S)$ and $C_{VarE}^{Var}(S)$ are the same, but the higher value of $Var(\Sindepp)$ compared with $Var(\Sindep)$ is caused by the second term which captures the effect of dependence between claim amount rvs. 

\begin{table}[ht]
	\centering
	\resizebox{\textwidth}{!}{
		\begin{tabular}{rrrrrrrrrr}
			                 Dependence & $X \eqd$ & $N \eqd$ & $\mu_{N}$ & $E[S]$ &    $Var(S)$ & $\frac{\sqrt{Var(S)}}{E[S]}$ & $C_{EVar}^{Var}(S)$ & $C_{ECov}^{Var}(S)$ & $C_{VarE}^{Var}(S)$ \\ \hline
			    \multirow{8}{*}{$\Smp$} &       Ga &       Po &         2 &   3421 &     7465515 &                         0.80 &             3023803 &             1016498 &             3425214 \\
			                            &       Ga &       NB &         2 &   3222 &    10881173 &                         1.02 &             2748347 &             1104064 &             7028761 \\
			                            &       Pa &       Po &         2 &   2987 &   103531039 &                         3.41 &            98722551 &             3113024 &             1695465 \\
			                            &       Pa &       NB &         2 &   2639 &    84307967 &                         3.48 &            77817855 &             3381196 &             3108915 \\
			                            &       Ga &       Po &       100 & 195771 &  4502910641 &                         0.34 &           177003907 &          3724182732 &           601724003 \\
			                            &       Ga &       NB &       100 & 171596 & 11102653630 &                         0.61 &           149514222 &          3849370885 &          7103768524 \\
			                            &       Pa &       Po &       100 & 192600 & 22157902045 &                         0.77 &          7503591483 &         11405309616 &          3249000947 \\
			                            &       Pa &       NB &       100 & 150292 & 18896345454 &                         0.91 &          4984088480 &         11788698335 &          2123558639 \\ \hline
                 \multirow{8}{*}{$\Sindep$} &       Ga &       Po &         2 &   4000 &    12000000 &                         0.87 &             4000000 &                   0 &             8000000 \\
			                            &       Ga &       NB &         2 &   4000 &    20000000 &                         1.12 &             4000000 &                   0 &            16000000 \\
			                            &       Pa &       Po &         2 &   4000 &   168000000 &                         3.24 &           160000000 &                   0 &             8000000 \\
			                            &       Pa &       NB &         2 &   4000 &   176000000 &                         3.32 &           160000000 &                   0 &            16000000 \\
			                            &       Ga &       Po &       100 & 200000 &   600000000 &                         0.12 &           200000000 &                   0 &           400000000 \\
			                            &       Ga &       NB &       100 & 200000 & 20600000000 &                         0.72 &           200000000 &                   0 &         20400000000 \\
			                            &       Pa &       Po &       100 & 200000 &  8400000000 &                         0.46 &          8000000000 &                   0 &           400000000 \\
			                            &       Pa &       NB &       100 & 200000 & 28400000000 &                         0.84 &          8000000000 &                   0 &         20400000000 \\ \hline
			\multirow{8}{*}{$\Sindepp$} &       Ga &       Po &         2 &   4000 &    14250000 &                         0.94 &             4000000 &             2250000 &             8000000 \\
			                            &       Ga &       NB &         2 &   4000 &    23375000 &                         1.21 &             4000000 &             3375000 &            16000000 \\
			                            &       Pa &       Po &         2 &   4000 &   174890625 &                         3.31 &           160000000 &             6890625 &             8000000 \\
			                            &       Pa &       NB &         2 &   4000 &   186335938 &                         3.41 &           160000000 &            10335938 &            16000000 \\
			                            &       Ga &       Po &       100 & 200000 &  6225000000 &                         0.39 &           200000000 &          5625000000 &           400000000 \\
			                            &       Ga &       NB &       100 & 200000 & 29037500000 &                         0.85 &           200000000 &          8437500000 &         20400000000 \\
			                            &       Pa &       Po &       100 & 200000 & 25626562500 &                         0.80 &          8000000000 &         17226562500 &           400000000 \\
			                            &       Pa &       NB &       100 & 200000 & 54239843750 &                         1.16 &          8000000000 &         25839843750 &         20400000000 \\ \hline
			    \multirow{8}{*}{$\Spp$} &       Ga &       Po &         2 &   4579 &    20364862 &                         0.99 &             4181061 &             1016498 &            15167304 \\
			                            &       Ga &       NB &         2 &   4778 &    34658951 &                         1.23 &             4303903 &             1104064 &            29250984 \\
			                            &       Pa &       Po &         2 &   5013 &   244199489 &                         3.12 &           218842344 &             3113024 &            22244121 \\
			                            &       Pa &       NB &         2 &   5361 &   284658661 &                         3.15 &           239279661 &             3381196 &            41997804 \\
			                            &       Ga &       Po &       100 & 204229 &  7911324287 &                         0.44 &           185461456 &          3724182732 &          4001680099 \\
			                            &       Ga &       NB &       100 & 228404 & 45358727233 &                         0.93 &           206323009 &          3849370885 &         41303033339 \\
			                            &       Pa &       Po &       100 & 207400 & 28985692423 &                         0.82 &          8381458691 &         11405309616 &          9198924116 \\
			                            &       Pa &       NB &       100 & 249708 & 84641633438 &                         1.17 &         10880663037 &         11788698335 &         61972272066
		\end{tabular}
	}
	\caption{Expected value, variance, and three variance components for different CRMs. }
	\label{tab:decomposition-variance}
\end{table}

In the component $C_{EVar}^{Var}(S)=E[NVar(X\vert N)]$, it is the variance of $X$ that dominates, so the CRMs constructed with the Pareto distributions have higher values of $C_{EVar}^{Var}(S)$. On the other hand, the component $C_{VarE}^{Var}(S) = Var(NE[X\vert N])$ is dominated by the variance of $N$, so the third component is larger with $\mathrm{NB}$ rather than with $\mathrm{Po}$. The second component $C_{ECov}^{Var}(S)$ depends both on the expected spacing squared for the claim amount rv $X$ and the second factorial moment of $N$. 

It is interesting to note that there is not a component of the variance which dominates the others: the relative size of each component depends on a combination of the frequency rv $N$, the claim amount rv $X$ and the dependence structure. In particular,
\begin{itemize}
    \item The line which maximizes the ratio $C_{EVar}^{Var}(S)/Var(S)$ is $\Smp$ with $\mathrm{Po}(2)$ and $\mathrm{Pa}$; the proportion of $C_{EVar}^{Var}(S)$, $C_{ECov}^{Var}(S)$, $C_{VarE}^{Var}(S)$ are respectively 0.95, 0.03, 0.02.
    \item The line which maximizes the ratio $C_{ECov}^{Var}(S)/Var(S)$ is $\Sindepp$ with $\mathrm{Po}(100)$ and $\mathrm{Ga}$; the proportion of $C_{EVar}^{Var}(S)$, $C_{ECov}^{Var}(S)$, $C_{VarE}^{Var}(S)$ are respectively 0.03, 0.9, 0.06.
    \item The line which maximizes the ratio $C_{VarE}^{Var}(S)/Var(S)$ is $\Sindep$ with $\mathrm{NB}(100)$ and $\mathrm{Ga}$; the proportion of $C_{EVar}^{Var}(S)$, $C_{ECov}^{Var}(S)$, $C_{VarE}^{Var}(S)$ are respectively 0.01, 0, 0.99.    
\end{itemize}

\end{example}

\section{Additonal applications of Theorem \ref{thm:laplace-crm-identical}}
\label{sect:AdditionalApplicationsTheoremLaplace}

In Theorem \ref{thm:laplace-crm-identical}, when $N$ is independent of the sequence of claim amount rvs, the entire dependence structure is contained within the rv $K_n$ for $n\in \mathbb{N}_1$ and \eqref{eq:lse-with-k} becomes, for $t>0$, 
$$\mathcal{L}_{S}(t) = \gamma_{N}(0) + \sum_{n = 1}^{\infty}\sum_{k = 0}^{n}\gamma_N(n)\Pr(K_n = k)\mathcal{L}_{X_{[1]}}(t)^{n-k}\mathcal{L}_{X_{[2]}}(t)^{k} = E_{N}\left[\mathcal{L}_{X_{[1]}}(t)^N \mathcal{P}_{K_{N}}\left(\frac{\mathcal{L}_{X_{[2]}}(t)}{\mathcal{L}_{X_{[1]}}(t)}\right)\right].$$



In some cases, one may use Corollary \ref{cor:lst-min-max} to obtain closed-form expressions for LSTs, as the following example shows. 
\begin{example} \label{ex:exponentielle} 
Let $N \eqd \mathrm{Geometric}(p)$ and $\underline{X}$ form a sequence of identically distributed exponential rvs with mean $1/\beta$, for $\beta > 0$. Then, the LSTs of the aggregate claim amount rvs defined within the CRMs in Corollary \ref{cor:lst-min-max} become
\begin{align*}
	\mathcal{L}_{S^{(\bigtriangleup, \bigtriangleup)}}(t) &= p + (1-p) \left(
	\frac{1-p}{2}\mathcal{L}_{Y_2}(t) + \frac{\beta_4}{\beta_4 - \beta_3}\mathcal{L}_{Y_3}(t) + \frac{\beta_3}{\beta_3 - \beta_4}\mathcal{L}_{Y_4}(t) - \right.\\
	& \qquad \qquad \left.
	\frac{1-p}{2}\frac{\beta_6}{\beta_6 - \beta_5} \mathcal{L}_{Y_5}(t) -
	\frac{1-p}{2}\frac{\beta_5}{\beta_5 - \beta_6} \mathcal{L}_{Y_6}(t)
	\right)
\end{align*}
and
\begin{align*}
	\mathcal{L}_{S^{(\bigtriangledown, \bigtriangleup)}}(t) &= p + (1-p) \left(
	-\frac{1-p}{2} \mathcal{L}_{Y_1}(t) + \mathcal{L}_{Y_2}(t) + \right.\\
	& \qquad \qquad \left.
	\frac{1-p}{2}\frac{\beta_6}{\beta_6 - \beta_5} \mathcal{L}_{Y_5}(t) + \frac{1-p}{2}\frac{\beta_5}{\beta_5 - \beta_6} \mathcal{L}_{Y_6}(t)
	\right),
\end{align*}
for $t \geq 0$, where $\mathcal{L}_{Y_j}(t) = \beta_j/(\beta_j + t)$ for $j \in \{1, \dots, 6\}$, with 
$\beta_1 = 2\beta p$, 
$\beta_2 = 2\beta p(2-p)$, 
$\beta_3 = \frac{\beta}{2}(3 - \sqrt{9 - 8p})$, 
$\beta_4 = \frac{\beta}{2}(3 + \sqrt{9 - 8p})$, 
$\beta_5 = \frac{\beta}{2}(3 - \sqrt{1 + 8(1-p)^2})$ and 
$\beta_6 = \frac{\beta}{2}(3 + \sqrt{1 + 8(1-p)^2})$. It follows that the LST of $\Spp$ and $\Smp$ lead to closed-form expressions for the cdfs. 
Note that 
$E[S^{(\bigtriangledown, \bigtriangleup)}] 
= E[N]E[X]\left(1 - 1/\{2(2 - p)\}\right)$ 
and 
$E[S^{(\bigtriangleup, \bigtriangleup)}] 
= E[N]E[X]\left(1 + 1/\{2(2 - p)\}\right) $.

Assume further that $p = 10/11$ and $\beta = 1/2000$, such that $E[N] = 0.1$ and $E[X] = 2000$ as in Example \ref{ex:pareto}. 
In Table \ref{tab:exponentielle-mean-variance}, we provide the values of the expectations and the variances of the aggregate claim amount rvs. 
It is interesting to compare these values with the results of Example \ref{ex:pareto} in which the claim amount rvs are Pareto distributed with identical mean as the exponential here. As expected, both examples provide the same results for $S^{(\perp, \perp)}$. For $\Smp$, the mean is lower in Example \ref{ex:pareto}, but the variance is much higher here. The reverse order is observed for $\Spp$, with the Pareto distribution contributing to a huge variance compared with the exponential distribution.
	\begin{table}[ht]
		\centering
		\begin{tabular}{crr}
			& Mean     & Variance     \\ \hline
			$\Smp$ & 108.33  & 258~819.4   \\
			$S^{(\perp, \perp)}$                 & 200.00 & 840~000.0  \\
            $\Spp$  & 281.67 & 1~444~375.0
		\end{tabular}
		\caption{Mean and variance of aggregate claim amount rvs defined within some CRMs with full FGM dependence in Example \ref{ex:exponentielle}.}\label{tab:exponentielle-mean-variance}
	\end{table}
\end{example}

Another alternate representation of the LST may be useful to construct classes of CRMs within $\aleph^{FGM*}$. Conditioning on $I_0$, we have
\begin{align}
	\mathcal{L}_{S}(t) &= \frac{1}{2} \left\{\gamma_{N_{[1]}}(0) + \sum_{n = 1}^{\infty}\gamma_{N_{[1]}}(n) \sum_{k = 0}^{n}  \Pr(K_n = k \vert I_0 = 0) \mathcal{L}_{X_{[1]}}(t)^{n-k} \mathcal{L}_{X_{[2]}}(t)^{k}\right\} \nonumber\\
  &\qquad + \frac{1}{2} \left\{\gamma_{N_{[2]}}(0) + \sum_{n = 1}^{\infty}\gamma_{N_{[2]}}(n) \sum_{k = 0}^{n}  \Pr(K_n = k \vert I_0 = 0) \mathcal{L}_{X_{[1]}}(t)^{n-k} \mathcal{L}_{X_{[2]}}(t)^{k}\right\}, \quad t > 0.\label{eq:lse-with-k-v2}
\end{align}
The expression in \eqref{eq:lse-with-k-v2} offers a nice interpretation of the distribution of $S$. 
Indeed, the distribution of $S$ can be represented as a mixture of the distribution of two random variables, say $Y_0$ and $Y_1$, 
where the LST of $Y_j$ is $\gamma_{N_{[1+j]}}(0) + \sum_{n = 1}^{\infty}\gamma_{N_{[1+j]}}(n) \sum_{k = 0}^{n}  \Pr(K_n = k \vert I_0 = 0) \mathcal{L}_{X_{[1]}}(t)^{n-k} \mathcal{L}_{X_{[2]}}(t)^{k}$, $t \geq 0$.
Then, it means that $Y_j$ admits a representation as a random sum of dependent rvs, which are independent of the counting rv $N_{[j]}$, $j = 1,2$. 

We illustrate the usefulness of the latter representation of the LST to construct a family of aggregate claim amount rvs defined within a CRM in $\aleph^{FGM*}$.

\begin{example}
    Define the random vector $(I_0,K_n)$, for $n \in \mathbb{N}_1$, such that $I_0 \eqd \mathrm{Bern}(1/2)$ and 
    \begin{align*}
        \Pr(K_n = k | I_0 = 0) 
        &=
        \binom{n}{k} (1-\alpha)^k \alpha^{n-k},
        \quad 
        k \in \{0,1,\dots,n\} \\
        \Pr(K_n = k | I_0 = 1) 
        &=
        \binom{n}{k} \alpha^k (1-\alpha)^{n-k},
        \quad 
        k \in \{0,1,\dots,n\},
    \end{align*}
    where $\alpha \in (0,1)$. One may construct a family of aggregate claim amount rvs, denoted by $S^{(\alpha)}$, by replacing $(I_0,K_n)$ within \eqref{eq:lse-with-k-v2}. Then, we have
    \begin{equation*}
        \mathcal{L}_{S^{(\alpha)}}(t) 
        =
        \frac{1}{2} 
        \mathcal{P}_{N_{[1]}}\left(
        \alpha \mathcal{L}_{X_{[1]}}(t) 
        + (1-\alpha) \mathcal{L}_{X_{[2]}}(t) \right)
         + 
        \frac{1}{2} 
        \mathcal{P}_{N_{[2]}}\left(
        (1-\alpha) \mathcal{L}_{X_{[1]}}(t) + \alpha \mathcal{L}_{X_{[2]}}(t) \right),
        \quad 
        t \geq 0.
        \label{eq:LSTJolie}
    \end{equation*}
\end{example}

\end{document}